\newtheorem{definition}{Definition}
\newtheorem{proposition}[definition]{Proposition}
\newtheorem{lemma}[definition]{Lemma}
\newtheorem{theorem}[definition]{Theorem}
\newtheorem{corollary}[definition]{Corollary}
\newtheorem{assumption}{Assumption}
\newenvironment{proof}{\noindent \textbf{{Proof~}}}{\hfill $\blacksquare$}
\newenvironment{proofwithpara}[1][]{\noindent\textbf{{{#1}~}}}{\hfill $\blacksquare$}
\newcommand\MH[1]{{#1}}
\newcounter{remark}
\newenvironment{remark}[1][]{\refstepcounter{remark}\par\medskip\noindent%
\textbf{Remark~\theremark #1} }{\medskip}
\mathchardef\ordinarycolon\mathcode`\:
\def\vcentcolon{\mathrel{\mathop\ordinarycolon}}
\newcommand{\Renyi}{R\'{e}nyi~}
\newcommand{\wt}[1]{\widetilde{#1}}
\newcommand{\wh}[1]{\widehat{#1}}
\renewcommand{\ol}[1]{\overline{#1}}
\newcommand{\ket}[1]{\left\vert #1 \right\rangle}
\newcommand{\bra}[1]{\left\langle #1 \right\vert}
\newcommand{\braket}[2]{\langle#1\vert#2\rangle}
\newcommand\proj[1]{\vert #1 \rangle\!\langle #1 \vert}
\newcommand{\linear}[1]{\mathscr{L}(#1)}
\newcommand{\pos}[1]{\mathscr{P}(#1)}
\newcommand{\density}[1]{\mathscr{D}(#1)}
\newcommand{\subdensity}[1]{\mathscr{D}_{\bullet}(#1)}  
\newcommand{\unitary}[1]{\mathscr{U}(#1)}
\newcommand{\channel}[1]{\mathscr{C}(#1)}
\newcommand{\ox}{\otimes}
\newcommand{\1}{\mathbbm{1}}
\DeclareMathOperator{\tr}{Tr}  
\DeclareMathOperator{\rank}{rank}  
\newcommand{\id}{\operatorname{id}}
\newcommand{\rel}{\middle\|}
\newcommand{\sbar}{\;\rule{0pt}{9.5pt}\middle|\;}
\DeclareMathOperator{\Shannon}{H}
\DeclareMathOperator{\Mutual}{I} 
\DeclareMathOperator{\Rel}{D} 
\newcommand{\SRRel}[1]{\ensuremath{\widetilde{\operatorname{D}}_{#1}}}
\newcommand{\PRRel}[1]{\ensuremath{\overline{\operatorname{D}}_{#1}}}
\newcommand{\PRenyi}[1]{\ensuremath{\overline{\operatorname{H}}_{#1}}}
\newcommand*{\cB}{\mathcal{B}}
\newcommand*{\cC}{\mathcal{C}}
\newcommand*{\cD}{\mathcal{D}}
\newcommand*{\cE}{\mathcal{E}}
\newcommand*{\cG}{\mathcal{G}}
\newcommand*{\cH}{\mathcal{H}}
\newcommand*{\cK}{\mathcal{K}}
\newcommand*{\cM}{\mathcal{M}}
\newcommand*{\cN}{\mathcal{N}}
\newcommand*{\cS}{\mathcal{S}}
\newcommand*{\cT}{\mathcal{T}}
\newcommand*{\cU}{\mathcal{U}}
\newcommand*{\cX}{\mathcal{X}}
\newcommand*{\bE}{\mathbb{E}}
\newcommand*{\fD}{\mathfrak{D}}
\newcommand*{\fE}{\mathfrak{E}}
\newcommand{\fs}{\mathscr{F}} 
\def\Label#1{\label{#1}\ [\ \text{#1}\ ]\ }
\def\Label{\label}
\def\SEP{\mathrm{SEP}}
\def\PPT{\mathrm{PPT}}
\newcommand*{\EncG}{\ensuremath{\fE_{\rm g}}}
\newcommand*{\EncCP}{\ensuremath{\fE_{\rm cp}}}
\newcommand*{\EncP}{\ensuremath{\fE_{\rm p}}}
\newcommand*{\EncPPT}{\ensuremath{\fE_{\rm ppt}}}
\newcommand*{\DecL}{\ensuremath{\fD_{\emptyset}}}
\newcommand*{\DecOne}{\ensuremath{\fD_{\rightarrow}}}
\newcommand*{\DecLOCC}{\ensuremath{\fD_{\leftrightarrow}}}
\newcommand*{\DecSEP}{\ensuremath{\fD_{\rm sep}}}
\newcommand*{\DecPPT}{\ensuremath{\fD_{\rm ppt}}}
\newcommand*{\DecG}{\ensuremath{\fD_{\rm g}}}
\begin{document}

\title{Dense Coding with Locality Restriction for Decoder:\\%
Quantum Encoders vs. Super-Quantum Encoders}

\author{Masahito Hayashi}
\email{hayashi@sustech.edu.cn}
\affiliation{Shenzhen Institute for Quantum Science and Engineering,
Southern University of Science and Technology, Shenzhen 518055, China}
\affiliation{Guangdong Provincial Key Laboratory of Quantum Science and Engineering,
Southern University of Science and Technology, Shenzhen 518055, China}
\affiliation{Graduate School of Mathematics, Nagoya University, Nagoya, 464-8602, Japan}
\author{Kun Wang}
\email{wangkun28@baidu.com}
\affiliation{Institute for Quantum Computing, Baidu Research, Beijing 100193, China}

\begin{abstract}
We investigate dense coding by imposing various locality restrictions to our decoder
by employing the resource theory of asymmetry framework.
In this task, the sender Alice and the receiver Bob share an entangled state.
She encodes the classical information into it using
\MH{a symmetric preserving encoder}
and sends the encoded state to Bob through a
noiseless quantum channel.
The decoder is limited to a measurement to satisfy a certain locality condition
on the bipartite system composed of the receiving system and the preshared entanglement half.
Our contributions are summarized as follows:
First, we derive an achievable transmission rate for this task dependently of
conditions of encoder and decoder.
Surprisingly, we show that the obtained rate cannot be improved even when
the decoder is relaxed to local measurements, two-way LOCCs, separable measurements,
or partial transpose positive (PPT) measurements for the bipartite system.
Moreover, depending on the class of allowed measurements with a locality condition,
we relax the class of encoding operations to \MH{super-quantum encoders}
in the framework of general probability theory (GPT).
That is, when our decoder is restricted to a separable measurement,
theoretically, a positive operation is allowed as an encoding operation.
Surprisingly, even under this type of \MH{super-quantum} relaxation,
the transmission rate cannot be improved.
This fact highlights the universal validity of our analysis beyond quantum theory.
\end{abstract}
\maketitle

\tableofcontents

\section{Introduction}
Recently, various studies addressed how quantum theory can be characterized
in the context of general probability theory (GPT), which is
a general framework of the pair of a state and a measurement.
Most of them discussed it in the context of state discrimination
by extending the class of quantum measurements to a larger class of measurement in the framework of GPT
\cite{janotta2013generalized,janotta2014generalized,yoshida2020perfect,arai2019perfect,plavala2020popescu,plavala2020popescu,popescu1994quantum,lee2015computation,aubrun2020universal,lami2018ultimate,richens2017entanglement,masanes2011derivation,mueller2013three,muller2012structure,dahlsten2012tsirelson,barnum2010entropy,short2010entropy,matsumoto2018information,bae2016structure,yoshida2020asymptotic}.
That is, the paper \cite{arai2019perfect}
clarified that a super-quantum measurement, i.e., a measurement in such an extended class,
can distinguish two non-orthogonal states when our measurement belongs to the dual cone of the cone of separable operators.
It was also shown that a similar phenomenon happens even when
the cone of our measurements is very close to the cone of quantum theory \cite{yoshida2020perfect}.
This clarifies how quantum theory can be characterized in the framework of GPT.
Also, the paper showed the possibility of unphysical state
under the framework of GPT by introducing the PR box \cite{PRbox}.

In fact, a positive map can generate an unphysical state from an entangled state
when it is not a completely positive map.
This is a reason why the set of positive maps is not considered as a class of quantum operations and it is described by the set of completely positive maps.
Hence, a positive map can be considered as a super-quantum operation when it is not
a completely positive map.
Therefore, it is a natural question
whether such a super-quantum operation enhances quantum information processing.
In fact, the difference between positive maps and completely positive maps
has not been studied in the context of quantum information processing.
That is, such a study clarifies what property is essential for the analysis on quantum information processing
in a broad setting including quantum theory.
From this aim, we address this question in the framework of GPT.
In this paper, we focus on the information transmission problem as a typical model of
quantum information processing.
When a super-quantum operation is allowed as an encoding operation
in addition to a conventional quantum operation,
the decoder is needed to be restricted to a smaller class of measurements
under the framework of GPT.

To consider the effect of an unphysical state generated by a super-quantum operation,
it is natural to consider the following situation;
the sender and the receiver share an entangled state $\ket{\Psi_{AF}}$
and the encoding operation is limited to an operation on the sender's system $\cH_A$
while a super-quantum operation is allowed as an encoder.
The decoder is restricted to a smaller class of measurements
on the composite system under suitable locality condition,
which is chosen under the framework of GPT.
Since this encoding scheme may generate an unphysical state,
this problem setting clarifies the power of a super-quantum encoder.

Dense coding has been proposed as an enhanced communication method
by decoding measurement across a bipartite system composed of
the system encoded by the sender and the other system held by the receiver \cite{bennett1992communication,hiroshima2001optimal,bowen2001classical,horodecki2001classical,winter2002scalable,bruss2004distributed,beran2008nonoptimality,horodecki2012quantum,datta2015second,laurenza2019dense,wakakuwa2020superdense}.
Its subsequent studies investigated the channel capacity only when the channel is noisy.
The above studies assume that
the unique receiver obtains the system encoded by the sender as well as the other entanglement half $\cH_F$.
However, there is a possibility that
it is difficult to jointly handle the two systems,
the entanglement half $\cH_F$ of the receiver's system and
the receiving system $\cH_B$ of the receiver's side.
In this case, it is natural from the practical viewpoint
to impose locality conditions to our decoder.
For example,
when the receiver classically and freely communicates
from the entanglement half $\cH_F$ of the receiver's system to
the receiving system of the receiver's side,
we adopt the condition of one-way local operation and classical communication (LOCC).
When the receiver classically and freely communicates
between the two systems,
the condition of two-way LOCC is suitable.
As larger classes of decoders, we can consider separable measurements,
and partial transpose positive (PPT) measurements.
While their operational meaning are not so clear,
they have simple mathematical characterizations.
Hence, these classes are useful for proving the impossibility part.
Many existing papers addressed this problem in the contest of state discrimination
and state verification \cite{peres1991optimal,bennett1999quantum,walgate2000local,groisman2001nonlocal,virmani2001optimal,ghosh2001distinguishability,terhal2001hiding,watrous2005bipartite,hayashi2006bounds,hayashi2006study,owari2006local,koashi2007quantum,cohen2007local,owari2008two,ishida2008locality,matthews2009chernoff,duan2009distinguishability,hayashi2009group,jiang2009subspaces,calsamiglia2010local,jiang2010sufficient,bandyopadhyay2010entanglement,nathanson2010testing,kleinmann2011asymptotically,li2014relative,chitambar2013revisiting,childs2013framework,fu2014asymptotic,Brandao2014adversarial,owari2014asymptotic,owari2015local,hayashi2017tight,pallister2018optimal,zhu2019optimal,li2019efficient,wang2019optimal,yu2019optimal,hayashi2015verifiable,fujii2017verifiable,hayashi2018self,markham2020simple,zhu2018efficient,hayashi2019verifying,liu2019efficient,zhu2019efficient}.
However, no paper addressed this problem in the context of channel coding.

Indeed, the encoding operation in dense coding is given as an application of unitary
on Sender's system $\cH_A$.
However, in practice, it is not easy to implement all of the unitaries on the sender's system.
The time evolution in the quantum system is given as the unitary $e^{it H}$ with the Hamiltonian $H$.
Hence, this type of one-parameter subgroup $\{e^{it H}\}$ can be easily implemented.
When several types of Hamiltonian can be implemented,
the application of the subgroup generated by them can be implemented.
In this way,
it is natural to restrict our encoding operations into a certain subgroup.
That is, we assume that the encoding operation is given as a (projective) unitary representation $U$ of a group $G$ on $\cH_A$.
When the preshared entangled state is written as $\ket{\Psi_{AF}}$
and our coding operation is restricted to these unitaries,
our channel can be written as the classical-quantum (cq) channel
$g \mapsto U_g \ket{\Psi_{AF}}$.
Since this cq channel has a symmetric property for the group $G$,
we say that it is a {\it cq-symmetric} channel.
Recently, the paper \cite{korzekwa2019encoding} studied such a channel model in the context of resource theory of symmetry
without considering shared entanglement.

The class of cq-symmetric channels is a quantum generalization of
a regular channel \cite{delsarte1982algebraic}, which is a useful class of
channels in classical information theory.
This class of classical channels is often called
generalized additive \cite[Section V]{hayashi2011exponential}
or conditional additive \cite[Section 4]{hayashi2011exponential}
and contains a class of additive channels as a subclass.
Such a channel appears even in wireless communication
by considering binary phase-shift keying (BPSK) modulations \cite[Section 4.3]{hayashi2020finite}.
Its most simple example is the binary symmetric channel (BSC).
The reference \cite[Section VII-A-2]{hayashi2015quantum} studied its quantum extension with an additive group,
and discussed
the capacity and the wire-tap capacity with the semantic security.
Since this class has a good symmetric property,
algebraic codes achieve the capacity \cite{delsarte1982algebraic,dobrushin1963asymptotic,elias1955coding,hayashi2020finite,hayashi2015quantum}. Since a algebraic code has less calculation complexity in comparison with other types of codes,
this fact shows the usefulness of this class of classical channels.
As the class of cq-symmetric channels is a quantum version of a useful class of classical channels, and
the encoding operation class of group representation of a subgroup
yields a cq-symmetric channel,
this encoding operation class is a natural class of encoders.

To consider super-quantum encoders,
we need to expand the above class of encoding operations
because these unitatries are conventional quantum operations.
For this aim, we focus on a basic property of these unitaries.
These unitary encoding operations has invariant states, which
can be characterized by the invariant state by the average operation $\cG$ of the given (projective) unitary representation.
In the case of full unitary and the case of discrete Weyl-Heisenberg group,
the average operation $\cG$ maps all densities to the completely mixed state.
When the group is composed of diagonal unitaries,
the average operation $\cG$ is the dephasing channel.
Interestingly, the average operation $\cG$ satisfies the property
$\cG\circ \cU_g=\cU_g\circ \cG=\cG$ for $g \in G$, where
$\cU_g (\rho):=U_g \rho U_g^\dagger$.
Hence, as a larger class of encoding operations, we can consider the set of trace-preserving completely
positive (TPCP) maps $\{\cE\}$ that satisfies \MH{the symmetric preserving condition}:
\begin{align}
\cG\circ \cE=\cE\circ \cG=\cG.\Label{CO1}
\end{align}
Therefore, as another problem setting, we assume that our encoders are restricted to the above class of
TPCP maps.

Using this condition \eqref{CO1}, we define
a class of trace preserving positive maps as a larger class of encoders.
Indeed, when we focus on a basis commutative with invariant states,
the transpose operation satisfies the condition \eqref{CO1}.
Hence, this class of encoders contains a typical super-quantum operation.
In addition, since our measurement class is restricted,
this class is theoretically allowed as a class of encoding operations
under the framework of general probability theory (GPT).

Recently, several papers studied state discrimination in this framework, but
no study discussed the channel coding in this framework.
If our measurement class is smaller than the set of all measurements
allowed in quantum theory, a larger class of states is allowed, i.e.,
a larger class of operations is allowed theoretically in this framework.
The reason is that the probability distribution of the measurement outcome is well defined in this relaxation.
That is, the non-negativity of the probability of the measurement outcome is guaranteed even
under this relaxation.
For example,
when our measurement is restricted to a separable measurement,
even when the encoding operation is relaxed to a positive map,
the non-negativity of the probability of the measurement outcome is guaranteed
while the resultant state of the encoding is not necessarily positive-definite.
That is, the separability of our measurement guarantees
the non-negativity of the probability of the measurement outcome.
In this way, we can extend our encoding operation \MH{to such super-quantum operations}
under the condition \eqref{CO1} when a certain locality condition is imposed to our decoding measurement.

In this paper, we introduce $21$ classes of dense coding codes
by considering various classes of encoders and decoders.
These classes are classified into three groups dependently on the class of decoders.
The first group is composed of classes whose decoder has no support by $\cH_F$.
The second group is composed of classes whose decoder is a global measurement.
The remaining group is composed of
classes whose decoder has support from $\cH_F$ and locality condition on the bipartite system.
As our main result, we show that each class of every group has the same capacity.
That is, if a class belongs to the same group as another class,
these two classes have the same capacity.
Hence, when the available decoder is one of the bipartite decoders with locality condition,
even when the class of our encoders is extended to a larger class, e.g., trace-preserving positive maps
with the condition \eqref{CO1},
the capacity cannot be improved.

This paper is organized as follows.
Section \ref{S2} prepares several basic knowledge for this paper.
In Section~\ref{sec:task}, we first formally define the abstract dense coding task.
Then, we generalize the problem setting by
considering various available sets of encoders imposed by the resource theory of asymmetry
and various available sets of decoders with locality conditions.
As last, we show that all these capacities are equal and derive a single-letter capacity formula.
In Section~\ref{sec:examples}, we investigate various unitary groups of practical interests---the irreducible
case including the full unitary group,
quantum coherence including the one-generator case with a certain condition, e.g.,
two-mode squeezed vacuum state, and
Schur-duality---to illustrate the dense coding power within different specialized resource theories of asymmetry.
Finally, in Section \ref{S8} we extend the obtained results to the case of non-quantum preshared state
within the framework of GPT. We reserve some details and proofs to the appendices.
In Appendix~\ref{appx:thm:enhanced-relation},
we prove the weak and strong converse bounds on the (enhanced) dense coding capacities.
In Appendix~\ref{S4B}, we prove our main result---the dense coding capacity theorem under locality conditions.
We do so by first giving an one-shot characterization to the dense coding capacity with one-way LOCC decoders.
This is done by showing an achievability bound in terms of the smooth \Renyi entropy
and turns out to be the most difficult part in this paper.
Then, we derive the capacity formulas for the asymptotic dense coding capacities under the locality conditions.
In Appendix~\ref{S6}, we show prove the dense coding theorem with local decoders
even when super-quantum encoding operation is allowed.
In Appendix~\ref{appx:w-con2}, we prove the achievability (under certain conditions)
and strong converse parts regarding the non-quantum preshared state extension.

\section{Preliminaries}\Label{S2}

\subsection{Notations}
For a finite-dimensional Hilbert space $\cH$, we denote by $\linear{\cH}$ and $\pos{\cH}$ the linear and positive
semidefinite operators on $\cH$. Quantum states are in the set $\density{\cH}:=\{\rho\in\pos{\cH}\vert\tr\rho=1\}$ and
we also define the set of subnormalized quantum states $\subdensity{\cH}:=\{\rho\in\pos{\cH}\vert0<\tr\rho\leq1\}$. For
two operators $M, N\in\linear{\cH}$, we say $M\geq N$ if and only if $M-N\in\pos{\cH}$. On the other hand, we denote by
$\{M\geq N\}$ the projector onto the space spanned by the eigenvectors of $M-N$ with non-negative eigenvalues. The
identity matrix is denoted as $\1$ and the maximally mixed state is denoted as $\pi$. Multipartite quantum systems are
described by tensor product spaces. We use capital Latin letters to denote the different systems and subscripts to
indicate on what subspace an operator acts. For example, if $M_{AB}$ is an operator on $\cH_{AB}=\cH_A\ox\cH_B$, then
$M_A=\tr_BM_{AB}$ is defined as its marginal on system $A$. Systems with the same letter are assumed to be isomorphic:
$A'\cong A$. By convention, we use letters in the front of the alphabet such as $A$ and $B$ to represent quantum
systems and letters in the end of the alphabet such as $X$ and $Y$ to represent classical systems. We say
$\rho_{XA}$ is a classical-quantum state if it is of the form $\rho_{XA}=\sum_xp_X(x)\proj{x}_X\ox\rho_A^x$,
where $p_X$ a probability distribution, $\{\ket{x}\}_x$ an orthonormal basis of $\cH_X$,
and $\{\rho^x_A\in\density{\cH_A}\}_x$. A linear map
$\cN:\linear{\cH_A}\to\linear{\cH_B}$ maps operators in system $A$ to operators in system $B$. $\cN_ {A\to B}$ is
positive if $\cN_{A\to B}(M_A)\in\pos{\cH_B}$ whenever $M_A\in\pos{\cH_A}$. Let $\id_A$ denote the identity map
acting on system $A$. $\cN_{A\to B}$ is completely positive (CP) if the map $\id_R\ox\cN_{A\to B}$ is positive for
every reference system $R$. $\cN_{A\to B}$ is trace-preserving (TP) if $\tr[\cN_{A\to B}(M_A)] = \tr M_A$ for all
operators $M_A\in\linear{\cH_A}$. If $\cN_{A\to B}$ is completely positive and trace-preserving (CPTP), we say that it
is a quantum channel or quantum operation. We denote by $\channel{A\to B}$ the set of quantum channels from $A$ to $B$.
A positive operator-valued measure (POVM) is a set $\{\Lambda_m\}$ of operators satisfying $\forall m,\Lambda_m\geq0$
and $\sum_m\Lambda_m=\1$.

\subsection{Quantum entropies}\Label{sec:Quantum entropies}

Let $\rho\in\density{\cH}$ and $\sigma\in\pos{\cH}$ such that the support of $\rho$ is contained in the support of
$\sigma$. The quantum relative entropy is defined as
\begin{align}
  \Rel\left(\rho\rel\sigma\right) := \tr\left[\rho(\log\rho - \log\sigma)\right],
\end{align}
where logarithms are in base $2$ throughout this paper. The Shannon entropy of a probability distribution $p_X$ is
defined as $\Shannon(p_X):=-\sum_xp_X(x)\log p_X(x)$. The von Neumann entropy of $\rho$ is defined as
$\Shannon(\rho):=-\tr\rho\log\rho$.
Let $\rho_{AB}\in\density{\cH_A\ox\cH_B}$ be a bipartite quantum state.
The quantum mutual
information and conditional entropy of $\rho_{AB}$ are defined respectively as
\begin{align}
  \Mutual\left(A{:}B\right)_\rho &:= \Rel\left(\rho_{AB}\rel\rho_A\ox\rho_B\right), \\
  \Shannon\left(A{\vert}B\right)_\rho &:= -\Rel\left(\rho_{AB}\rel\1_A\ox\rho_B\right).
\end{align}
Trivializing system $B$, $\Shannon\left(A{\vert}B\right)_\rho$ yields an alternative definition
of the von Neumann entropy as $\Shannon(A)_\rho$. In this paper, we will use these two notations interchangeably.
\if0
The quantum information variance is defined as~\cite{tomamichel2013hierarchy}
\begin{align}
    V\left(\rho\rel\sigma\right) := \tr[\rho (\log \rho - \log \sigma)^2] - \Rel\left(\rho\rel\sigma\right)^2.
\end{align}
The varentropy (aka. variance and information variance) of $\rho$ is defined as
\begin{align}\Label{eq:varentropy}
    V(\rho) := V\left(\rho\rel\1\right) = \tr[\rho(\log\rho)^2] - \Shannon(\rho)^2.
\end{align}
\fi
Let $\alpha\in(0,1)\cup(1,\infty)$, the one-parameter Petz quantum \Renyi divergence is defined as~\cite{petz1986quasi}
(We refer the interested readers to~\cite[Chapter 4]{tomamichel2015quantum}
and \cite[Sections 3.1 and 5.4]{hayashi2016quantum} for a comprehensive study of $\PRRel{\alpha}$):
\begin{align}
    \PRRel{\alpha}\left(\rho\rel\sigma\right)
:=  \frac{1}{\alpha-1}\log\tr\left[\rho^\alpha\sigma^{1-\alpha}\right].
\end{align}
As another version, we focus on the one-parameter sandwiched quantum \Renyi divergence
$\SRRel{\alpha}$ defined as~\cite{wilde2014stronga,mueller-lennert2013quantum,frank2013monotonicity,beigi2013sandwiched}:
\begin{align}
    \SRRel{\alpha}\left(\rho\rel\sigma\right)
:=  \frac{1}{\alpha-1}\log\tr\left[
\left(
\sigma^{\frac{1-\alpha}{2\alpha}}
\rho
\sigma^{\frac{1-\alpha}{2\alpha}}
\right)^\alpha\right]. \Label{MO2}
\end{align}

Interestingly, both quantum \Renyi divergence recovers the quantum relative entropy by taking the limit $\alpha\to1$:
\begin{align}\Label{eq:Petz divergence limit}
    \lim_{\alpha\to1}\PRRel{\alpha}\left(\rho\rel\sigma\right) =
    \lim_{\alpha\to1}\SRRel{\alpha}\left(\rho\rel\sigma\right) = \Rel\left(\rho\rel\sigma\right).
\end{align}
The quantum \Renyi entropy of $\rho$ is defined as
\begin{align}\Label{eq:renyi entropy}
  \PRenyi{\alpha}(\rho) := - \PRRel{\alpha}\left(\rho\rel\1\right)
      = \frac{1}{1-\alpha}\log\tr\rho^\alpha.
\end{align}
Eq.~\eqref{eq:Petz divergence limit} yields $\lim_{\alpha\to1}\PRenyi{\alpha}(\rho)=\Shannon(\rho)$. The Petz
conditional \Renyi entropy of a bipartite state $\rho_{AB}$ is defined as
\begin{align}\Label{eq:renyi conditional}
      \PRenyi{\alpha}(A{\vert}B)_\rho := - \PRRel{\alpha}\left(\rho_{AB}\rel\1_A\ox\rho_B\right).
\end{align}

\subsection{Group representation}\Label{sec:group representation}
As a preparation to explain resource theory of asymmetry,
we summarize the basic fact in group representation.
Let $\cH$ be a Hilbert space and $G$ be a group.
For an element $g \in G$, a unitary operator $U_g$ is given.
The map $U: g \mapsto U_g$ is called
a unitary representation of $G$ on $\cH$ when
\begin{align}
U_{g} U_{g'}=U_{g g'}
\end{align}
for $g,g' \in G$~\cite{hayashi2017group}.
In addition, the map $U$ is called
a projective unitary representation of $G$ on $\cH$ when
there exists $\theta(g,g')$ for $g,g' \in G$ such that~\cite{hayashi2017group}
\begin{align}
U_{g} U_{g'}=e^{i \theta(g,g')}U_{g g'}.
\end{align}
In particular, the above type of
a projective unitary representation is called
a projective unitary representation associated with
$\{\theta(g,g')\}_{g,g' \in G}$.

A (projective) unitary representation $U$ of $G$ on $\cH$ is called irreducible
when there is no subspace $\cK \subsetneqq \cH$ such that
$U_g \cK \subset \cK$ for $g \in G$ and $\{0\}\neq \cK$.
Two (projective) unitary representations $U_1,U_2$ of $G$ on $\cH_1,\cH_2$
are called equivalent when
there exists a unitary $V$ from $\cH_1$ to $\cH_2$ such that
$ V U_{1,g}V^\dagger= U_{2,g}$ for $g \in G$.
A (projective) unitary representation $U$ of $G$ on $\cH$ is called
completely reducible when it is given as a direct sum representation of (projective) irreducible unitary representations.
It is known that
any unitary representation $U$ of $G$ on $\cH$ is
completely reducible when $G$ is a compact group \cite[Lemma 2.3]{hayashi2017group2}.

Let $\hat{G}$ be the set of indexes to identify an irreducible unitary representation of $G$.
That is, given an element $\lambda \in \hat{G}$, we have
an irreducible unitary representation $U_{\lambda}$ of $G$ on $\cH_\lambda$.
Then, any unitary representation $U$ of a compact $G$ on $\cH$ is equivalent to
\begin{align}
\bigoplus_{\lambda \in \hat{G}} \cH_{\lambda} \otimes \mathbb{C}^{n_\lambda} ,
\Label{M1}
\end{align}
where $n_\lambda$ is called the multiplicity of
the irreducible unitary representation $U_\lambda$.

\begin{assumption}[Multiplicity-free condition]\label{assp:multiplicity-free}
We say that a unitary representation $U$ of a $G$ on $\cH$ is
{\it multiplicity-free} when
there exists a subset $S \subset \hat{G}$
such that
the unitary representation $U$ is equivalent to
\begin{align}
\bigoplus_{\lambda \in S} \cH_{\lambda}.
\Label{M3}
\end{align}
Throughout this manuscript, we assume the groups under investigation satisfy the multiplicity-free condition.
\end{assumption}

The above discussion can be extended to projective unitary representations.
Let $\hat{G}[\{\theta(g,g')\}_{g,g' \in G}]$ be the set of indexes to identify an irreducible projective unitary representation
of $G$ associated with $\{\theta(g,g')\}_{g,g' \in G}$.
Then, Eq.~\eqref{M1} is generalized as follows;
Any unitary representation $U$ of a compact $G$ on $\cH$ associated with $\{\theta(g,g')\}_{g,g' \in G}$
is equivalent to
\begin{align}
\bigoplus_{\lambda \in \hat{G}[\{\theta(g,g')\}_{g,g' \in G}]} \cH_{\lambda} \otimes \mathbb{C}^{n_\lambda} ,
\Label{M2}
\end{align}
where $n_\lambda$ is called the multiplicity of
the irreducible unitary representation $U_\lambda$.
Hence, we define the property ``multiplicity-free'' for a projective unitary representation $U$
in the same way.

A state $\sigma \in \density{\cH}$ is \emph{symmetric} w.r.t. (with respect to) $G$ if
it holds that
\begin{align}\Label{eq:symmetric}
    \forall g \in G,\; \cU_g(\sigma) \equiv U_g \sigma U_g^\dag = \sigma.
\end{align}
That is, the symmetric states are invariant under $G$. Throughout this paper, we assume the group $G$ is fixed and omit
the explicit reference to this group. The set of symmetric states is denoted as $\fs_G$ and will be treated as
\emph{free states} in the resource theory of asymmetry. Conversely, a state $\rho \in
\density{\cH}$ is \emph{asymmetric}, or resourceful, if there exists some $g\in G$ such that $U_g\rho U_g^\dag
\neq\rho$.
When the group $G$ is a finite group,
the \emph{$G$-twirling operation} $\cG$ over $G$ is defined as
\begin{align}\Label{eq:twirling channel}
  \cG(\rho) := \frac{1}{\vert G\vert}\sum_g U_g\rho U_g^\dagger.
\end{align}
When the group $G$ is a compact group,
the above definition can be generalized as
\begin{align}\Label{eq:twirling channel2}
  \cG(\rho) := \int_G  U_g\rho U_g^\dagger \nu(dg),
\end{align}
where $\nu$ is the Haar measure.
$\cG$ maps all states in $\density{\cH}$ to
symmetric states, i.e.,
\begin{equation}
\forall\rho\in\density{\cH},\; \cG(\rho)\in\fs_G.
\end{equation}
What's more, $\cG$ is {symmetry-preserving} in the sense that it maps any symmetric state to itself:
$\cG(\sigma) = \sigma$ for any symmetric state $\sigma\in\fs_G$. One can interpret $\cG$ as a {resource
destroying map}~\cite{liu2017resource} in the sense that it leaves resource-free states unchanged but erases the
resource stored in all resourceful states.

\begin{lemma}
Let $G$ is a compact group.
A (projective) unitary representation $U$ is multiplicity-free if and only if
\begin{align}
\cG(\rho) \cG(\sigma)= \cG(\sigma)\cG(\rho)
\Label{LO1}
\end{align}
for two states $\rho,\sigma$ on $\cH$.
\end{lemma}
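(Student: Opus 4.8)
The plan is to translate the commutation relation \eqref{LO1} into the statement that the commutant algebra of $U$ is abelian, and then to recognize abelianness of the commutant as exactly the multiplicity-free condition. First I would use the decomposition \eqref{M2}: up to equivalence we may assume $U=\bigoplus_{\lambda\in S'}U_\lambda\otimes\1_{n_\lambda}$ on $\cH=\bigoplus_{\lambda\in S'}\cH_\lambda\otimes\mathbb{C}^{n_\lambda}$ for a finite index set $S'\subseteq\hat G$ (resp.\ the projective index set associated with $\{\theta(g,g')\}$) and multiplicities $n_\lambda\ge 1$. This reduction is legitimate because $\cG$ is covariant under equivalence: if $U'=VUV^\dagger$ with $V$ unitary, then the twirl for $U'$ is $\cG'(\rho)=V\,\cG(V^\dagger\rho V)\,V^\dagger$, so \eqref{LO1} holds for $U$ on all states if and only if it holds for $U'$ on all states, while "multiplicity-free" is by definition an equivalence-invariant notion.

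Next I would record two facts. (i) By Schur's lemma applied blockwise to the irreducible components, the commutant is $\cA:=\{X\in\linear{\cH}:XU_g=U_gX\ \forall g\}=\bigoplus_{\lambda\in S'}\1_{\cH_\lambda}\otimes\linear{\mathbb{C}^{n_\lambda}}$ (the projective analogue works identically since \eqref{M2} covers it). (ii) For every state $\rho$, $\cG(\rho)$ is symmetric, i.e.\ $U_g\cG(\rho)U_g^\dagger=\cG(\rho)$ for all $g$ — in the projective case the phase factors cancel — hence $U_g\cG(\rho)=\cG(\rho)U_g$ and $\cG(\rho)\in\cA$. The "only if" direction follows at once: if $U$ is multiplicity-free then by \eqref{M3} every $n_\lambda=1$, so $\cA=\bigoplus_{\lambda}\mathbb{C}\1_{\cH_\lambda}$ is commutative; since $\cG(\rho),\cG(\sigma)\in\cA$, they commute, which is \eqref{LO1}.

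For the "if" direction I would argue by contraposition. Suppose $U$ is not multiplicity-free, so some block has $n_{\lambda_0}\ge 2$. Then there exist density operators $\tau_1,\tau_2\in\density{\mathbb{C}^{n_{\lambda_0}}}$ with $\tau_1\tau_2\ne\tau_2\tau_1$ (e.g.\ supported on a two-dimensional subspace with $\tau_1=\proj{0}$, $\tau_2=\proj{+}$). Set $\rho:=\pi_{\lambda_0}\otimes\tau_1$ and $\sigma:=\pi_{\lambda_0}\otimes\tau_2$, where $\pi_{\lambda_0}:=\1_{\cH_{\lambda_0}}/\dim\cH_{\lambda_0}$, regarded as states on $\cH$ supported on the $\lambda_0$ block and zero elsewhere. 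Since $U_g$ restricted to this block is $U_{\lambda_0,g}\otimes\1$, each of $\rho,\sigma$ is $G$-invariant, hence symmetric, hence $\cG(\rho)=\rho$ and $\cG(\sigma)=\sigma$. Consequently $\cG(\rho)\cG(\sigma)=\frac{1}{\dim\cH_{\lambda_0}}\,\pi_{\lambda_0}\otimes\tau_1\tau_2\ne\frac{1}{\dim\cH_{\lambda_0}}\,\pi_{\lambda_0}\otimes\tau_2\tau_1=\cG(\sigma)\cG(\rho)$, so \eqref{LO1} fails for this pair of states.

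I do not expect a substantial obstacle here; the proof is essentially bookkeeping once the right framework is in place. The only points needing care are the covariance reduction, the Schur-lemma identification of $\cA$ together with its projective version via \eqref{M2}, and the verification that $\cG(\rho)\in\cA$ for every state $\rho$. Note in particular that the explicit states exhibited in the "if" direction already witness the failure of \eqref{LO1}, so we never need the stronger (true) statement that $\cG$ restricts to the identity on $\cA$, i.e.\ that $\{\cG(\rho):\rho\in\density{\cH}\}$ exhausts the state space of $\cA$.
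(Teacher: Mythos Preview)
Your proof is correct and follows essentially the same approach as the paper: both arguments rest on the isotypic decomposition \eqref{M1}/\eqref{M2} and the observation that $\cG(\rho)$ lies in the commutant, which is abelian precisely in the multiplicity-free case. Your contrapositive direction is in fact slightly cleaner than the paper's, since you exhibit explicit $G$-invariant states $\pi_{\lambda_0}\otimes\tau_i$ whose twirls fail to commute, whereas the paper only observes that the partial traces $\tr_{\cH_\lambda}\rho_\lambda$ ``are not commutative with each other in general.''
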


\begin{proof}
Assume that a (projective) unitary representation $U$ is multiplicity-free, as shown in \eqref{M3}.
Let $P_\lambda$ be the projection to the space $\cH_\lambda$.
Then,
$\cG(\rho)= \oplus_{\lambda \in S} \tr (P_\lambda \rho) P_\lambda/\tr P_\lambda$, which implies
\eqref{LO1}.
Assume that a (projective) unitary representation $U$ is not multiplicity-free.
Then, a state $\rho$ is written as
$\oplus_{\lambda} \rho_\lambda$, where
$\rho_\lambda$ is a  positive semidefinite operator on
$\cH_\lambda \otimes \mathbb{C}^{n \lambda} $.
Then,
$\cG(\rho)= \oplus_{\lambda \in S}
P_\lambda/\tr P_\lambda \otimes (\tr_{\cH_\lambda} \rho_\lambda) $.
For $n_\lambda >1$,
$\tr_{\cH_\lambda} \rho_\lambda$ and
$\tr_{\cH_\lambda} \sigma_\lambda$ are not commutative with each other in general.
Hence, we obtain the desired statement.
\end{proof}

The above definition generalizes naturally to a tensor product system
$\cH^{\ox n}$ composed of $n$ copies of $\cH$. The group in $\cH^{\ox n}$ is $G^{\times n}$ and we adopt the notations
$\bm{g}\equiv g_1\cdots g_n$ and $U_{\bm{g}}\equiv U_{g_1}\ox\cdots\ox U_{g_n}$ such that each $g_i\in G$. Symmetric
states are defined to be those satisfying $U_{\bm{g}}\sigma U_{\bm{g}} = \sigma$ for all $U_{\bm{g}}\in G^{\times n}$.
Correspondingly, the twirling operation in $\cH^{\ox n}$ is $\cG^{\ox n}$.


\subsection{Resource theory of asymmetry}\Label{sec:resource theory of asymmetry}
Asymmetry of quantum states plays an important role not only in the development of modern physics but also in quantum
information processing tasks~\cite{bartlett2007reference}. In this section, we summarize briefly the \textit{resource
theory of
asymmetry}~\cite{bartlett2007reference,gour2008resource,gour2009measuring,bartlett2009quantum,marvian2012symmetry,marvian2013theory,marvian2014asymmetry,marvian2014extending,marvian2014modes,wakakuwa2017symmetrizing,wakakuwa2020superdense},
which is a special case of a general formalism named the quantum resource theory~\cite{chitambar2018quantum}. We remark
that the resource theory of asymmetry is an \emph{abstract} resource theory that encapsulates many nice properties of
commonly studied resource theories in literature~\cite{gour2015resource,streltsov2017colloquium}.

The relative entropy of asymmetry~\cite{gour2009measuring} is a
commonly used measure to quantify the degree of asymmetry of quantum states and is defined as follows:
\begin{align}\Label{eq:relative entropy of asymmetry}
    R_G(\rho) := \min_{\sigma \in \fs_G} \Rel\left(\rho \rel \sigma\right).
\end{align}
It turns out that the twirled state $\cG(\rho)$ achieves the minimum in~\eqref{eq:relative entropy of asymmetry}
and yields a simple expression for the relative entropy of asymmetry
in terms of the von Neumann entropy~\cite[Proposition 2]{gour2009measuring}, i.e.,
\begin{align}\Label{eq:REA}
    R_G(\rho) = \Rel\left(\rho\rel\cG(\rho)\right) = \Shannon(\cG(\rho)) - \Shannon(\rho).
\end{align}

Inspired by the entanglement of assistance~\cite{divincenzo1998entanglement} in
the resource theory of entanglement~\cite{horodecki2009quantum} and the coherence of
assistance~\cite{chitambar2016assisted} in the resource theory of coherence~\cite{streltsov2017colloquium}, we
introduce here the \emph{asymmetry of assistance} of a quantum state $\rho$ as
\begin{align}
   A_G\left(\rho\right)
:=&\; \max_{\rho=\sum_xp_X(x)\proj{\psi_x}} \sum_x p_X(x)
      \Rel\left(\psi_x\rel\cG(\psi_x)\right) \nonumber\\
 =&\; \max_{\rho=\sum_xp_X(x)\proj{\psi_x}} \sum_x p_X(x)\Shannon\left(\cG(\psi_x)\right),
\Label{eq:asymmetry-of-assistance}
\end{align}
where $\psi_x\equiv\proj{\psi_x}$,
the maximum ranges over all possible pure state decompositions of $\rho$,
and the second equality follows from~\eqref{eq:REA}. Correspondingly, the \emph{regularized
asymmetry of assistance} of $\rho$ is defined as
\begin{align}\Label{eq:egularized-asymmetry-of-assistance}
    A_G^\infty\left(\rho\right) := \limsup_{n\to\infty}\frac{1}{n}A_G\left(\rho^{\ox n}\right).
\end{align}
In the following proposition, we show that both $A_G$ and $A_G^\infty$ are upper bounded by
the quantum entropy of the twirled state and thus the regularization is well-defined.
The proof can be found in Appendix~\ref{appx:prop:regularized-asymmetry-of-assistance}.

\begin{proposition}\Label{prop:regularized-asymmetry-of-assistance}
Let $\rho\in\density{\cH}$ be a quantum state. It holds that
\begin{align}\Label{eq:asymmetry-of-assistance-upper-bound}
  A_G\left(\rho\right) \leq A_G^\infty\left(\rho\right) \leq \Shannon(\cG(\rho)).
\end{align}
\end{proposition}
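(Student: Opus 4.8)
The plan is to establish the two inequalities in \eqref{eq:asymmetry-of-assistance-upper-bound} separately. The first inequality $A_G(\rho) \le A_G^\infty(\rho)$ should follow from a superadditivity property of $A_G$ under tensor products. Concretely, I would first show that $A_G(\rho \ox \rho') \ge A_G(\rho) + A_G(\rho')$: given optimal pure-state decompositions $\rho = \sum_x p_X(x)\proj{\psi_x}$ and $\rho' = \sum_y q_Y(y)\proj{\phi_y}$, the product decomposition $\rho\ox\rho' = \sum_{x,y} p_X(x)q_Y(y)\proj{\psi_x\ox\phi_y}$ is a valid (not necessarily optimal) pure-state decomposition of $\rho\ox\rho'$, and since the $G$-twirling on $\cH^{\ox 2}$ acts as $\cG\ox\cG$ on product states we get $\Shannon(\cG^{\ox 2}(\psi_x\ox\phi_y)) = \Shannon(\cG(\psi_x)) + \Shannon(\cG(\phi_y))$ by additivity of the von Neumann entropy over tensor products. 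Summing against $p_X(x)q_Y(y)$ gives superadditivity, hence $A_G(\rho^{\ox n}) \ge n A_G(\rho)$, so $\frac{1}{n}A_G(\rho^{\ox n}) \ge A_G(\rho)$ for every $n$, and taking $\limsup$ yields $A_G(\rho) \le A_G^\infty(\rho)$.

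For the second inequality, it suffices to prove $A_G(\rho) \le \Shannon(\cG(\rho))$ for every state $\rho$, since then $\frac{1}{n}A_G(\rho^{\ox n}) \le \frac{1}{n}\Shannon(\cG^{\ox n}(\rho^{\ox n})) = \frac{1}{n}\Shannon(\cG(\rho)^{\ox n}) = \Shannon(\cG(\rho))$, and the $\limsup$ is bounded by the same quantity. To bound $A_G(\rho)$ itself, fix any pure-state decomposition $\rho = \sum_x p_X(x)\proj{\psi_x}$ and consider the classical-quantum state $\omega_{XA} := \sum_x p_X(x)\proj{x}_X \ox \cG(\psi_x)$. Applying $\id_X \ox \cG$ to it leaves it unchanged on each block, and one computes $\Shannon(A\vert X)_\omega = \sum_x p_X(x)\Shannon(\cG(\psi_x))$, which is exactly the quantity being maximized in \eqref{eq:asymmetry-of-assistance}. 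On the other hand, $\omega_A = \sum_x p_X(x)\cG(\psi_x) = \cG(\sum_x p_X(x)\psi_x) = \cG(\rho)$ by linearity of $\cG$. Since conditioning does not increase entropy, $\Shannon(A\vert X)_\omega \le \Shannon(A)_\omega = \Shannon(\cG(\rho))$. Taking the maximum over all pure-state decompositions gives $A_G(\rho) \le \Shannon(\cG(\rho))$, completing the proof.

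The main obstacle here is essentially bookkeeping rather than conceptual: one must be careful that $\cG^{\ox n}$ — the twirling with respect to $G^{\times n}$ on $\cH^{\ox n}$ — really does act as $\cG\ox\cdots\ox\cG$, which is exactly the compatibility noted at the end of Section~\ref{sec:group representation}, and that the von Neumann entropy identities ($\Shannon(\sigma\ox\tau) = \Shannon(\sigma) + \Shannon(\tau)$ and $\Shannon(\sigma^{\ox n}) = n\Shannon(\sigma)$) are applied to the twirled states and not the originals. A secondary subtlety is that the maximum defining $A_G$ is attained (the set of pure-state decompositions of a fixed $\rho$ with a bounded number of terms is compact and the objective is continuous), so that "optimal decompositions" used in the superadditivity step exist; alternatively one avoids this by working with near-optimal decompositions and an $\varepsilon$-argument, which is cleaner and I would adopt that route.
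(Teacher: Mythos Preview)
Your proposal is correct and essentially matches the paper's proof. For the second inequality both you and the paper reduce to $A_G(\rho)\le \Shannon(\cG(\rho))$; the paper invokes concavity of the von Neumann entropy directly, while your ``conditioning does not increase entropy'' argument with the cq-state $\omega_{XA}$ is exactly the same statement in disguise. For the first inequality the paper merely says it ``follows by definition,'' which is a little glib since $\limsup_n a_n\ge a_1$ is not automatic; your explicit superadditivity step $A_G(\rho^{\ox n})\ge n\,A_G(\rho)$ via product decompositions is precisely the missing justification and makes the argument watertight.
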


\section{Dense coding capacities}\Label{sec:task}

\subsection{The general dense coding framework}\label{sec:framework}

We first describe the most general dense coding framework.
Let the preshared entangled state between Alice and Fred be $\ket{\Psi}_{AF}$,
the set of available encoders by Alice be $\fE$,
and the set of available decoders by Bob and Fred be $\fD$.
The abstract dense coding protocol can be described as follows.
Alice randomly samples a message $m$ from the message alphabet $\cM$
and then applies an encoding channel $\cE^m_{A\to A}\in\fE$
to the resourceful state $\Psi_{AF}$.
This leads to the classical-quantum state
\begin{align}
  \frac{1}{\vert\cM\vert}\sum_m\proj{m}_M \ox\cE^m_{A\to A}(\proj{\Psi}_{AF}).
\end{align}
After encoding, Alice sends the encoded state to Bob via a noiseless quantum channel $\id_{A\to B}$ where $A\cong B$.
After receiving the quantum state, Bob and Fred perform a
joint measurement $\cD_{BF\to\wh{M}} \equiv \{\Gamma^{\wh{m}}_{BF}\}_{\wh{m}}\in\fD$
to infer the encoded message $m$. See Figure~\ref{fig:dense-coding} for illustration of the dense coding protocol.
The decoding operation results in the following classical-classical quantum state
\begin{align}
\sum_{m,\wh{m}}q_{\wh{M}M}(\wh{m}\vert m)\proj{m}_M \ox \proj{\wh{m}}_{\wh{M}},
\end{align}
where the conditional distribution $q_{\wh{M}M}$ is defined as
\begin{align}\label{eq:conditional-distribution}
    q_{\wh{M}M}(\wh{m}\vert m) :=  \tr\left[\Gamma^{\wh{m}}_{BF}\cE^m_{A\to A}(\proj{\Psi}_{AF})\right].
\end{align}
We call $\cC\equiv(\{\cE^m\}_m,\cD)\in(\fE,\fD)$
a \textit{dense coding code} for the resourceful quantum state $\Psi_{AF}$
under the available encoder-decoder pair $(\fE,\fD)$
with cardinality $\vert\cC\vert\equiv\vert\cM\vert$.
We quantify the performance of $\cC$ by computing the \textit{decoding error}:
\begin{align}\Label{eq:average probability of error}
    e(\cC) := 1 - \frac{1}{\vert\cM\vert}\sum_m q_{\wh{M}M}(m\vert m)
\end{align}
and use $s(\cC):=1-e(\cC)$ to denote the success probability of decoding. In general, smaller decoding error implies
better code. However, to achieve small $e(\cC)$ one has to encode with small size $\vert\cC\vert$. This motivates us
to define the \emph{dense coding rate} that quantitatively measures the communication capacity of the code:
\begin{align}
    r\left(\cC\right) := \log\vert\cC\vert.
\end{align}

\begin{figure}[!htbp]
\centering
  \includegraphics[width=0.8\textwidth]{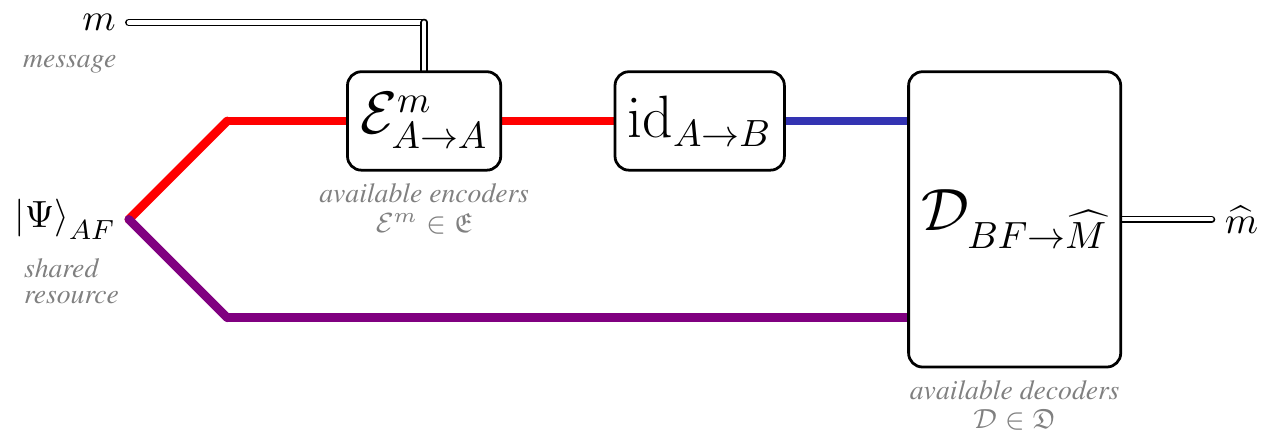}
  \caption{A dense coding protocol for the shared resourceful quantum state $\Psi_{AF}$
        under the available encoder-decoder pair $(\fE,\fD)$.
        In this protocol, Alice possesses the state in red line,
        Bob possesses the state in blue line, and
        Fred possesses the state in purple line.}
  \Label{fig:dense-coding}
\end{figure}

Fix $\varepsilon\in[0,1)$. The one-shot $\varepsilon$-dense coding capacity of $\Psi_{AF}$ under
available encoder-decoder pair $(\fE,\fD)$ is defined to be the
the maximum bits of messages that can be transmitted such that the decoding error is
upper bounded by the error threshold $\varepsilon$.

\begin{definition}[One-shot $\varepsilon$-dense coding capacity]
\label{def:one-shot dense coding capacity}
Let $\ket{\Psi}_{AF}$ be a bipartite pure quantum state and $\varepsilon\in[0,1)$. The one-shot
$\varepsilon$-dense coding capacity of $\Psi_{AF}$ under
available encoder-decoder pair $(\fE,\fD)$ is defined as:
\begin{align}\label{eq:one-shot dense coding capacity}
    C_{\fE,\fD}^\varepsilon\left(\Psi_{AF}\right)
:= \sup_{\cC \in (\fE,\fD)}\left\{r\left(\cC\right) \sbar e(\cC) \leq \varepsilon \right\}.
\end{align}
\end{definition}

The dense coding capacity of $\Psi_{AF}$ under available encoder-decoder pair $(\fE,\fD)$ is then
defined to be the one-shot $\varepsilon$-dense coding capacity of $\Psi_{AF}^{\ox n}$
by taking the limits $n\to\infty$ and $\varepsilon\to0$. This capacity
quantifies the ultimate number of bits that can be reliably transmitted per
copy of $\Psi_{AF}$ in the asymptotic regime, under available encoder-decoder pair $(\fE,\fD)$.

\begin{definition}[Dense coding capacity]
\label{def:dense coding capacity}
Let $\ket{\Psi}_{AF}$ be a bipartite pure quantum state. The dense coding capacity of $\Psi_{AF}$
under available encoder-decoder pair $(\fE,\fD)$ is defined as:
\begin{align}\Label{eq:dense coding capacity}
    C_{\fE,\fD}\left(\Psi_{AF}\right)
:=  {\adjustlimits\inf_{\varepsilon>0}\limsup_{n\to\infty}}
    \frac{1}{n}C_{\fE,\fD}^\varepsilon\left(\Psi_{AF}^{\ox n}\right).
\end{align}
\end{definition}

Analogously, the strong converse dense coding capacity of $\Psi_{AF}$ is defined to be the one-shot
$\varepsilon$-dense coding capacity of $\Psi_{AF}^{\ox n}$ by taking the limit $n\to\infty$
and satisfying the constraint that $\varepsilon<1$.
This capacity quantifies to what extend we can sacrifice the decoding error to achieve
larger dense coding rate in the asymptotic regime, under available encoder-decoder pair $(\fE,\fD)$.

\begin{definition}[Strong converse dense coding capacity]
\label{def:strong converse dense coding capacity}
Let $\ket{\Psi}_{AF}$ be a bipartite pure quantum state.
The strong converse dense coding capacity of $\Psi_{AF}$
under available encoder-decoder pair $(\fE,\fD)$ is defined as:
\begin{align}\Label{eq:strong converse dense coding capacity}
    C_{\fE,\fD}^\dagger\left(\Psi_{AF}\right)
:=  {\adjustlimits\sup_{\varepsilon<1}\limsup_{n\to\infty}}
    \frac{1}{n}C_{\fE,\fD}^\varepsilon\left(\Psi_{AF}^{\ox n}\right).
\end{align}
\end{definition}

In the following sections we introduce various available classes of encoders $\fE$
and decoders $\fD$ within the resource theory of asymmetry framework.

\subsection{Quantum and super-quantum encoders}

Practically, it is not easy to implement arbitrary quantum operations for an encoder.
Hence, it is natural to restrict Alice's encoding operation to a certain class of operations.
For example, when a Hamiltonian $H$ is fixed, the unitary operation $e^{it H}$ can be easily implemented.
Noticing that the set $\{e^{it H}\}_t$ forms a group representation,
this example can be generalized as follows.
Given a group $G$ and its unitary (projective) representation $U_g$,
we assume that the following available set of Alice's encoding operations
\begin{align}\Label{eq:free-operations}
  \EncG := \left\{ \cU_g \sbar g \in G\right\},
\end{align}
where $\cU_g$ is defined in~\eqref{eq:symmetric}.
We note that $\EncG$ remains as our first and smallest set of encoders.

However, in general, the dephasing operation can be easily experimentally implemented
yet it is not included in $\EncG$. This motivates us to enlarge $\EncG$ to encapsulate
physically implementable operations.
Following the argument outlined in~\cite[Section I]{korzekwa2019encoding},
we enlarge $\EncG$ by proposing the following available set of encoding operations
that \emph{commute} with $\cG$:
\begin{align}\Label{eq:free-operations2}
  \EncCP := \left\{\cE\in\channel{A\to A}_{\rm cp}\sbar \cE\circ\cG = \cG = \cG\circ\cE\right\},
\end{align}
where $\channel{A\to A}_{\rm cp}$ is the set of trace preserving and completely positive (TPCP) maps
from $A$ to $A$. This class is larger than $\EncG$.
In fact, when the Hamiltonian is given as $H$
and the set $\EncG$ is given as the set $\{e^{it H}\}_t$, the dephasing operation is contained in $\EncCP$,
matching our requirement.

To investigate the power of \emph{super-quantum} encoders, we introduce the following class of encoding operations:
\begin{align}\Label{eq:free-operations3}
  \EncP := \left\{\cE\in\channel{A\to A}_{\rm p}\sbar \cE\circ\cG = \cG = \cG\circ\cE\right\},
\end{align}
where $\channel{A\to A}_{\rm p}$ is the set of trace preserving and positive maps from $A$ to $A$.
From the perspective of resource theory of asymmetry, each quantum operation $\cE\in\EncP$
can encode information (both classical and quantum) into some degrees of
freedom of resourceful states that can be completely destroyed by $\cG$.
The set of positive maps from $2$-dimensional system to itself
is generated by the TPCP maps and the transpose operation~\cite{horodecki1996separability}.
However, \cite{horodecki1997separability,skowronek2016there} showed that
the set of positive maps from $3$-dimensional system to itself
requires infinitely many generators.
This fact indicates the possibility of enhancing the dense coding capacity by
using the set of encoders $\EncP$ over $\EncCP$.

Since $\channel{A\to A}_{\rm p}$ is strictly larger than
$\channel{A\to A}_{\rm cp}$, we can consider an intermediate set
$\channel{A\to A}_{\rm ppt}$ defined as
\begin{align}
\channel{A\to A}_{\rm ppt}:=
\left\{ \cE \in \channel{A\to A}_{\rm p} \sbar (\id_F\ox\cE)(\proj{\Phi}) \hbox{ is a PPT state} \right\},
\end{align}
where $\Phi$ is a maximally entangled state on the bipartite system $\cH_{AF}$.
As another virtual setting, we may also consider the following set of encoders:
\begin{align}\Label{eq:free-operations3}
  \EncPPT := \left\{\cE\in\channel{A\to A}_{\rm ppt}\sbar \cE\circ\cG = \cG = \cG\circ\cE\right\}.
\end{align}
Studying the above two classes of encoders, we can clarify whether super-quantum encoders
can enhance classical information transmission with pre-shared resourceful quantum state.

We can show the following inclusion hierarchy for the four classes of encoders defined above:
\begin{align}
  \EncG \subset \EncCP \subset \EncPPT \subset \EncP.
\end{align}

\subsection{Decoders under locality conditions}

Many studies investigated dense coding protocols under the assumptions that
entanglement are preshared and arbitrary encoding and decoding operations are allowed.
However, even when the sender and the receiver share an entangled state,
it is not so easy to implement a general joint measurement across two
quantum systems---the message receiver $\cH_B$ and the entanglement receiver $\cH_F$.
Hence, it is natural to impose \emph{locality conditions} for the decoders.

As a typical case, we can consider the \emph{one-way LOCC decoders} $\DecOne$ where the
classical communication flows from the entanglement receiver $\cH_F$ to the message receiver $\cH_B$.
In this case, the entanglement receiver Fred first measures the shared state at hand
and then shares the information to the message receiver Bob via classical communication.
Conditioned on the information, Bob decodes the message using local decoders.
Aiming to improve the dense coding capacity, we also introduce the \emph{two-way LOCC decoders} $\DecLOCC$
where the two-way LOCC operations can be realized
by the combinations of local operations and classical communications between the two systems.
We remark that (one-way) LOCCs are the most natural set of quantum operations in the
distributed quantum information processing.

Motivated by the resource theory of quantum entanglement~\cite{horodecki2009quantum},
we may further enlarge the set of available decoder POVMs w.r.t. the bipartite system
$B{:}F$ to improve the communication rate by considering \emph{separable decoders}
(also known as separable measurements) $\DecSEP$
and \emph{PPT decoders} (also known as PPT measurements) $\DecPPT$.
Intuitively, a joint measurement is a separable (PPT) measurement if all of its
POVM elements can be implemented by separable operations (PPT operations).
Although the separable decoders $\DecSEP$ and
the PPT decoders $\DecSEP$ are theoretical objects,
they are useful in proving the converse part in coding theorems.

At last we consider two special cases that cover the commonly studied dense coding tasks.
First, we investigate the \emph{local decoders} $\DecL$ in which Bob decodes the message encoded by
Alice without the help from Fred (the entanglement receiver),
which we shall call the non-assisted decoding.
Second, we investigate the \emph{global decoders} $\DecG$ in which Bob and Fred work together
to decode the message encoded by Alice. We do not impose any locality condition on the
joint measurements they can carry out.
Note that global decoders are used in the seminal dense coding protocol
originally proposed by Bennett and Wiesner~\cite{bennett1992communication},

We conclude the following inclusion hierarchy for the six classes of decoders defined above:
\begin{align}
  \DecL \subset \DecOne \subset \DecLOCC \subset \DecSEP \subset \DecPPT \subset \DecG.
\end{align}

\subsection{Landscape of dense coding capacities}

In the above two sections we have proposed four classes of available
encoders---$\EncG, \EncCP, \EncPPT, \EncP$---and six classes of available
decoders---$\DecL, \DecOne, \DecLOCC, \DecSEP, \DecPPT, \DecG$.
However, it is not the case that arbitrary encoder-decoder pair chosen from the available sets
can form an valid code for the resourceful quantum state $\Psi_{AF}$.
More precisely, consider the encoder-decoder pair $(\fE,\fD)$ where
 $\fE\in\{\EncG, \EncCP, \EncPPT, \EncP\}$
and $\fD\in\{\DecL,\DecOne, \DecLOCC, \DecSEP, \DecPPT,\DecG\}$.
The encoding operations $\{\cE^m\}_m$ chosen from $\fE$ by Alice
and the decoder POVM $\{\Gamma^{m}\}_m$ chosen from $\fD$ by Bob and Fred
yield the conditional values $q_{\wh{M}M}$ defined in Eq.~\eqref{eq:conditional-distribution}.
To guarantee that $q_{\wh{M}M}$ is a conditional distribution
(and the corresponding code $\cC=(\{\cE^m\}_m,\{\Gamma^{m}\}_m)$ is an valid dense coding code),
it must hold that
\begin{align}
    q_{\wh{M}M}(\wh{m}\vert m) = \tr\left[\Gamma^{\wh{m}}_{BF}\cE^m_{A\to A}(\proj{\Psi}_{AF})\right] \geq 0
\end{align}
for arbitrary $m,\wh{m}\in\cM$. This physical constraint rules out
the possible combinations $(\EncPPT,\DecG), (\EncP,\DecPPT), (\EncP,\DecG)$,
since these encoder-decoder pairs may lead to negative values.
Conversely, all other possible pairs $(\fE,\fD)$
($21$ pairs in total) are valid encoder-decoder pairs for the dense coding protocol.
For reference, we outline the landscape of investigated dense coding capacities
in Table~\ref{table:landscape}.

\begin{remark}
On the extreme case where the measurement outcomes on $\cH_F$ is completely \emph{ignored},
i.e., only local decoders $\DecL$ are available, we recover the
communication capacities of $\Psi_A$ previously investigated in~\cite{korzekwa2019encoding}.
\end{remark}

\begin{table}[!hptb]
\centering
\renewcommand{\multirowsetup}{\centering}
\renewcommand*{\arraystretch}{2}
\setlength\tabcolsep{6.4pt} 
\begin{tabular}{ccccccccc} 
\toprule[1.5pt]
& & & & \multicolumn{5}{c}{\textbf{Encoder $\fE$}} \\
\cmidrule{5-9}
& & & & \multicolumn{2}{c}{Quantum Encoder} &  & \multicolumn{2}{c}{Super-Quantum Encoder} \\
\cmidrule{5-6} \cmidrule{8-9}
& & & & $\EncG$ & $\EncCP$ & & $\EncPPT$ & $\EncP$ \\\midrule[0.5pt]
\multirow{6}{*}{\textbf{Decoder $\fD$}}
& Local & $\DecL$ &
              & {\color{blue}$C_{\EncG, \DecL}(\Psi_{AF})$}
              & {\color{blue}$C_{\EncCP, \DecL}(\Psi_{AF})$}
              &
              & {\color{blue}$C_{\EncPPT, \DecL}(\Psi_{AF})$}
              & {\color{blue}$C_{\EncP, \DecL}(\Psi_{AF})$} \\\cmidrule{2-9}
& One-way LOCC & $\DecOne$ &
              & {\color{red}$C_{\EncG, \DecOne}(\Psi_{AF})$}
              & {\color{red}$C_{\EncCP, \DecOne}(\Psi_{AF})$}
              &
              & {\color{red}$C_{\EncPPT, \DecOne}(\Psi_{AF})$}
              & {\color{red}$C_{\EncP, \DecOne}(\Psi_{AF})$} \\\cmidrule{2-9}
& LOCC & $\DecLOCC$ &
              & {\color{red}$C_{\EncG, \DecLOCC}(\Psi_{AF})$}
              & {\color{red}$C_{\EncCP, \DecLOCC}(\Psi_{AF})$}
              &
              & {\color{red}$C_{\EncPPT, \DecLOCC}(\Psi_{AF})$}
              & {\color{red}$C_{\EncP, \DecLOCC}(\Psi_{AF})$} \\\cmidrule{2-9}
& Separable & $\DecSEP$ &
              & {\color{red}$C_{\EncG, \DecSEP}(\Psi_{AF})$}
              & {\color{red}$C_{\EncCP, \DecSEP}(\Psi_{AF})$}
              &
              & {\color{red}$C_{\EncPPT, \DecSEP}(\Psi_{AF})$}
              & {\color{red}$C_{\EncP, \DecSEP}(\Psi_{AF})$} \\\cmidrule{2-9}
& PPT & $\DecPPT$ &
              & {\color{red}$C_{\EncG, \DecPPT}(\Psi_{AF})$}
              & {\color{red}$C_{\EncCP, \DecPPT}(\Psi_{AF})$}
              &
              & {\color{red}$C_{\EncPPT, \DecPPT}(\Psi_{AF})$}
              & {\large\ding{55}} \\\cmidrule{2-9}
& Global  & $\DecG$ &
              & {\color{violet}$C_{\EncG, \DecG}(\Psi_{AF})$}
              & {\color{violet}$C_{\EncCP, \DecG}(\Psi_{AF})$}
              &
              & {\large\ding{55}}
              & {\large\ding{55}} \\
\bottomrule[1.5pt]
\end{tabular}
\caption{Landscape of the dense coding capacities investigated in this paper.
    We are able to show that the capacities with the same color are actually equal
    and derive single-letter capacity formulas for all these capacities under Assumption~\ref{assp:multiplicity-free}.}
\label{table:landscape}
\end{table}

We summarize in the following proposition the inclusion relations
of the various dense coding capacities defined above.

\begin{proposition}\Label{prop:enhanced-relation2}
Let $\ket{\Psi}_{AF}$ be a bipartite pure quantum state. It holds that
 \begin{align}
C_{\fE,\fD}(\Psi_{AF})
\le C^\dagger_{\fE,\fD}(\Psi_{AF})
\end{align}
for $\fE\in\{\EncG, \EncCP, \EncPPT, \EncP\}$
and $\fD\in\{\DecL,\DecOne, \DecLOCC, \DecSEP, \DecPPT,\DecG\}$,
except for $(\EncPPT,\DecG), (\EncP,\DecPPT), (\EncP,\DecG)$.
What's more, the following relation hierarchy holds:

\begin{center}
\begin{tikzpicture}[scale=1]

\node at (0,0) {$C_{\EncG,\DecL}$};%
\node at (3,0) {$C_{\EncCP,\DecL}$};%
\node at (6,0) {$C_{\EncPPT,\DecL}$};%
\node at (9,0) {$C_{\EncP,\DecL}$};%

\node at (0,-2) {$C_{\EncG,\DecOne}$};%
\node at (3,-2) {$C_{\EncCP,\DecOne}$};%
\node at (6,-2) {$C_{\EncPPT,\DecOne}$};%
\node at (9,-2) {$C_{\EncP,\DecOne}$};%

\node at (0,-4) {$C_{\EncG,\DecLOCC}$};%
\node at (3,-4) {$C_{\EncCP,\DecLOCC}$};%
\node at (6,-4) {$C_{\EncPPT,\DecLOCC}$};%
\node at (9,-4) {$C_{\EncP,\DecLOCC}$};%

\node at (0,-6) {$C_{\EncG,\DecSEP}$};%
\node at (3,-6) {$C_{\EncCP,\DecSEP}$};%
\node at (6,-6) {$C_{\EncPPT,\DecSEP}$};%
\node at (9,-6) {$C_{\EncP,\DecSEP}$};%

\node at (0,-8) {$C_{\EncG,\DecPPT}$};%
\node at (3,-8) {$C_{\EncCP,\DecPPT}$};%
\node at (6,-8) {$C_{\EncPPT,\DecPPT}$};%

\node at (0,-10) {$C_{\EncG,\DecG}$};%
\node at (3,-10) {$C_{\EncCP,\DecG}$};%

\draw[thick,->] (1,0) -- (2,0);
\draw[thick,->] (4,0) -- (5,0);
\draw[thick,->] (7,0) -- (8,0);
\draw[thick,->] (1,-2) -- (2,-2);
\draw[thick,->] (4,-2) -- (5,-2);
\draw[thick,->] (7,-2) -- (8,-2);
\draw[thick,->] (1,-4) -- (2,-4);
\draw[thick,->] (4,-4) -- (5,-4);
\draw[thick,->] (7,-4) -- (8,-4);
\draw[thick,->] (1,-6) -- (2,-6);
\draw[thick,->] (4,-6) -- (5,-6);
\draw[thick,->] (7,-6) -- (8,-6);
\draw[thick,->] (1,-8) -- (2,-8);
\draw[thick,->] (4,-8) -- (5,-8);
\draw[thick,->] (1,-10) -- (2,-10);

\draw[thick,->] (0,-0.5) -- (0,-1.5);
\draw[thick,->] (3,-0.5) -- (3,-1.5);
\draw[thick,->] (6,-0.5) -- (6,-1.5);
\draw[thick,->] (9,-0.5) -- (9,-1.5);

\draw[thick,->] (0,-2.5) -- (0,-3.5);
\draw[thick,->] (3,-2.5) -- (3,-3.5);
\draw[thick,->] (6,-2.5) -- (6,-3.5);
\draw[thick,->] (9,-2.5) -- (9,-3.5);

\draw[thick,->] (0,-4.5) -- (0,-5.5);
\draw[thick,->] (3,-4.5) -- (3,-5.5);
\draw[thick,->] (6,-4.5) -- (6,-5.5);
\draw[thick,->] (9,-4.5) -- (9,-5.5);

\draw[thick,->] (0,-6.5) -- (0,-7.5);
\draw[thick,->] (3,-6.5) -- (3,-7.5);
\draw[thick,->] (6,-6.5) -- (6,-7.5);

\draw[thick,->] (0,-8.5) -- (0,-9.5);
\draw[thick,->] (3,-8.5) -- (3,-9.5);
\end{tikzpicture}
\end{center}
where $x\rightarrow y$ means that $x\leq y$.
Also, we have the same relation hierarchy
for the strong converse capacities $ C^\dagger_{\fE,\fD}(\Psi_{AF})$.
\end{proposition}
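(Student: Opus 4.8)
The plan is to reduce every assertion in the proposition to a single monotonicity principle, which I would state and verify first: if $(\fE,\fD)$ and $(\fE',\fD')$ are two \emph{valid} encoder--decoder pairs (in the sense that the associated $q_{\wh{M}M}$ is always a genuine conditional distribution) with $\fE\subseteq\fE'$ and $\fD\subseteq\fD'$, then every dense coding code $\cC\in(\fE,\fD)$ is also a code in $(\fE',\fD')$, with the same message set $\cM$, the same conditional values $q_{\wh{M}M}$ defined in Eq.~\eqref{eq:conditional-distribution}, hence the same decoding error $e(\cC)$ and the same rate $r(\cC)=\log\vert\cC\vert$. From Definition~\ref{def:one-shot dense coding capacity} this gives $C^\varepsilon_{\fE,\fD}(\Psi_{AF})\le C^\varepsilon_{\fE',\fD'}(\Psi_{AF})$ for every $\varepsilon\in[0,1)$ and every bipartite pure state, in particular for $\Psi_{AF}^{\ox n}$. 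Passing to the asymptotic quantities via Eqs.~\eqref{eq:dense coding capacity} and \eqref{eq:strong converse dense coding capacity} preserves the inequality, because $\tfrac1n(\cdot)$, $\limsup_{n\to\infty}$, $\inf_{\varepsilon>0}$ and $\sup_{\varepsilon<1}$ are all monotone; thus $C_{\fE,\fD}(\Psi_{AF})\le C_{\fE',\fD'}(\Psi_{AF})$ and $C^\dagger_{\fE,\fD}(\Psi_{AF})\le C^\dagger_{\fE',\fD'}(\Psi_{AF})$.

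\textbf{The relation hierarchy.} With this principle in hand, the Hasse diagram follows by inspection. First I would note that each node appearing in the diagram is one of the $21$ valid pairs identified in Section~\ref{sec:task} (the analysis ruling out $(\EncPPT,\DecG)$, $(\EncP,\DecPPT)$, $(\EncP,\DecG)$ precisely guarantees this), so that ``being a code'' is indeed preserved along every drawn arrow. Each horizontal arrow then corresponds to one of the encoder inclusions $\EncG\subset\EncCP\subset\EncPPT\subset\EncP$ with the decoder class held fixed, and each vertical arrow corresponds to one of the decoder inclusions $\DecL\subset\DecOne\subset\DecLOCC\subset\DecSEP\subset\DecPPT\subset\DecG$ with the encoder class held fixed; both chains of inclusions were established in the preceding subsections. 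Applying the monotonicity principle to each elementary inclusion yields the corresponding ``$\le$'', and composing along paths of arrows gives the full hierarchy for $C_{\fE,\fD}$. Since $C^\dagger_{\fE,\fD}(\Psi_{AF})$ is built from exactly the same one-shot quantities $C^\varepsilon_{\fE,\fD}(\Psi_{AF}^{\ox n})$, the identical argument delivers the hierarchy for the strong converse capacities.

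\textbf{The inequality $C_{\fE,\fD}\le C^\dagger_{\fE,\fD}$.} Here I would use that $C^\varepsilon_{\fE,\fD}(\Psi_{AF}^{\ox n})$ is nondecreasing in $\varepsilon$, since enlarging the error threshold only enlarges the feasible set in Eq.~\eqref{eq:one-shot dense coding capacity}; consequently $f(\varepsilon):=\limsup_{n\to\infty}\tfrac1n C^\varepsilon_{\fE,\fD}(\Psi_{AF}^{\ox n})$ is a nondecreasing function of $\varepsilon$ on $[0,1)$. Then $C_{\fE,\fD}(\Psi_{AF})=\inf_{\varepsilon>0}f(\varepsilon)\le f(\varepsilon_0)$ for every $\varepsilon_0\in[0,1)$, and taking the supremum over $\varepsilon_0<1$ on the right-hand side produces exactly $C^\dagger_{\fE,\fD}(\Psi_{AF})$, which is the claim.

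\textbf{Anticipated difficulty.} I do not expect a genuine obstacle: the proposition is essentially a bookkeeping exercise in set inclusion of code families together with the monotonicity of the limiting operations. The only points needing care are (i) confirming that enlargement never leaves the set of valid pairs, so that the diagram contains no illegal arrow and codes for the smaller pair really are codes for the larger one, and (ii) correctly tracking the interplay of $\limsup_{n}$ with $\inf_{\varepsilon}$ and $\sup_{\varepsilon}$ when lifting the one-shot inequalities to the asymptotic and strong-converse statements.
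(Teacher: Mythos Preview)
Your proposal is correct and matches the paper's own treatment: the paper states this proposition without an explicit proof, since it is meant to follow immediately from the encoder inclusions $\EncG\subset\EncCP\subset\EncPPT\subset\EncP$, the decoder inclusions $\DecL\subset\DecOne\subset\DecLOCC\subset\DecSEP\subset\DecPPT\subset\DecG$, and the definitions of the one-shot, asymptotic, and strong converse capacities. Your monotonicity principle and the careful handling of the $\limsup$/$\inf$/$\sup$ operations are exactly the bookkeeping the paper leaves implicit.
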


\subsection{Enhanced version with one-way LOCC}\label{sec:one-way}

In the above dense coding framework, if we fix the available decoders to $\DecOne$,
i.e., the one-way LOCC decoders, this specific setting has an equivalent description called the
\textit{environment-assisted classical communication via quantum resources},
originally motivated by the intensively studied environment assistance
framework~\cite{divincenzo1998entanglement,gregoratti2003quantum,hayden2004correcting,smolin2005entanglement,winter2005environment,buscemi2005inverting,buscemi2007channel,dutil2010assisted,buscemi2013general,karumanchi2016classical,karumanchi2016quantum,chitambar2016assisted,regula2018nonasymptotic,vijayan2018one,lami2020assisted}.
The detailed dense coding procedure using one-way LOCC decoders is illustrated in Figure~\ref{fig:ea-dense-coding}.

\begin{figure}[!htbp]
\centering
\begin{minipage}[t]{.48\linewidth}
  \centering
  \includegraphics[width=1.0\textwidth]{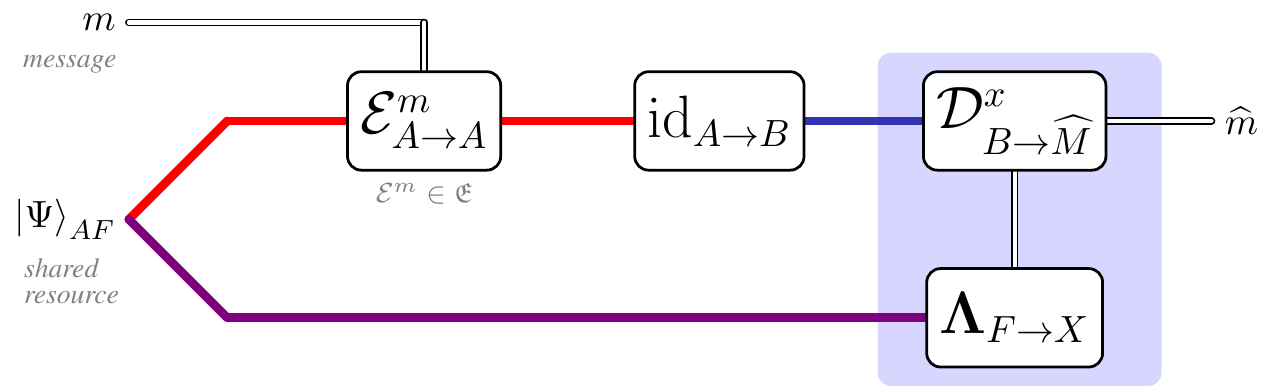}
  \caption{A dense coding protocol for the shared resourceful quantum state $\Psi_{AF}$
            under the one-way LOCC decoders (shaded area).
            In this protocol, Alice possesses the state in red line,
            Bob possesses the state in blue line, and
            Fred possesses the state in purple line.}
  \Label{fig:ea-dense-coding}
\end{minipage}%
\hspace{0.1in}
\begin{minipage}[t]{.49\linewidth}
  \centering
  \includegraphics[width=1.0\textwidth]{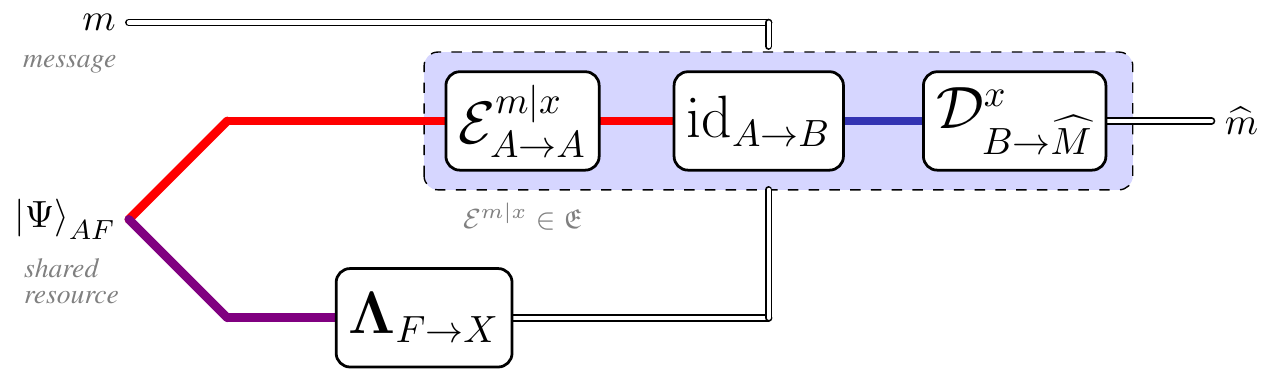}
  \caption{An enhanced dense coding protocol for the shared resourceful quantum state $\Psi_{AF}$
            under the one-way LOCC decoders.
            In this protocol, Alice possesses the state in red line,
        Bob possesses the state in blue line, and
        Fred possesses the state in purple line.}
  \Label{fig:ea-dense-coding-enhanced}
\end{minipage}
\end{figure}

Inspired by Figure~\ref{fig:ea-dense-coding},
we propose here a hypothetical and \emph{enhanced} dense coding framework with one-way LOCC decoders
in which both Alice and Bob have access to Fred's outcome, as illustrated in Figure~\ref{fig:ea-dense-coding-enhanced}.
This hypothetical setting yields upper bounds on the standard dense coding with one-way LOCC decoders.
Obviously, the dense coding power of $\Psi_{AF}$ is enhanced compared to the
setting depicted in Figure~\ref{fig:ea-dense-coding} since Alice
possesses additional information (from Fred).
Following the Definitions~\ref{def:one-shot dense coding capacity},~\ref{def:dense coding capacity},
and~\ref{def:strong converse dense coding capacity}, we can define analogously
corresponding enhanced dense coding capacities introduced in Figure~\ref{fig:ea-dense-coding-enhanced} as
\begin{align}\label{eq:enchance dense coding}
    \wt{C}_{\fE,\DecOne}^\varepsilon(\Psi_{AF}),\quad
    \wt{C}_{\fE,\DecOne}(\Psi_{AF}),\quad
    \wt{C}_{\fE,\DecOne}^\dagger(\Psi_{AF}),
\end{align}
respectively, where $\fE\in\{\EncG,\EncCP,\EncPPT,\EncP\}$.
Throughout this paper, we use the letter $\wt{C}$ with tilde to represent the enhanced dense coding capacity.
We conclude the following weak and strong converse bounds on the (enhanced) dense coding capacities.
See Appendix~\ref{appx:thm:enhanced-relation} for the proof.

\begin{theorem}\Label{thm:enhanced-relation}
Let $\ket{\Psi}_{AF}$ be a bipartite pure quantum state and $\varepsilon\in[0,1)$. It holds that
\begin{subequations}
\begin{align}
  C_{\fE,\DecOne}^\varepsilon(\Psi_{AF})
&\leq \wt{C}_{\fE,\DecOne}^\varepsilon(\Psi_{AF}),\label{eq:enhanced-relation-1} \\
  C_{\fE,\DecOne}(\Psi_{AF}) &\leq \wt{C}_{\fE,\DecOne}(\Psi_{AF})
\leq A_G^\infty(\Psi_A),\label{eq:enhanced-relation-2}\\
  C^\dagger_{\fE,\DecOne}(\Psi_{AF}) &\leq \wt{C}^\dagger_{\fE,\DecOne}(\Psi_{AF})
\leq \Shannon\left(\cG(\Psi_A)\right)\label{eq:enhanced-relation-3},
\end{align}
where $\fE\in\{\EncG,\EncCP,\EncPPT,\EncP\}$.
\end{subequations}
\end{theorem}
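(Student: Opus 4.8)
The plan is to treat the three chains in the statement separately. In each chain the left-hand inequality is a structural fact about the enhanced protocol, while the genuine converse content sits in the right-hand inequalities of the second and third chains.

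For the left-hand inequalities, I would show that every standard one-way LOCC code is realizable in the enhanced framework with the same cardinality and the same decoding error. Given a standard code with encoders $\{\cE^m\}_m\subset\fE$ and a decoder built from Fred's POVM $\{M_j\}$ on $\cH_F$ followed by Bob's $j$-conditioned POVMs $\{\Gamma^{\wh m}_j\}$, observe that since $\ket{\Psi}_{AF}$ is pure and each $\cE^m$ acts only on $\cH_A$, Fred's measurement may equivalently be carried out first; it then induces a pure-state decomposition $\Psi_A=\sum_j p_j\proj{\psi^j}_A$ (any such decomposition is induced by a POVM on the purifying system, by the Hughston--Jozsa--Wootters theorem) with the outcome $j$ now revealed to Alice as well. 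Taking the $j$-independent encoders $\cE^{m,j}:=\cE^m$ reproduces exactly the statistics $q_{\wh MM}(\wh m\vert m)=\sum_j p_j\tr\!\left[\Gamma^{\wh m}_j\,\cE^m(\psi^j_A)\right]$. This proves $C^\varepsilon_{\fE,\DecOne}(\Psi_{AF})\le\wt C^\varepsilon_{\fE,\DecOne}(\Psi_{AF})$; applying it to $\Psi_{AF}^{\ox n}$ and passing to the limits of Definitions~\ref{def:dense coding capacity} and~\ref{def:strong converse dense coding capacity} gives the two capacity inequalities.

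The engine behind the right-hand inequalities is a Holevo-type estimate: for any encoders $\{\cE^m\}_m\subset\fE$ on $\cH_A^{\ox n}$ and any state $\omega$, $\chi\!\left(\{p_m,\cE^m(\omega)\}_m\right)\le\Shannon\!\left(\cG^{\ox n}(\omega)\right)$. Indeed, by \eqref{CO1} every $\cE^m(\omega)$, and therefore also the average $\bar\omega:=\sum_m p_m\cE^m(\omega)$, has $\cG^{\ox n}$-image equal to $\cG^{\ox n}(\omega)$; hence $\Shannon(\bar\omega)\le\Shannon(\cG^{\ox n}(\bar\omega))=\Shannon(\cG^{\ox n}(\omega))$ because $\cG^{\ox n}$ is a mixed-unitary, hence entropy-non-decreasing, channel, and subtracting the non-negative average output entropy gives the claim. (When $\fE=\EncP$ the operators $\cE^m(\omega)$ are still positive semidefinite, since $\omega$---a post-Fred state on $\cH_A^{\ox n}$ alone---carries no entanglement, so all of these entropies are well defined.) For the weak converse of the second chain, apply this in the enhanced protocol on $n$ copies with $J$ denoting Fred's outcome register, which is independent of the uniform message $M$: data processing and Fano's inequality give $(1-\varepsilon)\log\vert\cM\vert-h_2(\varepsilon)\le\Mutual(M{:}\wh M)\le\Mutual(M{:}B|J)=\sum_j p_j\,\chi\!\left(\{\vert\cM\vert^{-1},\cE^{m,j}(\psi^j_{A^n})\}_m\right)\le\sum_j p_j\,\Shannon\!\left(\cG^{\ox n}(\psi^j_{A^n})\right)\le A_G(\Psi_A^{\ox n})$, where the last step uses that $\{p_j,\psi^j_{A^n}\}$ is a pure-state decomposition of $\Psi_A^{\ox n}$ together with the definition of the asymmetry of assistance. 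Thus $\wt C^\varepsilon_{\fE,\DecOne}(\Psi_{AF}^{\ox n})\le\left(A_G(\Psi_A^{\ox n})+h_2(\varepsilon)\right)/(1-\varepsilon)$; dividing by $n$, taking $\limsup_{n\to\infty}$, and then $\varepsilon\to0$ yields $\wt C_{\fE,\DecOne}(\Psi_{AF})\le A_G^\infty(\Psi_A)$.

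The strong converse $\wt C^\dagger_{\fE,\DecOne}(\Psi_{AF})\le\Shannon(\cG(\Psi_A))$ is the technically hard part: Fano's bound degenerates as $\varepsilon\to1$, so the sandwiched \Renyi divergence is needed. I would regard the enhanced protocol on $n$ copies as an ordinary classical--quantum channel $m\mapsto\rho_m:=\sum_j p_j\proj{j}_J\ox\cE^{m,j}(\psi^j_{A^n})$ decoded from $JB$, and invoke the one-shot $\alpha$-\Renyi strong-converse bound $\log\vert\cM\vert\le\frac{\alpha}{\alpha-1}\log\frac{1}{1-\varepsilon}+\max_m\SRRel{\alpha}\!\left(\rho_m\rel\xi\right)$, valid for every $\alpha>1$ and every state $\xi$, with the choice $\xi:=\sum_j p_j\proj{j}_J\ox\cG^{\ox n}(\psi^j_{A^n})$. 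Since $\rho_m$ and $\xi$ are block-diagonal in $J$ with identical weights, $\SRRel{\alpha}\!\left(\rho_m\rel\xi\right)$ reduces to a combination of the terms $\SRRel{\alpha}\!\left(\cE^{m,j}(\psi^j_{A^n})\rel\cG^{\ox n}(\psi^j_{A^n})\right)$, each controlled using the block structure $\cG^{\ox n}(\rho)=\bigoplus_{\vec\lambda}\tr(P_{\vec\lambda}\rho)\,P_{\vec\lambda}/\tr P_{\vec\lambda}$ of the twirl in the multiplicity-free case together with the elementary identity $\SRRel{\alpha}\!\left(\nu\rel P_\lambda/\tr P_\lambda\right)=\log\tr P_\lambda-\PRenyi{\alpha}(\nu)$. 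Combining the $j$-contributions with the concavity bound $\sum_j p_j\,\Shannon\!\left(\cG^{\ox n}(\psi^j_{A^n})\right)\le\Shannon\!\left(\cG^{\ox n}(\Psi_A^{\ox n})\right)=n\,\Shannon(\cG(\Psi_A))$ and then letting $n\to\infty$ followed by $\alpha\to1^+$---with $\alpha=\alpha_n\to1$ chosen slowly enough that $((\alpha_n-1)n)^{-1}\to0$---drives every correction term to zero and leaves $\Shannon(\cG(\Psi_A))$. The main obstacle is exactly this last coordination: for $\alpha>1$ the quantity $\SRRel{\alpha}\!\left(\cE(\psi)\rel\cG(\psi)\right)$ may strictly exceed $\Shannon(\cG(\psi))$ once more than one irreducible block is populated (the inter-block coherences that positive or completely positive encoders can create are precisely what makes the bound $\Shannon(\cG(\psi))$ tight), so one must bound this excess uniformly in the code and over the exponentially many blocks and trade it off against $\frac{\alpha}{\alpha-1}\log\frac{1}{1-\varepsilon}$.
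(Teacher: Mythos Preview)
Your treatment of~\eqref{eq:enhanced-relation-1} and the weak converse in~\eqref{eq:enhanced-relation-2} is essentially the paper's argument: Fred's measurement induces a pure-state decomposition, the conditional Holevo quantity is bounded via $\cG\circ\cE^{m}=\cG$ and data processing, and Fano finishes. That part is fine.

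The gap is in the strong converse~\eqref{eq:enhanced-relation-3}. You correctly identify the obstacle---for $\alpha>1$ the quantity $\SRRel{\alpha}\!\left(\cE^{m,j}(\psi^j)\rel\cG(\psi^j)\right)$ can exceed $\Shannon(\cG(\psi^j))$ because the encoder creates inter-block coherences---but you do not resolve it: the proposal ends by saying one ``must bound this excess uniformly in the code and over the exponentially many blocks and trade it off,'' which is exactly the missing step. Moreover you invoke the multiplicity-free decomposition, whereas the theorem is stated and proved in the paper without Assumption~\ref{assp:multiplicity-free}.

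The paper's route avoids this difficulty entirely by switching to the \emph{Petz} divergence $\PRRel{\alpha}$ and inserting one further pure-state decomposition $\psi_A^x=\sum_y P_{Y|X}(y|x)\psi_A^{x,y}$. The point is that for a density operator $\rho$ one has $\rho^\alpha\le\rho$ when $\alpha>1$, so with $\rho=U_g\cE^{m|x}(\psi_A^{x,y})U_g^\dagger$ and the commuting second argument $\cG(\psi_A^{x,y})^{1-\alpha}$ one obtains
\[
\tr\!\left[\rho^\alpha\,\cG(\psi_A^{x,y})^{1-\alpha}\right]\le\tr\!\left[\rho\,\cG(\psi_A^{x,y})^{1-\alpha}\right].
\]
Averaging $\rho$ over the Haar measure on $G$ turns the right-hand side into $\tr\!\left[\cG\circ\cE^{m|x}(\psi_A^{x,y})\,\cG(\psi_A^{x,y})^{1-\alpha}\right]$, and now the encoder constraint $\cG\circ\cE^{m|x}=\cG$ collapses this to $\tr\!\left[\cG(\psi_A^{x,y})^{2-\alpha}\right]$---no coherences survive. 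A final application of operator concavity of $t\mapsto t^{2-\alpha}$ for $\alpha\in(1,2)$ pushes the average inside and yields the one-shot bound $s(\cC)^\alpha\vert\cC\vert^{\alpha-1}\le e^{(\alpha-1)\PRenyi{2-\alpha}(\cG(\Psi_A))}$, i.e.\ Proposition~\ref{prop:one-shot-strong-converse}, from which the strong converse follows by the standard $\alpha\to1$ argument. Your sandwiched-divergence approach does not give access to the first inequality above (the two arguments do not commute there), which is why you got stuck.
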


\subsection{Main results}

\paragraph*{Dense coding capacities under locality conditions.}
Our main result concerns the dense coding capacities under various locality conditions---$\DecOne$,
$\DecLOCC$, $\DecSEP$, $\DecPPT$. In a word, we show that all these capacities are equal
and derive a single-letter capacity formula.
Before stating the result, we outline some notations first.
We assume that the (projective) unitary representation $U$ on $\cH_A$ is
multiplicity-free (cf. Assumption~\ref{assp:multiplicity-free}).
The Hilbert space $\cH_A$ is decomposed as $\oplus_{k \in \cK} \cH_k$.
Hence, any pure state $\Psi_{AF}$ on the bipartite system $\cH_A\ox\cH_F$
can be written as
\begin{align}
\Psi_{AF}= \sum_{k \in \cK} \sqrt{P_K(k)}\Psi_{AF,k},
\end{align}
where $\Psi_{AF,k}$ is a pure state on the bipartite system $\cH_k \ox\cH_F$.
The average state on the bipartite system $\cH_A\ox\cH_F$ is given as
\begin{align}\label{eq:xi-AF}
\xi_{AF}:=(\cG_A\ox\id_F)(\Psi_{AF})
= \sum_{k \in \cK}P_K(k) \pi_k\ox \rho_{F,k},
\end{align}
where $\rho_{F,k}:= \tr_A\Psi_{AF,k}$ and $\pi_k$ is the maximally mixed state on $\cH_k$.
Notice that
\begin{align}\label{eq:QQbwQRNzEZSI}
\Shannon(\xi_{AF})=\Shannon(K)_\xi+ \Shannon(A|K)_\xi+\Shannon(F|K)_\xi
=\Shannon(A)_\xi+\Shannon(F|K)_\xi,
\end{align}

Our main result is summarized as follows and the proof can be found in Appendix~\ref{S4B}.

\begin{theorem}[Dense coding capacity under locality conditions]
\Label{thm:asymptotic characterization}
Let $\ket{\Psi}_{AF}$ be a bipartite pure quantum state.
It holds under Assumption~\ref{assp:multiplicity-free} (the multiplicity-free condition) that
\begin{align}\Label{eq:asymptotic characterization}
C_{\fE,\fD}(\Psi_{AF})
=C^\dagger_{\fE,\fD}(\Psi_{AF})
= A_G^\infty(\Psi_A) = \Shannon\left(\cG(\Psi_A)\right)
= \Shannon\left(A\right)_{\xi},
\end{align}
for $\fE\in\{\EncG, \EncCP, \EncPPT, \EncP\}$
and $\fD\in(\DecOne, \DecLOCC, \DecSEP, \DecPPT)$,
except for $(\EncP,\DecPPT)$, where $\xi_A$ is defined in Eq.~\eqref{eq:xi-AF}.
\end{theorem}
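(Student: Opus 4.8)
The plan is to prove~\eqref{eq:asymptotic characterization} by a sandwich whose two ends collapse, via monotonicity, onto one achievability statement and two converse statements. First, the identity $\Shannon(\cG(\Psi_A))=\Shannon(A)_\xi$ is immediate since $\xi_A=\tr_F\xi_{AF}=\cG_A(\Psi_A)$ by~\eqref{eq:xi-AF}; $A_G^\infty(\Psi_A)\le\Shannon(\cG(\Psi_A))$ is Proposition~\ref{prop:regularized-asymmetry-of-assistance}; and Theorem~\ref{thm:enhanced-relation} gives $C_{\fE,\DecOne}(\Psi_{AF})\le\wt C_{\fE,\DecOne}(\Psi_{AF})\le A_G^\infty(\Psi_A)$ and $C^\dagger_{\fE,\DecOne}(\Psi_{AF})\le\wt C^\dagger_{\fE,\DecOne}(\Psi_{AF})\le\Shannon(\cG(\Psi_A))$. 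In the relation hierarchy of Proposition~\ref{prop:enhanced-relation2}, $C_{\EncG,\DecOne}$ is the least of all admissible $C_{\fE,\fD}$ with $\fD\in\{\DecOne,\DecLOCC,\DecSEP,\DecPPT\}$, whereas $\max\{C^\dagger_{\EncPPT,\DecPPT},C^\dagger_{\EncP,\DecSEP}\}$ dominates all admissible $C^\dagger_{\fE,\fD}$ in that range, the excluded pair $(\EncP,\DecPPT)$ being precisely the one not so dominated. Hence, using also $C_{\fE,\fD}\le C^\dagger_{\fE,\fD}$, it suffices to establish the achievability $C_{\EncG,\DecOne}(\Psi_{AF})\ge\Shannon(\cG(\Psi_A))$ together with the strong converses $C^\dagger_{\EncPPT,\DecPPT}(\Psi_{AF})\le\Shannon(\cG(\Psi_A))$ and $C^\dagger_{\EncP,\DecSEP}(\Psi_{AF})\le\Shannon(\cG(\Psi_A))$.

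\textbf{Achievability --- the hard part.} The observation that unlocks one-way LOCC is that for any POVM $\{F_y\}$ on $\cH_F$ the probability $\tr[(\1_B\ox F_y)\proj{U_g\Psi}_{BF}]=\langle\Psi|\1_B\ox F_y|\Psi\rangle$ is independent of $g$, while the post-measurement state on $\cH_B$ equals $U_{g,B}\wh\sigma_y U_{g,B}^\dagger$ for a $g$-independent state $\wh\sigma_y$. Thus Fred may measure $\cH_F$, announce $y$, and Bob is left with the cq channel $g\mapsto U_{g,B}\wh\sigma_yU_{g,B}^\dagger$ to be decoded by an unrestricted measurement on $\cH_B$; by the Holevo--Schumacher--Westmoreland theorem with (message-independent) side information $y$ at the decoder, the rate $\sum_y p(y)[\Shannon(\cG(\wh\sigma_y))-\Shannon(\wh\sigma_y)]$ is achievable, and by the Hughston--Jozsa--Wootters theorem every ensemble decomposition of $\Psi_B$ arises this way. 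What remains is (a) to upgrade this into a one-shot achievability bound phrased through a smoothed conditional \Renyi entropy of Fred's ensemble (the genuinely difficult step, giving room to perturb the target state), and (b) to show, on $\Psi_{AF}^{\ox n}$ and under the multiplicity-free structure $\cH_A=\oplus_k\cH_k$, that one can choose $\{F_{\bm y}\}$ producing a near-pure ensemble whose members have block weights close to $P_K^{\ox n}$ and are asymptotically block-incoherent --- the latter arranged by folding per-block random phases into the outcome label --- so that $\cG^{\ox n}$ sends each member close to $\xi_A^{\ox n}$. Then $\sum_{\bm y}p(\bm y)\Shannon(\cG^{\ox n}(\wh\sigma_{\bm y}))\approx n\Shannon(\cG(\Psi_A))$ while $\sum_{\bm y}p(\bm y)\Shannon(\wh\sigma_{\bm y})\approx0$; dividing by $n$ and letting $n\to\infty$ gives the bound. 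The multiplicity-free condition is essential here: without it the regularized quantity can fall strictly below $\Shannon(\cG(\Psi_A))$ (already for the trivial group), and the bound fails.

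\textbf{Converses, and why super-quantum encoders do not help.} For $\cE\in\EncPPT$, the positive partial transpose of its Choi operator means $T\circ\cE$ is CPTP, where $T$ is the transpose in an orthonormal basis adapted to $\oplus_k\cH_k$; one verifies $T\circ\cG=\cG\circ T=\cG$, hence $T\circ\cE\in\EncCP$, while $\cE=T\circ(T\circ\cE)$ and the $B$-partial transpose of a PPT POVM element is again a PPT POVM element. So every $(\EncPPT,\DecPPT)$ code has the same output statistics as an $(\EncCP,\DecPPT)$ code, reducing the task to bounding $C^\dagger_{\EncCP,\DecPPT}(\Psi_{AF})$. Likewise, if $\cE\in\EncP$ and $\{\Gamma^{\wh m}_{BF}\}$ is separable, then $(\cE^*\ox\id_F)(\Gamma^{\wh m})$ is separable on $\cH_{AF}$ (using positivity and unitality of $\cE^*$, $\cE^*(\1)=\1$), so the code statistics factor through a separable measurement on $\cH_{AF}$, handled analogously; both reductions exhibit that the GPT relaxation yields nothing beyond quantum encoders. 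The remaining quantum-encoder bound $C^\dagger_{\EncCP,\DecPPT}(\Psi_{AF})\le\Shannon(\cG(\Psi_A))$ --- which also covers $\DecSEP$ and $\DecLOCC$ as sub-cases, and, together with Theorem~\ref{thm:enhanced-relation}, $\DecOne$ --- is obtained by symmetrizing the decoder over $\{U_g\}$, which leaves the average error unchanged, so that it may be taken $G$-covariant, and then applying a semidefinite PPT relaxation of the one-shot hypothesis-testing converse; under the multiplicity-free condition all $\cG$-twirled operators are simultaneously diagonal and the relaxation evaluates to $\Shannon(\cG(\Psi_A))$, the strong-converse statement following from the usual change-of-measure argument on the relaxation. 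Combining the achievability and these converses with $C\le C^\dagger$, Proposition~\ref{prop:enhanced-relation2}, Proposition~\ref{prop:regularized-asymmetry-of-assistance}, and Theorem~\ref{thm:enhanced-relation} yields~\eqref{eq:asymptotic characterization} (and, as a by-product, $A_G^\infty(\Psi_A)=\Shannon(\cG(\Psi_A))$) for all admissible $(\fE,\fD)$. The decisive obstacle throughout is the one-shot achievability with one-way LOCC decoders: one must pin down the correct smoothed single-shot figure of merit for the ``decouple with Fred, then cq-code with Bob'' strategy and prove that its regularization attains $\Shannon(\cG(\Psi_A))$ rather than the a priori smaller $A_G^\infty$ coming from any fixed decomposition.
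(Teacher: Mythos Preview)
Your sandwich reduction --- achievability for $(\EncG,\DecOne)$ plus strong converses for $(\EncPPT,\DecPPT)$ and $(\EncP,\DecSEP)$, glued together by Propositions~\ref{prop:regularized-asymmetry-of-assistance} and~\ref{prop:enhanced-relation2} and Theorem~\ref{thm:enhanced-relation} --- is exactly the paper's organization, and you correctly flag the one-way-LOCC achievability as the hard step.

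\textbf{Achievability: different construction.} The paper does not argue via HSW-with-side-information plus an asymptotic ``near-pure, block-incoherent'' ensemble. It gives an explicit one-shot protocol: Fred first applies a projective measurement coming from a linear surjective $2$-universal hash $f:\cN\to\cT$, then performs a \emph{Fourier-basis} measurement within the surviving block; Alice uses random group coding and Bob the cq-channel decoder. The multiplicity-free condition is used at one specific place: when averaging $U_g\proj{\phi_{A,l}}U_g^\dagger$ over $G$, cross terms between inequivalent irreducibles vanish, giving a clean product $\ol\sigma_A\ox\pi_F$. The resulting one-shot error bound is a Legendre transform of Petz--\Renyi quantities of $\xi_{AF}$, whose regularization is $\Shannon(A)_\xi$. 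Your ``per-block random phases folded into the outcome'' is morally the Fourier step, but the two-stage hash-then-Fourier structure and the \Renyi one-shot bound are what actually close the gap between $A_G$ and $\Shannon(\cG(\Psi_A))$.

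\textbf{Converses: a genuine gap.} Your transpose-on-$B$ reduction $(\EncPPT,\DecPPT)\to(\EncCP,\DecPPT)$ is correct and elegant (it is \emph{not} in the paper). However, the analogous move for $(\EncP,\DecSEP)$ is not a reduction: $((\cE^m)^*\ox\id_F)(\Gamma^{\wh m})$ is separable, but this ``decoder'' depends on $m$, so you do not obtain a single code in $(\EncCP,\DecSEP)$. More importantly, your final step --- ``symmetrize the decoder and apply a PPT hypothesis-testing relaxation'' --- is where the actual content must be, and as written it is not clear it yields $\Shannon(\cG(\Psi_A))$ rather than the larger global bound $\Shannon(\xi_{AF})$. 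The paper's mechanism is different and quite specific: it transposes on $F$ (not $B$), invokes the rank-one identity $|\tau_F(\proj{\Psi})|=\sqrt{\Psi_A}\ox\sqrt{\Psi_F}$, and combines it with a $G$-invariant typical projector $P_A$ and a spectral typical projector $P_F$ to get
\[
\tr\big[\Gamma^m\,\cE^m\!\big((P_A{\ox}P_F)\Psi^{\ox n}(P_A{\ox}P_F)\big)\big]\le e^{-n(\Shannon(\Psi_A)-\epsilon)}\tr[\Gamma^m(P_A\ox P_F)],
\]
whence $|\cM|\le(s(\cC)-2\epsilon)^{-1}e^{n(\Shannon(\cG(\Psi_A))+3\epsilon)}$. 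The only property of the code used is $\tau_F\big((\cE^m)^*(\Gamma^m)\big)\ge0$, which holds for both $(\EncPPT,\DecPPT)$ and $(\EncP,\DecSEP)$ --- so the paper handles both cases in one stroke, and this converse does \emph{not} require the multiplicity-free condition, whereas your route appears to.
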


\begin{remark}
Theorem~\ref{thm:asymptotic characterization} reveals the fact that
even when we enlarge the available encoder-decoder pair up to
$(\EncP, \DecSEP)$ or $(\EncPPT, \DecPPT)$,
we cannot improve the dense coding capacity compared to minimal encoder-decoder pair $(\EncG,\DecOne)$,
where the available encoders are the unitary encoding operations
and the available decoders are the one-way LOCC measurements.
\end{remark}

\begin{remark}
Theorem~\ref{thm:asymptotic characterization} shows that the dense coding capacities
under locality conditions all satisfy the desirable \emph{strong converse property}.
That is, for arbitrary dense coding code $\cC\in(\fE,\fD)$, where
$\fE\in\{\EncG, \EncCP, \EncPPT, \EncP\}$ and $\fD\in(\DecOne, \DecLOCC, \DecSEP, \DecPPT)$
except for $(\EncP,\DecPPT)$,
the decoding error necessarily converges to one in the asymptotic limit
whenever the coding rate exceeds the optimal rate $\Shannon\left(\cG(\Psi_A)\right)$.
We thus conclude that $\Shannon\left(\cG(\Psi_A)\right)$ is a very sharp
dividing line between the coding rates those are achievable and those are not,
ruling out the possibility of error-rate tradeoff in this dense coding task.
\end{remark}

\begin{remark}
Theorem~\ref{thm:asymptotic characterization} establishes an interesting equivalence
among three different quantities at the first glance: the operationally defined dense coding
capacity $C_{\fE,\fD}(\Psi_{AF})$~\eqref{eq:dense coding capacity},
the mathematically defined regularized asymmetry of assistance
$A_G^\infty\left(\Psi_A\right)$~\eqref{eq:egularized-asymmetry-of-assistance},
and the quantum entropy of the twirled quantum state $\Shannon\left(\cG(\Psi_A)\right)$.
See the triangle associations depicted in Figure~\ref{fig:relation-among-quantities}.
In this way, we provide the asymmetry measure $A_G^\infty$ with an operational meaning
in terms of the dense coding tasks.

\begin{figure}[htbp]
  \centering
  \includegraphics[width=0.5\textwidth]{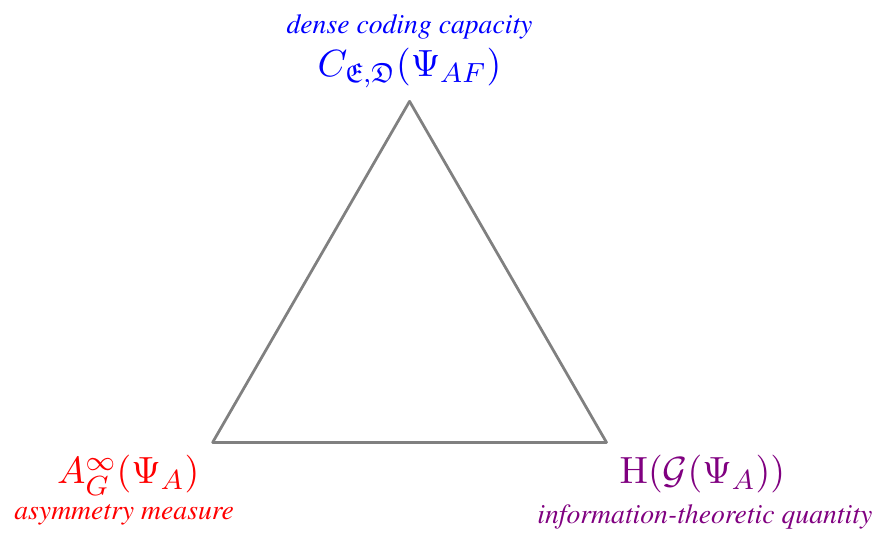}
  \caption{Equivalence among different quantities of $\Psi_{AF}$:
          the operationally defined dense coding capacity $C_{\fE,\fD}(\Psi_{AF})$,
          the mathematically defined regularized asymmetry of assistance $A_G^\infty(\Psi_A)$,
          and the information-theoretic quantity $\Shannon(\cG(\Psi_A))$.}
  \label{fig:relation-among-quantities}
\end{figure}
\end{remark}

As a direct corollary of Theorems~\ref{thm:enhanced-relation} and~\ref{thm:asymptotic characterization},
we conclude that even if Alice has access to the measurement outcomes sent by Fred
(cf. Figure~\ref{fig:ea-dense-coding-enhanced} for the enhanced dense coding framework),
the dense coding capability cannot be improved when compared to the standard dense coding framework
where only Bob can access to the measurement outcomes sent by Fred, when one-way LOCC decoders are available.
This, in some sense, indicates that Alice can choose the encoding operations completely
independent of the encoded state and yields an ``universal encoding'' strategy.

\begin{corollary}\Label{thm:asymptotic characterization2}
Let $\ket{\Psi}_{AF}$ be a bipartite pure quantum state. It holds that
\begin{align}
\wt{C}_{\fE,\DecOne}(\Psi_{AF})
= \wt{C}^\dagger_{\fE,\DecOne}(\Psi_{AF})
= A_G^\infty(\Psi_A) = \Shannon\left(\cG(\Psi_A)\right)
= \Shannon\left(A\right)_{\xi},
\end{align}
where $\fE\in\{\EncG,\EncCP,\EncPPT,\EncP\}$.
\end{corollary}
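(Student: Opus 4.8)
The plan is to read off the statement from a two-sided squeeze, using Theorem~\ref{thm:enhanced-relation} for the sandwiching inequalities and Theorem~\ref{thm:asymptotic characterization} to identify the lower end of the sandwich. Recall that Theorem~\ref{thm:enhanced-relation} already records the chains $C_{\fE,\DecOne}(\Psi_{AF})\leq\wt{C}_{\fE,\DecOne}(\Psi_{AF})\leq A_G^\infty(\Psi_A)$ and $C^\dagger_{\fE,\DecOne}(\Psi_{AF})\leq\wt{C}^\dagger_{\fE,\DecOne}(\Psi_{AF})\leq\Shannon(\cG(\Psi_A))$ for every $\fE\in\{\EncG,\EncCP,\EncPPT,\EncP\}$; the first inequality in each chain is the intuitively obvious fact that handing Fred's outcome to Alice (Figure~\ref{fig:ea-dense-coding-enhanced}) can only help over the standard one-way LOCC protocol (Figure~\ref{fig:ea-dense-coding}), while the second is the converse bound established in the appendix.

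First I would instantiate Theorem~\ref{thm:asymptotic characterization} with the decoder class $\fD=\DecOne$. This is a legal choice for all four encoder classes in the statement, since the only pair excluded there is $(\EncP,\DecPPT)$ and $\DecOne\neq\DecPPT$. Hence
\begin{align}
C_{\fE,\DecOne}(\Psi_{AF})=C^\dagger_{\fE,\DecOne}(\Psi_{AF})=A_G^\infty(\Psi_A)=\Shannon(\cG(\Psi_A))=\Shannon(A)_\xi
\end{align}
for every $\fE\in\{\EncG,\EncCP,\EncPPT,\EncP\}$.

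Combining the two displays completes the proof: from $A_G^\infty(\Psi_A)=C_{\fE,\DecOne}(\Psi_{AF})\leq\wt{C}_{\fE,\DecOne}(\Psi_{AF})\leq A_G^\infty(\Psi_A)$ we get $\wt{C}_{\fE,\DecOne}(\Psi_{AF})=A_G^\infty(\Psi_A)$, and likewise $\Shannon(\cG(\Psi_A))=C^\dagger_{\fE,\DecOne}(\Psi_{AF})\leq\wt{C}^\dagger_{\fE,\DecOne}(\Psi_{AF})\leq\Shannon(\cG(\Psi_A))$ forces $\wt{C}^\dagger_{\fE,\DecOne}(\Psi_{AF})=\Shannon(\cG(\Psi_A))$. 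Since Theorem~\ref{thm:asymptotic characterization} already gives $A_G^\infty(\Psi_A)=\Shannon(\cG(\Psi_A))=\Shannon(A)_\xi$, all five quantities in the corollary coincide. There is essentially no obstacle here: the entire analytic content has been absorbed into the two cited theorems, and the corollary is a pure sandwich argument; the only point worth double-checking is precisely the admissibility of the pair $(\EncP,\DecOne)$ in Theorem~\ref{thm:asymptotic characterization}, which holds because the sole exclusion is $(\EncP,\DecPPT)$.
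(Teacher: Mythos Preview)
Your proposal is correct and matches the paper's approach exactly: the paper presents this result as ``a direct corollary of Theorems~\ref{thm:enhanced-relation} and~\ref{thm:asymptotic characterization}'' without spelling out the sandwich argument, and your two-sided squeeze is precisely the intended reading. Your check that $(\EncP,\DecOne)$ is admissible in Theorem~\ref{thm:asymptotic characterization} is the only subtle point, and you handle it correctly.
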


\paragraph*{Dense coding capacities with local decoders.}
When only the local decoders $\DecL$ are available,
we can derive the following coding theorem.
Notice that our results on the local decoders recover~\cite[Theorem 3]{korzekwa2019encoding}
as a special case, where they have considered the case where $\fE=\EncCP$.
See Appendix~\ref{S6} for the proof.

\begin{theorem}[Dense coding capacity with local decoders]\Label{thm:asymptotic empty}
Let $\ket{\Psi}_{AF}$ be a bipartite pure quantum state and let $\Psi_A:=\tr_F\Psi_{AF}$.
It holds under Assumption~\ref{assp:multiplicity-free} (the multiplicity-free condition) that
\begin{align}\label{eq:asymptotic empty}
  C_{\fE,\DecL}(\Psi_{AF})
= C^\dagger_{\fE,\DecL}(\Psi_{AF})
= \Shannon\left(\cG(\Psi_{A})\right)-\Shannon\left(\Psi_{A}\right)
= \Rel\left(\Psi_{A}\| \cG(\Psi_{A})\right),
\end{align}
for $\fE\in\{\EncG, \EncCP, \EncPPT, \EncP\}$.
\end{theorem}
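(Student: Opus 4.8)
The last equality in \eqref{eq:asymptotic empty} is just Eq.~\eqref{eq:REA} (the twirl $\cG(\Psi_A)$ attains the minimum in the relative entropy of asymmetry), so the plan is to establish $C_{\fE,\DecL}(\Psi_{AF})=C^\dagger_{\fE,\DecL}(\Psi_{AF})=\Shannon(\cG(\Psi_A))-\Shannon(\Psi_A)$. For the achievability bound I would observe that with a local decoder Fred's register is simply discarded, so for the smallest encoder class $\EncG$ the task reduces to ordinary classical–quantum channel coding over $W\colon g\mapsto\cU_g(\Psi_A)$; the Holevo information of $W$ on the uniform (Haar) input is $\Shannon(\cG(\Psi_A))-\Shannon(\Psi_A)$, since each $\cU_g$ preserves the von Neumann entropy and the $G$-average of $\cU_g(\Psi_A)$ is by definition $\cG(\Psi_A)$. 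The Holevo--Schumacher--Westmoreland theorem (with a finite net approximation of $G$ and Fannes-type entropy continuity if $G$ is continuous) then gives $C_{\EncG,\DecL}(\Psi_{AF})\ge\Shannon(\cG(\Psi_A))-\Shannon(\Psi_A)$, and by monotonicity of the capacity in the encoder set (using $\EncG\subset\EncCP\subset\EncPPT\subset\EncP$) the same lower bound holds for all four encoder classes, hence also for the strong-converse capacities by Proposition~\ref{prop:enhanced-relation2}.

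For the matching converse the plan is to prove the \emph{strong} converse uniformly for all four encoder classes through the smooth max-relative entropy. The structural fact I would exploit is that the defining condition $\cE\circ\cG^{\ox n}=\cG^{\ox n}$ makes $\sigma_n:=\cG^{\ox n}(\Psi_A^{\ox n})=(\cG(\Psi_A))^{\ox n}$ a fixed point of every admissible $n$-copy encoder, i.e.\ $\cE(\sigma_n)=\cE(\cG^{\ox n}(\Psi_A^{\ox n}))=\cG^{\ox n}(\Psi_A^{\ox n})=\sigma_n$. Given an $n$-copy code with $M$ messages, positive trace-preserving encoders $\{\cE^m\}$ and a local POVM $\{\Lambda_m\}$ on $\cH_B^{\ox n}$ with average error $\le\varepsilon<1$, I would set $\rho_m:=\cE^m(\Psi_A^{\ox n})$ --- a genuine state since $\cE^m$ is positive and trace-preserving --- and use the elementary one-shot converse: for any state $\sigma$ and any $\delta<1-\varepsilon$, picking $\tilde\rho_m$ in the trace-distance $\delta$-ball around $\rho_m$ with $\tilde\rho_m\le 2^{D_{\max}^\delta(\rho_m\|\sigma)}\sigma$,
\begin{align}
1-\varepsilon \ \le\ \frac1M\sum_m\tr[\Lambda_m\rho_m]\ \le\ \frac1M\sum_m\big(\tr[\Lambda_m\tilde\rho_m]+\delta\big)\ \le\ \frac{2^{\max_m D_{\max}^\delta(\rho_m\|\sigma)}}{M}+\delta ,
\end{align}
because $\sum_m\Lambda_m=\1$ and $\tr\sigma=1$; this gives $\log M\le \max_m D_{\max}^\delta(\rho_m\|\sigma)+\log\tfrac{1}{1-\varepsilon-\delta}$.

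It then remains to bound $D_{\max}^\delta(\rho_m\|\sigma_n)$. Here I would invoke that $D_{\max}^\delta$ (with smoothing taken in trace distance) is monotone under any positive trace-preserving map fixing its second argument: positivity promotes $\rho\le 2^\lambda\sigma$ to $\cE(\rho)\le 2^\lambda\cE(\sigma)=2^\lambda\sigma$, and positive trace-preserving maps contract the trace norm, so the $\delta$-ball around $\rho$ is mapped into the $\delta$-ball around $\cE(\rho)$. Applying this with $\cE=\cE^m$ and $\sigma=\sigma_n$ yields $D_{\max}^\delta(\rho_m\|\sigma_n)\le D_{\max}^\delta(\Psi_A^{\ox n}\|(\cG(\Psi_A))^{\ox n})$, and the i.i.d.\ asymptotic equipartition property for the smooth max-relative entropy gives $\tfrac1n D_{\max}^\delta(\Psi_A^{\ox n}\|(\cG(\Psi_A))^{\ox n})\to\Rel(\Psi_A\|\cG(\Psi_A))$ for every fixed $\delta\in(0,1)$. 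Combining with the display above, $\tfrac1n\log M\le\Rel(\Psi_A\|\cG(\Psi_A))+o(1)$ for every $\varepsilon<1$, so $C^\dagger_{\fE,\DecL}(\Psi_{AF})\le\Rel(\Psi_A\|\cG(\Psi_A))=\Shannon(\cG(\Psi_A))-\Shannon(\Psi_A)$ for all four encoder classes; together with the achievability bound and $C\le C^\dagger$ this closes the proof. I expect the main obstacle to be the super-quantum class $\EncP$: the standard engine of quantum strong converses --- the data-processing inequality for the sandwiched \Renyi divergences $\SRRel{\alpha}$ with $\alpha>1$ --- fails for positive but non-completely-positive maps, which is precisely why the converse should be routed through $D_{\max}^\delta$, whose monotonicity under positive trace-preserving maps survives here because the reference state $\sigma_n$ is a fixed point of every admissible encoder.
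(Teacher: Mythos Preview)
Your achievability argument matches the paper's exactly: both reduce to HSW for the cq-channel $g\mapsto\cU_g(\Psi_A)$ with Haar input. For the strong converse, your route through the smooth max-relative entropy $D_{\max}^\delta$ and its i.i.d.\ AEP is correct but genuinely different from the paper's. The paper (Proposition~\ref{emp-strong-converse}) instead runs the Nagaoka meta-converse with the \emph{sandwiched} R\'enyi divergence $\SRRel{\alpha}$ for $\alpha>1$: it uses $\cE^m\circ\cG=\cG$ to write $\cG(\Psi_A)=\cE^m(\cG(\Psi_A))$ and then applies data processing of $\SRRel{\alpha}$ under the positive trace-preserving map $\cE^m$.

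This brings up the one factual slip in your proposal. You assert that data processing for $\SRRel{\alpha}$ with $\alpha>1$ ``fails for positive but non-completely-positive maps,'' and that this forces the detour through $D_{\max}^\delta$. That is not so: M\"uller-Hermes and Reeb~\cite{muller2017monotonicity} prove monotonicity of $\SRRel{\alpha}$ under positive trace-preserving maps for all $\alpha\ge 1/2$, and the paper invokes exactly this. So the paper's approach is available for $\EncP$ and moreover yields an explicit exponential strong-converse bound, which your $D_{\max}^\delta$ argument does not directly give. On the other hand, your argument is more elementary in that the monotonicity you need---$D_{\max}^\delta$ nonincreasing under a positive trace-preserving map that fixes the reference state---follows immediately from positivity (preserving the operator inequality $\tilde\rho\le 2^\lambda\sigma$) and trace-norm contraction, without the deeper operator-theoretic input behind~\cite{muller2017monotonicity}. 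Both routes arrive at the same conclusion; neither in fact uses the multiplicity-free hypothesis in the converse.
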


\paragraph*{Dense coding capacities with global decoders.}
When the global decoders $\DecG$ are available,
we actually identify an variant of the well-known dense coding
task~\cite{bennett1992communication,hiroshima2001optimal,bowen2001classical,horodecki2001classical,winter2002scalable,bruss2004distributed,beran2008nonoptimality,horodecki2012quantum,datta2015second,laurenza2019dense,wakakuwa2020superdense}
in which the available encoders in the system $A$ are constrained by the twirling operation $\cG_A$.
Notice that the following result has previously been discovered in~\cite[Theorem 3]{korzekwa2019encoding}.
\begin{proposition}[Dense coding capacity with global decoders~\cite{korzekwa2019encoding}]\Label{prop:Global decodings}
Let $\ket{\Psi}_{AF}$ be a bipartite pure quantum state. It holds that
\begin{align}
  C_{\EncG,\DecG}(\Psi_{AF})
= C_{\EncCP,\DecG}(\Psi_{AF})
= \Shannon\left((\cG_A\ox\id_F)(\Psi_{AF})\right).
\end{align}
\end{proposition}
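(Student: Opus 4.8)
The plan is to exploit the inclusion $\EncG\subset\EncCP$ (cf.\ Proposition~\ref{prop:enhanced-relation2}), which forces $C_{\EncG,\DecG}(\Psi_{AF})\le C_{\EncCP,\DecG}(\Psi_{AF})$, so that the proposition reduces to two bounds: an achievability bound $C_{\EncG,\DecG}(\Psi_{AF})\ge\Shannon(\xi_{AF})$ and a converse bound $C_{\EncCP,\DecG}(\Psi_{AF})\le\Shannon(\xi_{AF})$, where $\xi_{AF}=(\cG_A\ox\id_F)(\Psi_{AF})$ as in Eq.~\eqref{eq:xi-AF}. Together these pin all three quantities to the common value $\Shannon(\xi_{AF})$. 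Conceptually this is just the standard Bennett--Wiesner dense coding analysis, with the $G$-twirl $\cG_A$ playing the role of the full randomization of the sender's system that one uses for a maximally entangled resource.

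For achievability, note that with $\fE=\EncG$ every message is encoded by a single unitary channel $\cU_g$, and since $\DecG$ imposes no locality constraint on the joint POVM over $BF$, a code in $(\EncG,\DecG)$ for $\Psi_{AF}$ is exactly a code for the classical--quantum channel $g\mapsto\rho_g:=(\cU_g\ox\id_F)(\Psi_{AF})$ with classical input $G$ and quantum output $\cH_B\ox\cH_F$ (here $\cH_B\cong\cH_A$). By the Holevo--Schumacher--Westmoreland theorem the classical capacity of a cq-channel equals its single-letter Holevo quantity, so $C_{\EncG,\DecG}(\Psi_{AF})=\sup_p(\Shannon(\int\rho_g\,p(dg))-\int\Shannon(\rho_g)\,p(dg))$. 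Each $\rho_g$ is a unitary image of the pure state $\Psi_{AF}$, so $\Shannon(\rho_g)=0$ and the second term vanishes. For the first term, for any $h\in G$ the states $\int\rho_g\,p(dg)$ and $\int\rho_g\,p_h(dg)$, where $p_h$ is the pushforward of $p$ under $g\mapsto hg$, differ only by the unitary conjugation $\cU_h\ox\id_F$ and thus have equal entropy; averaging over $h$ against the Haar measure $\nu$, using that the Haar average of the translates $p_h$ equals $\nu$ itself, and invoking concavity of the von Neumann entropy gives $\Shannon(\int\rho_g\,p(dg))\le\Shannon(\int\rho_g\,\nu(dg))$ for every $p$. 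The right-hand side equals $\Shannon((\cG_A\ox\id_F)(\Psi_{AF}))=\Shannon(\xi_{AF})$, which yields the achievability bound (indeed an equality).

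For the converse, fix $n$ and $\varepsilon\in[0,1)$ and take any code $\cC\in(\EncCP,\DecG)$ for $\Psi_{AF}^{\ox n}$ with message set $\cM$, encoders $\cE^m\in\EncCP$, and a joint POVM on $B^nF^n$, the class $\EncCP$ for the $n$-fold system being understood with respect to the product group $G^{\times n}$, the product representation $U^{\ox n}$, and the corresponding twirl $\cG_A^{\ox n}$. Write $\rho^m_{BF}:=(\cE^m\ox\id_{F^n})(\Psi_{AF}^{\ox n})$ and $\bar\rho:=\frac{1}{|\cM|}\sum_m\rho^m_{BF}$. The Holevo bound gives $\Mutual(M{:}\wh{M})\le\Shannon(\bar\rho)-\frac{1}{|\cM|}\sum_m\Shannon(\rho^m_{BF})\le\Shannon(\bar\rho)$. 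The defining commutation relation $\cG_A^{\ox n}\circ\cE^m=\cG_A^{\ox n}$ of $\EncCP$ gives $(\cG_A^{\ox n}\ox\id_{F^n})(\rho^m_{BF})=(\cG_A^{\ox n}\ox\id_{F^n})(\Psi_{AF}^{\ox n})=\xi_{AF}^{\ox n}$ for every $m$, hence $(\cG_A^{\ox n}\ox\id_{F^n})(\bar\rho)=\xi_{AF}^{\ox n}$. Since $\cG_A^{\ox n}\ox\id_{F^n}$ is a mixture of unitary conjugations it is unital, hence does not decrease the von Neumann entropy, so $\Shannon(\bar\rho)\le\Shannon(\xi_{AF}^{\ox n})=n\Shannon(\xi_{AF})$. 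Combining with Fano's inequality and taking the $n\to\infty$ and $\varepsilon\to0$ limits of Definition~\ref{def:dense coding capacity} yields $C_{\EncCP,\DecG}(\Psi_{AF})\le\Shannon(\xi_{AF})$, which completes the proof together with the inclusion $\EncG\subset\EncCP$.

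The main obstacle is the Haar-optimality step in the achievability argument---showing that translating the input distribution leaves the entropy of the average output state invariant so that concavity forces the uniform (Haar) input to be optimal. The remaining ingredients---the vanishing of $\Shannon(\rho_g)$ for pure $\Psi_{AF}$, the pinning identity $(\cG_A^{\ox n}\ox\id_{F^n})(\bar\rho)=\xi_{AF}^{\ox n}$ forced by the commutation relation, and entropy monotonicity under unital channels---are routine once the cq-channel picture and the $\EncCP$-commutation relation are set up.
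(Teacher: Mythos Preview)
The paper does not supply its own proof of this proposition; it simply attributes the result to \cite[Theorem~3]{korzekwa2019encoding} and moves on. Your argument is correct and self-contained: on the achievability side you correctly identify $(\EncG,\DecG)$-coding with coding for the cq-channel $g\mapsto(\cU_g\ox\id_F)(\Psi_{AF})$, apply HSW (single-letter for cq-channels), and use purity of the outputs together with the left-translation/concavity trick to show the Haar input is optimal with Holevo value $\Shannon(\xi_{AF})$; on the converse side you use the defining relation $\cG^{\ox n}\circ\cE^m=\cG^{\ox n}$ of $\EncCP$ to pin $(\cG_A^{\ox n}\ox\id)(\bar\rho)=\xi_{AF}^{\ox n}$, bound $\Shannon(\bar\rho)\le n\,\Shannon(\xi_{AF})$ via unitality of the twirl, and finish with Fano. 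There is nothing in the paper itself to compare against, but your cq-channel viewpoint is exactly the one the paper invokes informally in the introduction when it calls such channels \emph{cq-symmetric}, so your write-up is in the intended spirit.
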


\begin{remark}
By Eq.~\eqref{eq:QQbwQRNzEZSI}, Theorem \ref{thm:asymptotic empty} can be rewritten as
\begin{align}\Label{eq:asymptotic empty2}
  C_{\fE,\DecL}(\Psi_{AF})
= C^\dagger_{\fE,\DecL}(\Psi_{AF})
= \Shannon\left(\cG(\Psi_{A})\right)-\Shannon\left(\Psi_{A}\right)
= \Shannon(A)_\xi-\Shannon(F)_\xi.
\end{align}
And also, Proposition \ref{prop:Global decodings} can be rewritten as
\begin{align}\label{eq:global decoding capacity}
  C_{\EncG,\DecG}(\Psi_{AF})
= C_{\EncCP,\DecG}(\Psi_{AF})
= \Shannon\left((\cG_A\ox\id_F)(\Psi_{AF})\right)
= \Shannon(A)_\xi + \Shannon(F|K)_\xi.
\end{align}
Writing $\Shannon\left(A\right)_{\xi}$
as $\Shannon\left(A\right)_{\xi} = \Shannon\left(A\right)_{\xi} - \Shannon(F)_\xi + \Shannon(F)_\xi$
and comparing Eqs.~\eqref{eq:asymptotic characterization} and~\eqref{eq:asymptotic empty2},
we can see that the dense coding capacity under locality conditions can be interpreted as the sum of
the amount of asymmetry $\Shannon(A)_\xi-\Shannon(F)_\xi$ reserved in the quantum state $\Psi_A$
and the amount of assistance $\Shannon(F)_\xi$ from the non-local
decoders $\fD\in\{\DecOne, \DecLOCC, \DecSEP, \DecPPT\}$.
On the other hand, Eqs.~\eqref{eq:asymptotic characterization} and~\eqref{eq:global decoding capacity} together
imply that that $\Shannon(F|K)_\xi$ can be viewed as the merit of global decoders.
\end{remark}

\section{Examples}\Label{sec:examples}

In this section, we compute the dense coding capacities
$C_{\EncG,\DecL}(\Psi_{AF}) ,C_{\EncG,\DecOne}(\Psi_{AF}),C_{\EncG,\DecG}(\Psi_{AF})$
for specialized resource theories of asymmetry of practical interest.
Notice that by Theorems~\ref{thm:asymptotic characterization}
and~\ref{thm:asymptotic empty} and Proposition~\ref{prop:Global decodings},
it suffices to evaluate these three capacities.

\subsection{Dense coding power of purity}
First, we consider the case when the (projective) unitary representation $U$ is \textit{irreducible} on $\cH_A$.
For example, $G$ can be the group of unitary matrices on $\cH_A$.
Also, when $G$ is the discrete Weyl-Heisenberg group on $\cH_A$,
the corresponding $U$ forms a irreducible projective unitary representation.
In this case, the twirling operation $\cG$ becomes the completely depolarizing channel
such that $\cG(\rho_A) = \1_A/d_A$ for all $\rho_A\in\density{\cH_A}$,
where $d_A$ is the dimension of system $A$. Correspondingly, the induced resource theory
is known as the resource theory of purity~\cite{horodecki2003reversible,gour2015resource,streltsov2016maximal}.
For this resource theory, we have
\begin{subequations}
\begin{align}
    C_{\EncG,\DecL}(\Psi_{AF}) &= \log d_A - \Shannon(\Psi_A),\Label{eq:full1} \\
    C_{\EncG,\DecOne}(\Psi_{AF}) &= \log d_A,\Label{eq:full2} \\
    C_{\EncG,\DecG}(\Psi_{AF}) &= \log d_A + \Shannon(\Psi_A),\Label{eq:full3}
\end{align}
\end{subequations}
where the first equality was previously concluded in~\cite[Eq.~(13b)]{korzekwa2019encoding}, the second equality follows
from Theorem~\ref{thm:asymptotic characterization}, and the last equality follows from Proposition~\ref{prop:Global
decodings}. Comparing Eqs.~\eqref{eq:full1}-\eqref{eq:full3}, we obtain a strict communication power hierarchy
among different classes of decoders in the dense coding task, whenever $\Psi_A$
is \emph{mixed} and thus its quantum entropy is strictly positive:
\begin{align}
    C_{\EncG,\DecL}(\Psi_{AF}) < C_{\EncG,\DecOne}(\Psi_{AF}) < C_{\EncG,\DecG}(\Psi_{AF}).\Label{eq:full3R}
\end{align}

\subsection{Dense coding power of coherence}

When $G$ is a group of unitaries diagonal in a given basis $\{\ket{b}\}$ of system $A$ (i.e., it is a
subgroup of commuting unitaries), $\cG$ becomes the completely dephasing channel
$\Delta(\rho_A) = \sum_b\bra{b}\rho_A\ket{b}\proj{b}$ for all $\rho_A\in\density{\cH_A}$
and we recover the intensively studied resource theory of coherence~\cite{streltsov2017colloquium}.
In this case, the encoders do not change the diagonal elements of the quantum state in the given basis
$\cB$ but only affect the off-diagonal elements.
Investigating the dense coding task under this resource theory corresponds to
asking how much classical information can be encoded into the quantum coherence resource.
We have the following results
\begin{subequations}
\begin{align}
    C_{\EncG,\DecL}(\Psi_{AF}) &= \Shannon(\Delta(\Psi_A)) - \Shannon(\Psi_A),\Label{eq:coherence1} \\
    C_{\EncG,\DecOne}(\Psi_{AF}) &= \Shannon(\Delta(\Psi_A)),\Label{eq:coherence2} \\
    C_{\EncG,\DecG}(\Psi_{AF}) &= \Shannon(\Delta(\Psi_A)),\Label{eq:coherence3}
\end{align}
\end{subequations}
where the first equality was previously concluded in~\cite[Eq.~(19b)]{korzekwa2019encoding}, the second equality follows
from Theorem~\ref{thm:asymptotic characterization}, and the last equality follows from Proposition~\ref{prop:Global
decodings} and the fact that~\cite[Theorem 4]{chitambar2016assisted}
\begin{align}
    \Rel\left(\Psi_{AF}\rel\Delta_A(\Psi_{AF})\right)=\Shannon(\Delta(\Psi_A)),
\end{align}
whenever $\Psi_{AF}$ is pure.
The fact that $C_{\EncG,\DecOne}(\Psi_{AF})=C_{\EncG,\DecG}(\Psi_{AF})$ remarkably shows that
global decoding has no advantage over the one-way LOCC decoding
for the dense coding task within the resource theory of quantum coherence.
When the reduced density $\Psi_{A}$ is diagonal, $C_{\EncG,\DecL}(\Psi_{AF})$ evaluates to $0$,
indicating that incoherent quantum states have no communication power under our setting.

Indeed, the above discussion can be applied even when
the group $G$ is the one-dimensional group $\mathbb{R}$ in the following case.
Consider a diagonal Hermitian operator $H$ whose diagonal elements are different.
Then, we consider the unitary representation of $\mathbb{R}$
as $x \mapsto e^{i H}$.
Each one-dimensional space generated by diagonal element
is different irreducible component.
Hence, we can apply the above discussion.
When each diagonal element of $H$ is an integer,
the above can be considered as the unitary representation of the compact group
$[0,2\pi)$.

We remark that~\cite[Section IV]{shi2020practical} addressed the special case of
$C_{\EncG,\DecG}(\Psi_{AF})$ when $\Psi_{AF}$ is a two-mode squeezed vacuum (TMSV) state
\begin{align}
  \ket{\Psi}_{AF} = \sum_{n=0}^{\infty} \sqrt{N^n/(N+1)^{n+1}} \ket{n}_A\ket{n}_F.
\end{align}
For this special case, we have $\Shannon(\Delta(\Psi_A)) =\Shannon(\Psi_A)=(N+1)\log (N+1)-N\log N$.
Hence, if only local decoders $\DecL$ are available, Alice cannot transmit classical information to Bob
via the quantum state $\Psi_{AF}$.

\subsection{Dense coding power of Schur duality}

Assume now that $\cH_A=\cH^{\ox N}$, where $\cH$ is a $d$-dimensional Hilbert space.
That is, $A$ is $n$-partite system with equal local dimensions $d$.
The group $\unitary{\cH}$ has the unitary representation
$\{U^{\ox N}\}_{U\in\unitary{\cH}}$ on $\cH_A$.
Let $S(N)$ be the set of permutations $\pi:[N]\to[N]$. Let
$\pi\in S(N)$ be a permutation and let $W_\pi$ be the permutation unitary in $\cH_A$
induced by $\pi$. Such a unitary reorders the output systems according to $\pi$.
In this case, $\cH_A$ is decomposed to
\begin{align}
\cH= \bigoplus_{\lambda } {\cal U}_\lambda \ox {\cal V}_\lambda  ,
\end{align}
where ${\cal U}_\lambda$ is an irreducible space of the group $\unitary{\cH}$ and
${\cal V}_\lambda$ is an irreducible space of the permutation group $S(N)$.

When the group $G$ is chosen as $\unitary{\cH}$,
the multiplicity-free condition is not satisfied because
the dimension of ${\cal V}_\lambda$ shows the multiplicity of the representation
${\cal U}_\lambda $.
When the group $G$ is chosen as $S(N)$,
the multiplicity-free condition is not satisfied because
the dimension of ${\cal U}_\lambda$ shows the multiplicity of the representation
${\cal V}_\lambda $.
However, when the group $G$ is chosen as $\unitary{\cH}\times S(n)$,
the multiplicity-free condition is satisfied because
the spaces ${\cal U}_\lambda \ox {\cal V}_\lambda $
are different irreducible spaces.

In the following, to evaluate the capacities $C_{\EncG,\DecL}(\Psi_{AF})
,C_{\EncG,\DecOne}(\Psi_{AF}),C_{\EncG,\DecG}(\Psi_{AF})$,
we assume that $d=2$ and $\Psi_A=\rho^{\ox N}$ with a density matrix $\rho$ on $\cH=\mathbb{C}^2$.
Also, we assume that the eigenvalues of $\rho$ are $p$ and $1-p$ with $1\ge 2p$.
Then, $\Shannon(\Psi_A)=\Shannon(F)_\xi$ is $N h(p)$, where $h$ is the binary entropy.
In this case, the irreducible space is labeled by $k=0, \ldots, \lfloor N/2\rfloor$.
The dimension of the irreducible space $\cH_k$ is
$(N+1-k)({N \choose k}-{N \choose k-1})$, where ${N \choose -1}$ is defined to be $0$. Then we have
\begin{align}
  P_K(k)=\left({N \choose k}-{N \choose k-1}\right)q_k,
\end{align}
where $q_k:=\frac{p^k(1-p)^{N-k+1}-p^{N-k+1}(1-p)^k}{1-2p}$. Hence,
\begin{align}
\Shannon(K)_\xi &= - \sum_{k=0}^{\lfloor N/2\rfloor}
      q_k \left({N \choose k}-{N \choose k-1}\right) \log q_k \left({N \choose k}-{N \choose k-1}\right), \\
\Shannon(A)_\xi &= - \sum_{k=0}^{\lfloor N/2\rfloor}q_k
      \left({N \choose k}-{N \choose k-1}\right)\log \frac{q_k}{N+1-k}.
\end{align}
Therefore,
\begin{subequations}
\begin{align}
  C_{\EncG,\DecL}(\Psi_{AF})
&= - \sum_{k=0}^{\lfloor N/2\rfloor} q_k\left({N \choose k}-{N \choose k-1}\right)
        \log \frac{q_k}{N+1-k} -Nh(p),\Label{eq:L1} \\
  C_{\EncG,\DecOne}(\Psi_{AF})
&= - \sum_{k=0}^{\lfloor N/2\rfloor} q_k \left({N \choose k}-{N \choose k-1}\right)
        \log\frac{q_k}{N+1-k},\Label{eq:L2} \\
    C_{\EncG,\DecG}(\Psi_{AF})
&=  \Shannon(A)_\xi+\Shannon(F|K)_\xi = \Shannon(A)_\xi+\Shannon(F)-\Shannon(K)_\xi \\
&=  \sum_{k=0}^{\lfloor N/2\rfloor} q_k \left({N \choose k}-{N \choose k-1}\right)
      \log (N+1-k) \left({N \choose k}-{N \choose k-1}\right) + Nh(p).\Label{eq:L3}
\end{align}
\end{subequations}

We visualize these three dense coding capacities as functions of $N$ in Figure~\ref{fig:schur-duality}.
From the figure we can clearly see the dense coding power hierarchy of different decoders:
the less the locality constraint on the decoders, the larger the corresponding dense coding capacity.

\begin{figure}[!htbp]
  \centering
  \includegraphics[width=0.7\textwidth]{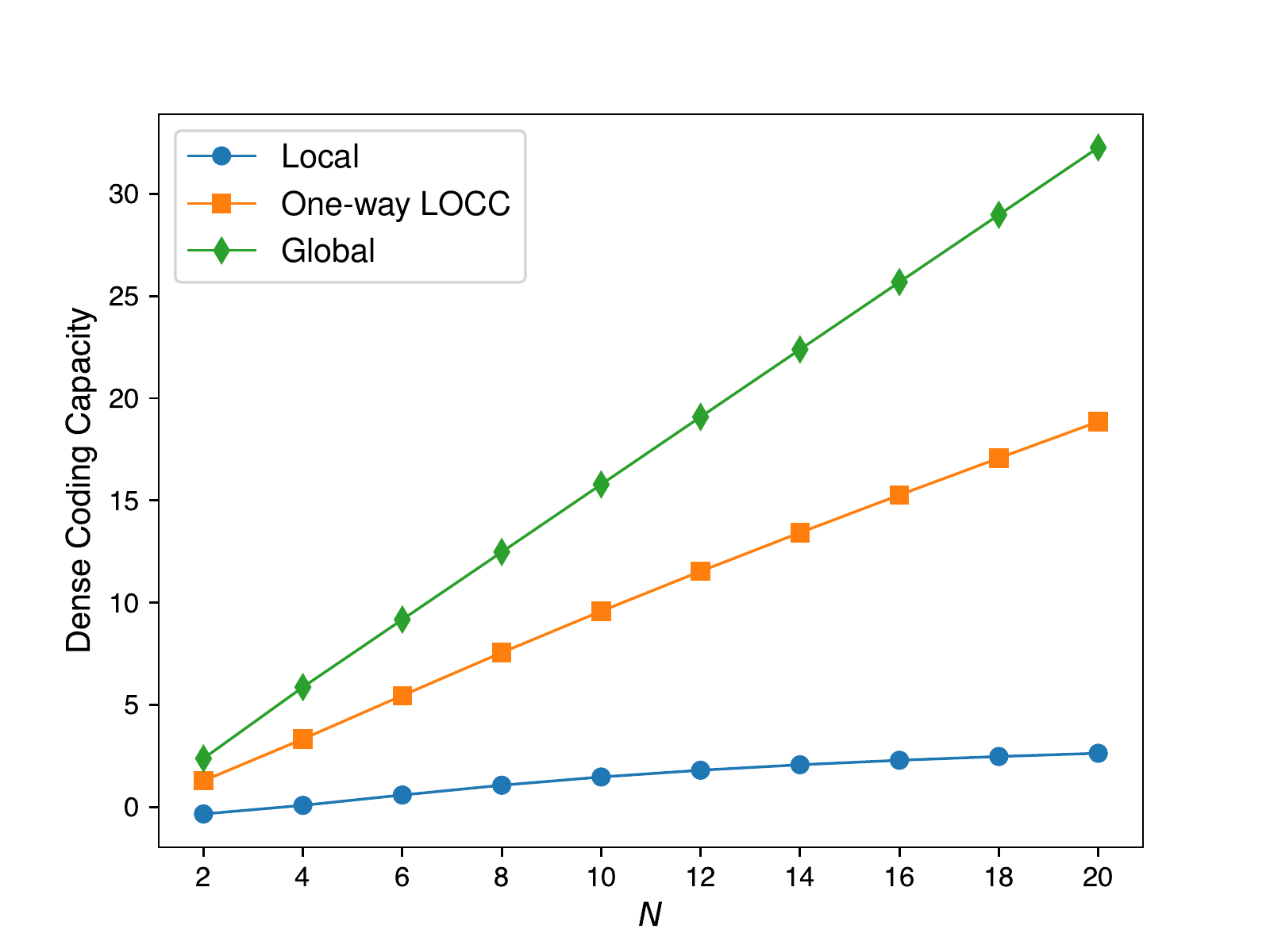}
  \caption{Three dense coding capacities---$C_{\EncG,\DecL}(\Psi_{AF})$ with local encoders,
  $C_{\EncG,\DecOne}(\Psi_{AF})$ with one-way LOCC decoders,
  and $C_{\EncG,\DecG}(\Psi_{AF})$ with global decoders---as functions of $N$,
  where $N$ is the number of identical parties. Parameter $p$ is set to $p=1/4$.}
  \Label{fig:schur-duality}
\end{figure}

\section{Extension to non-quantum preshared state}\Label{S8}

In the above dense coding framework, we assume that the preshared resource on the
bipartite system $\cH_{AF}$ is a bipartite quantum state (positive semidefinite operator with unit trace).
However, if the decoders are limited to separable measurements $\DecSEP$ or PPT measurements $\DecPPT$,
it is theoretically possible that the state on the bipartite system $\cH_{AF}$ is not a quantum state,
that is, we can loosen the positive semidefiniteness constraint.

As a demonstrative example, we assume that the available decoders are separable measurements $\DecSEP$.
We denote the cone composed of separable operators on the bipartite system $\cH_{AF}$ by $\SEP$ and
the dual cone by $\SEP^*$. Then, we define the set $\cS(\SEP^*):=\{ \rho \in \SEP^*| \tr \rho=1\}$.
For the preshared ``resource'' $\rho_{AF} \in \cS(\SEP^*)$, we can define analogously
the dense coding capacities
\begin{align}
  & C^\varepsilon_{\fE,\fD}(\rho_{AF}),\quad
    C_{\fE,\fD}(\rho_{AF}),\quad
    C^\dagger_{\fE,\fD}(\rho_{AF}), \\
  & \wt{C}^\varepsilon_{\fE,\DecOne}(\rho_{AF}),\quad
    \wt{C}_{\fE,\DecOne}(\rho_{AF}),\quad
    \wt{C}^\dagger_{\fE,\DecOne}(\rho_{AF}),
\end{align}
where $\fE\in\{\EncG,\EncCP,\EncPPT, \EncP\}$ and $\fD\in\{\DecOne,\DecLOCC,\DecSEP\}$
in the same way as Section~\ref{sec:task}.

Similarly, we denote the cone composed of PPT operators on the bipartite system $\cH_{AF}$
by $\PPT$ and the dual cone by $\PPT^*$.
Then, we define the set $\cS(\PPT^*):=\{ \rho \in \PPT^*| \tr \rho=1\}\subset \cS(\SEP^*)$.
For the preshared ``resource'' $\rho_{AF}^\prime \in \cS(\PPT^*)$, we can define in the same way
the dense coding capacities
\begin{align}
    C^\varepsilon_{\fE,\DecPPT}(\rho^\prime_{AF}),\quad
    C_{\fE,\DecPPT}(\rho_{AF}^\prime),\quad
    C^\dagger_{\fE,\DecPPT}(\rho_{AF}^\prime),\quad
\end{align}
where $\fE\in\{\EncG,\EncCP,\EncPPT\}$.

Regarding the above non-quantum preshared state extension,
we have the following strong converse theorem,
much like the strong converse parts of Theorems \ref{thm:enhanced-relation} and  \ref{thm:asymptotic characterization}.
See Appendix~\ref{appx:w-con2} for the proof.

\begin{theorem}[Strong converse part]\Label{thm:w-con2}
For $\rho_{AF} \in \cS(\SEP^*)$ and $\rho_{AF}^\prime \in \cS(\PPT^*)$, it holds that
\begin{align}
\wt{C}_{\EncP,\DecOne}^\dagger(\rho_{AF}) &\leq \Shannon\left(\cG(\rho_A)\right),\Label{eq:w-con2-1}\\
C_{\EncP,\DecSEP}^\dagger \left(\rho_{AF}\right) &\le \Shannon\left(\cG(\rho_A)\right),\Label{eq:w-con2-2}\\
C_{\EncPPT,\DecPPT}^\dagger \left(\rho_{AF}^\prime\right) &\le \Shannon\left(\cG(\rho_A^\prime)\right).\Label{eq:w-con2-3}
\end{align}
\end{theorem}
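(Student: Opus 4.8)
The plan is to transfer, essentially line by line, the strong-converse arguments already established for genuine quantum preshared states---Eq.~\eqref{eq:enhanced-relation-3} of Theorem~\ref{thm:enhanced-relation} and the strong-converse half of Theorem~\ref{thm:asymptotic characterization} for $\DecSEP$ and $\DecPPT$---to the cone-based setting, keeping careful track of where positive semidefiniteness of the preshared object is actually invoked. First I would check well-posedness: a trace-preserving positive map $\cE$ on $A$ maps $\cS(\SEP^*)$ into itself, since for any product vector $\ket{\alpha}_A\ket{\beta}_F$ one has $\tr\big[(\cE\ox\id_F)(\rho_{AF})(\proj{\alpha}_A\ox\proj{\beta}_F)\big]=\tr\big[\rho_{AF}(\cE^\dagger(\proj{\alpha}_A)\ox\proj{\beta}_F)\big]\ge 0$ because $\cE^\dagger$ is positive, so $\cE^\dagger(\proj{\alpha}_A)\ox\proj{\beta}_F$ is separable; the PPT analogue follows by the same manipulation combined with the identity $\tr[MN^{T_F}]=\tr[M^{T_F}N]$. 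Hence $\sigma^m_{BF}:=(\cE^m\ox\id_F)(\rho_{AF})$ remains in $\SEP^*$ (resp.\ $\PPT^*$), pairing it with a separable (resp.\ PPT) decoder POVM element keeps every $q_{\wh MM}(\wh m\vert m)$ nonnegative, and the codes are genuine. Moreover $\rho_A=\tr_F\rho_{AF}$, $\rho_F=\tr_A\rho_{AF}$ and the twirled operator $\xi_{AF}=(\cG_A\ox\id_F)(\rho_{AF})$ (cf.\ Eq.~\eqref{eq:xi-AF}) are all genuine (indeed separable) states---again by testing against product vectors, using that $\proj{\psi}_A\ox\1_F$ and $P_k\ox\proj{\phi}_F$ are separable---so $\Shannon(\cG(\rho_A))$ is meaningful; and the symmetry-preserving condition $\cG_A\circ\cE^m=\cG_A$ forces the $m$-independent identities $(\cG_B\ox\id_F)(\sigma^m_{BF})=\xi_{BF}$, $\cG_B(\sigma^m_B)=\cG(\rho_A)$ and $\sigma^m_F=\rho_F$.

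For~\eqref{eq:w-con2-1} the adaptation is cleanest. Once Fred applies his POVM $\{N^j_F\}$ on $F$, both the probability $p_j=\tr[(\1_A\ox N^j_F)\rho_{AF}]$ and the conditional object $\omega^j_A:=p_j^{-1}\tr_F[(\1_A\ox N^j_F)\rho_{AF}]$ are genuine states, because $\1_A\ox N^j_F$ and $\proj{\psi}_A\ox N^j_F$ are separable; one has $\sum_j p_j\omega^j_A=\rho_A$, and Alice's positive encoder keeps each $\cE^{m,j}(\omega^j_A)\ge 0$ genuine. Thus, conditioned on Fred's outcome $j$, the enhanced protocol is an ordinary \emph{non-assisted} quantum dense coding code with symmetry-preserving (positive) encoders for the genuine state $\omega^j_A$, to which the one-shot \Renyi strong-converse estimate of Appendix~\ref{appx:thm:enhanced-relation} applies per $j$ with a reference operator attached to $\cG(\omega^j_A)$. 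I would then average over $j$ and use concavity of the von Neumann entropy, $\sum_j p_j\Shannon(\cG(\omega^j_A))\le\Shannon\big(\sum_j p_j\cG(\omega^j_A)\big)=\Shannon(\cG(\rho_A))$, together with $\cG(\rho_A^{\ox n})=\cG(\rho_A)^{\ox n}$, to reach $\wt C^\dagger_{\EncP,\DecOne}(\rho_{AF})\le\Shannon(\cG(\rho_A))$ in the $n\to\infty$, $\varepsilon<1$ limit.

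For~\eqref{eq:w-con2-2} and~\eqref{eq:w-con2-3} I would re-run the separable/PPT converse of Appendix~\ref{S4B}. Writing $\sigma^m_{BF}=\xi_{BF}+\Delta^m_{BF}$ with $(\cG_B\ox\id_F)(\Delta^m_{BF})=0$, the success probability splits as $\tfrac1{\vert\cM\vert}\sum_m\tr[\Gamma^m_{BF}\xi_{BF}]+\tfrac1{\vert\cM\vert}\sum_m\tr[\Gamma^m_{BF}\Delta^m_{BF}]$, where the first term equals $1/\vert\cM\vert$ and is negligible, so only the second must be controlled. Here the decoder structure is essential: for a separable $\Gamma_{BF}$ one has $\Gamma_{BF}\le\1_B\ox\tr_B[\Gamma_{BF}]$ with the complement $\1_B\ox\tr_B[\Gamma_{BF}]-\Gamma_{BF}$ itself \emph{separable}, so this estimate survives contraction against $\sigma^m_{BF}\in\SEP^*$; in the PPT case the analogous step rests on $(\Gamma^m_{BF})^{T_F}\ge 0$ via $\tr[\Gamma^m_{BF}\sigma^m_{BF}]=\tr[(\Gamma^m_{BF})^{T_F}(\sigma^m_{BF})^{T_F}]$ (Rains-type). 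Feeding this, together with the $m$-independent block weights $P_K(k)=\tr[P_k\sigma^m_B]$, into the sandwiched \Renyi ($\alpha>1$) strong-converse computation with reference operator built from $\cG(\rho_A)$ (resp.\ $\cG(\rho_A^\prime)$) yields $C^\dagger_{\EncP,\DecSEP}(\rho_{AF})\le\Shannon(\cG(\rho_A))$ and $C^\dagger_{\EncPPT,\DecPPT}(\rho_{AF}^\prime)\le\Shannon(\cG(\rho_A^\prime))$ after regularization.

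The main obstacle, I expect, is precisely that $\sigma^m_{BF}$ is in general \emph{not} positive semidefinite, so none of the standard strong-converse tools---sandwiched \Renyi divergences, operator-monotone-function inequalities, pinching---can be applied to $\sigma^m_{BF}$ directly. The entire computation must be arranged so that positivity and support conditions are only ever invoked on the genuine states $\xi_{BF}$, $\cG(\rho_A)$, $\omega^j_A$, or only after $\sigma^m_{BF}$ has been contracted against a separable/PPT operator whose membership in the appropriate dual cone is exactly what keeps the resulting trace nonnegative. The enhanced one-way case~\eqref{eq:w-con2-1} sidesteps this difficulty (everything becomes genuinely quantum after Fred's measurement), so the delicate bookkeeping is concentrated in~\eqref{eq:w-con2-2} and, above all, in~\eqref{eq:w-con2-3}, where the partial transposition interacts with the positive-map encoder.
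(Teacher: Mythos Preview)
Your treatment of~\eqref{eq:w-con2-1} is essentially the paper's: once Fred measures, the conditional objects $\omega^j_A$ are genuine states and the R\'enyi argument of Appendix~\ref{appx:thm:enhanced-relation} applies verbatim.

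For~\eqref{eq:w-con2-2} and~\eqref{eq:w-con2-3}, however, the paper takes a much shorter route that bypasses all the delicate bookkeeping you anticipate. The key observation is a Choi--Jamio{\l}kowski-type correspondence on the $F$ side: for $\rho_{AF}\in\cS(\SEP^*)$ with (genuine) marginal $\rho_A$, there exists a trace-preserving \emph{positive} map $\cE_F$ on $\cH_F$ with $\rho_{AF}=(\id_A\otimes\cE_F)(\Psi_{AF})$, where $\Psi_{AF}$ is a purification of $\rho_A$. (Writing $\Psi_{AF}=\sum_i\sqrt{\lambda_i}\ket{i}_A\ket{i}_F$ forces $\cE_F(\proj{\psi})$ to be the operator $\bra{\bar\psi}_A(\rho_A^{-1/2}\otimes\1)\rho_{AF}(\rho_A^{-1/2}\otimes\1)\ket{\bar\psi}_A$, and its positivity for every $\ket{\psi}$ is \emph{exactly} the statement that $\rho_{AF}$ pairs nonnegatively with every product vector, i.e.\ $\rho_{AF}\in\SEP^*$.) Given this, any code $\cC=(\{\cE^m\},\{\Gamma^m\})\in(\EncP,\DecSEP)$ for $\rho_{AF}^{\ox n}$ induces a code $\hat\cC=\bigl(\{\cE^m\},\{(\id_A\otimes\cE_F^{*})^{\ox n}(\Gamma^m)\}\bigr)$ for the \emph{genuine quantum} state $\Psi_{AF}^{\ox n}$ with $s(\hat\cC)=s(\cC)$; since $\cE_F^*$ is positive and unital, the pulled-back decoder is still a separable POVM. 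Theorem~\ref{thm:con} (Eq.~\eqref{LO2}) now applies directly and yields~\eqref{eq:w-con2-2}. The PPT case~\eqref{eq:w-con2-3} is handled the same way with $\cE_F'\in\channel{F\to F}_{\rm ppt}$ and Eq.~\eqref{LO1}.

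This reduction sidesteps entirely the obstacle you identify: one never applies divergences, pinching, or operator inequalities to the non-positive object $\sigma^m_{BF}$. Your plan to ``re-run'' the Appendix~\ref{S4B} converse in place is not wrong in spirit, but the sketch is too vague to assess---the split $\sigma^m_{BF}=\xi_{BF}+\Delta^m_{BF}$ and the separable-complement inequality $\Gamma_{BF}\le\1_B\otimes\tr_B\Gamma_{BF}$ are both correct, yet it is unclear how they feed into ``the sandwiched R\'enyi computation'' to produce $\Shannon(\cG(\rho_A))$. Note in particular that the proof of Theorem~\ref{thm:con} does \emph{not} use R\'enyi quantities: it uses typical projectors together with the rank-one partial-transpose identity~\eqref{M6}, which is unavailable for a general $\rho_{AF}$. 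The pull-back trick makes this whole line of argument unnecessary.
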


However, we are not able to prove the direct part as that of Theorem \ref{thm:asymptotic characterization}
for the the non-quantum preshared state extension expect that some additional conditions are satisfied.
The results are summarized in the following theorem. See Appendix~\ref{appx:w-con2} for the proof.

\begin{theorem}[Direct part]\Label{thm:asymptotic-direct2}
Assume that the (projective) unitary representation $U$ on $\cH_A$ satisfies
Assumption~\ref{assp:multiplicity-free} (the multiplicity-free condition).
Given $\rho_{AF} \in \cS(\SEP^*)$ and $\rho_{AF}' \in \cS(\PPT^*)$,
we choose purifications $ \Psi_{AF}$ and $ \Psi_{AF}$ of $\rho_A$ and $\rho_A'$, respectively.
It holds that
\begin{itemize}
  \item If there exists a trace-preserving positive operation $\cE_F\in\channel{F\to F}_{\rm p}$ such that $\cE_F(\rho_{AF})= \Psi_{AF}$, we have
        \begin{align}
         \Shannon\left(\cG(\rho_A)\right) \le C_{\EncG,\DecOne}(\rho_{AF})\Label{LS12}.
        \end{align}
  \item If there exists a trace-preserving operation $\cE_F'\in \channel{F\to F}_{\rm ppt}$ such that $\cE_F'(\rho_{AF}')= \Psi_{AF}'$, we have
        \begin{align}
         \Shannon\left(\cG(\rho_A')\right) \leq C_{\EncG,\DecOne}(\rho_{AF}')\Label{LS13}.
        \end{align}
\end{itemize}
\end{theorem}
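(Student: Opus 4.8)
The plan is to reduce the direct part to the quantum case, Theorem~\ref{thm:asymptotic characterization}, by transporting near-optimal dense coding codes back along the local map $\cE_F$ (resp.\ $\cE_F'$). Since $\Psi_{AF}$ purifies $\rho_A$ we have $\Psi_A=\rho_A$, and since $(\EncG,\DecOne)$ is not one of the excluded pairs, Theorem~\ref{thm:asymptotic characterization} yields $C_{\EncG,\DecOne}(\Psi_{AF})=\Shannon(\cG(\Psi_A))=\Shannon(\cG(\rho_A))$; hence for every $\delta>0$ there are, for all large $n$, one-way LOCC codes $\cC_n=(\{\cU_{\bm g_m}\}_m,\{\Gamma^m_{B^nF^n}\}_m)$ for $\Psi_{AF}^{\ox n}$ with vanishing error and rate at least $n(\Shannon(\cG(\rho_A))-\delta)$. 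It therefore suffices to manufacture, from each $\cC_n$, a valid one-way LOCC code for $\rho_{AF}^{\ox n}$ with the same cardinality and error.

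To see the mechanism, consider first a single copy. Write the one-way LOCC decoder as $\Gamma^m_{BF}=\sum_j B^{(j)}_m\ox F_j$, with $\{F_j\}_j$ Fred's POVM on $\cH_F$ and $\{B^{(j)}_m\}_m$ Bob's conditional POVMs. Keep the encoders $\{\cU_{g_m}\}_m$ and replace Fred's measurement by $\{\cE_F^*(F_j)\}_j$, where $\cE_F^*$ is the Hilbert--Schmidt adjoint of $\cE_F$; operationally Fred first (virtually) applies $\cE_F$ to his share and then runs the decoder of $\cC_1$. Then $\{\cE_F^*(F_j)\}_j$ is again a bona fide POVM --- each $\cE_F^*(F_j)\geq0$ because $\cE_F$ is positive, and $\sum_j\cE_F^*(F_j)=\cE_F^*(\1)=\1$ because $\cE_F$ is trace preserving --- so the new decoder belongs to $\DecOne$. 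Using $\cE_F(\rho_{AF})=\Psi_{AF}$ and $\tr[\cE_F^*(Y)X]=\tr[Y\cE_F(X)]$, the outcome statistics are reproduced exactly,
\begin{align*}
\tr\!\left[\bigl(B^{(j)}_m\ox\cE_F^*(F_j)\bigr)(\cU_{g_{m'}}\ox\id_F)(\rho_{AF})\right]
&=\tr\!\left[\bigl(B^{(j)}_m\ox F_j\bigr)(\cU_{g_{m'}}\ox\id_F)(\Psi_{AF})\right],
\end{align*}
so the error and rate are unchanged; in particular all outcome probabilities are non-negative, so the new code is valid for the super-quantum resource $\rho_{AF}$. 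This already gives $C^{\varepsilon}_{\EncG,\DecOne}(\rho_{AF})\geq C^{\varepsilon}_{\EncG,\DecOne}(\Psi_{AF})$ for every $\varepsilon\in[0,1)$.

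For the capacity statement one needs the same construction applied to $\cC_n$, i.e.\ with Fred applying $\cE_F^{\ox n}$ to his $n$ copies. The algebraic identity $(\id_{A^n}\ox\cE_F^{\ox n})(\rho_{AF}^{\ox n})=\bigl((\id_A\ox\cE_F)(\rho_{AF})\bigr)^{\ox n}=\Psi_{AF}^{\ox n}$ still holds and reproduces the statistics exactly, but the transported decoder is $\sum_j B^{(j)}_m\ox(\cE_F^*)^{\ox n}(F_j)$, and here is the main obstacle: $(\cE_F^*)^{\ox n}$ need not be a positive map --- tensor powers of a (non-completely-)positive map can fail to be positive --- so positivity of $(\cE_F^*)^{\ox n}(F_j)$, which is what makes Fred's transported measurement a genuine POVM in $\DecOne$, is not automatic. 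I would dispose of this not by transporting an arbitrary capacity-achieving code, but by re-running the achievability proof of Theorem~\ref{thm:asymptotic characterization} (Appendix~\ref{S4B}) directly with the resource $\rho_{AF}^{\ox n}$ in place of $\Psi_{AF}^{\ox n}$, observing that the single step where positive-definiteness (i.e.\ ``quantumness'') of the shared resource is used --- Fred's measurement collapsing Alice's half into genuine pure states --- is discharged precisely by the hypothesis that $\cE_F(\rho_{AF})=\Psi_{AF}$ for a positive trace-preserving $\cE_F$, after which Fred proceeds on the genuine purification $\Psi_{AF}$. (Alternatively one may split the $n$ copies into super-blocks of a fixed large length $m$ and use the $\Psi_{AF}^{\ox m}$ code inside each, so that Fred's overall measurement is a tensor product over super-blocks and only the compatibility of $\cE_F^{\ox m}$ with the finitely many $m$-block POVM elements remains to be checked.) Pinning down this compatibility, rather than the routine error bookkeeping, is the technical heart of the argument; it then gives $\Shannon(\cG(\rho_A))\leq C_{\EncG,\DecOne}(\rho_{AF})$, which is \eqref{LS12}.

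The second assertion \eqref{LS13} follows by the identical argument with $\channel{F\to F}_{\rm p}$ replaced throughout by $\channel{F\to F}_{\rm ppt}$, the resource $\rho_{AF}'\in\cS(\PPT^*)$ and the map $\cE_F'\in\channel{F\to F}_{\rm ppt}$ playing the roles of $\rho_{AF}$ and $\cE_F$; one checks as before that the transported decoder remains in $\DecOne$ (since $(\cE_F')^*$ is positive and unital), handles the tensor powers by the same device, and uses $\Psi_A'=\rho_A'$ together with Theorem~\ref{thm:asymptotic characterization} to supply the target rate $\Shannon(\cG(\rho_A'))$.
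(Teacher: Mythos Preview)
Your approach is the same as the paper's: take an $(\EncG,\DecOne)$ code $(\{\cE^m\},\{\Lambda^x\},\{\Gamma^{\wh m\vert x}\})$ for $\Psi_{AF}^{\ox n}$, replace Fred's POVM by $\{\cE_F^*(\Lambda^x)\}$ (positive since $\cE_F$ is positive, summing to $\1$ since $\cE_F$ is trace-preserving), observe that the success probability is preserved exactly because $(\id_A\ox\cE_F)(\rho_{AF})=\Psi_{AF}$, and invoke \eqref{LS1}. The paper's proof is a single paragraph that does exactly this and nothing more.

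Where you differ is in scrupulousness. The paper writes ``$\{\cE_F^*(\Lambda^x)\}$ is a POVM on $\cH_F^{\ox n}$'' without distinguishing the single-copy map $\cE_F^*$ from its $n$-fold tensor power, and does not mention that $(\cE_F^*)^{\ox n}$ need not be positive when $\cE_F$ is merely positive rather than completely positive. You flag this as the main obstacle and sketch two workarounds (re-run the Appendix~\ref{S4B} construction directly on $\rho_{AF}^{\ox n}$, or super-block). In that sense your proposal is more careful than the paper's own argument, which simply does not confront the tensor-power issue. Neither you nor the paper actually discharges it, so the core transport mechanism and the reduction to \eqref{LS1} are identical, and the residual gap you identify is present in the paper as well.
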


\section{Conclusion}

In this paper, we have investigated thoroughly the classical communication
with various restriction on decoding operations
via quantum
resources task within the resource theory of asymmetry.
We have discussed the ultimate limit of the transmission rate when
the encoding operations are relaxed to the most general operations allowed in the framework of GPT
\MH{as super-quantum encoders}
and a certain locality condition is imposed to the decoding measurement.
In the analysis, we have employed a variant of the dense coding protocol by considering the resource theory of asymmetry.
In this variant, the preshared entangled state is fixed and
the decoding measurement is restricted to a local measurement.
When the group representation characterizing the resource theory of asymmetry
satisfies the multiplicity-free condition, we have proven that
this kind of relaxation does not improve the transmission rate.

Going one step further, we have shown that
the same conclusion holds whenever the initial state is not a quantum state but satisfies certain condition.
Many interesting problems remain open. First or all,
we have imposed a strong condition (the multiplicity-free condition in Assumption~\ref{assp:multiplicity-free})
when proving the direct part of Theorem \ref{thm:asymptotic characterization}.
It would be interesting to inspect this condition.
What's more, we can consider the case when a certain locality condition,
for example the separability condition, is imposed on the initial state
and a class of super-quantum measurement is allowed as the decoding measurement.
Under this condition, our encoder can be relaxed to the class $\EncPPT$ or $\EncP$.
It is challenging to clarify whether this relaxation can yield higher transmission rates or not.

As another future problem,
under the same constraints for encoders and decoders,
we can consider the case when
any pre-shared entangled state is allowed between Alice and Fred
and the channel between Alice and Bob is given as a noisy quantum channel.
It is interesting to clarify whether super-quantum encoder enhances the capacity even in this setting.

\section*{Acknowledgements}
MH was supported in part by Guangdong Provincial Key Laboratory (Grant No. 2019B121203002).
Part of this work was done when KW was at the Southern University of Science and Technology.

\bibliography{references.bib}


\appendices

\section{Proof of Proposition~\ref{prop:regularized-asymmetry-of-assistance}}
\label{appx:prop:regularized-asymmetry-of-assistance}

\begin{proof}
~The first inequality follows by definition. To show the second inequality,
note that $\Shannon(\cG(\rho))$ upper bounds $A_G\left(\rho\right)$ due to the concavity of quantum entropy:
\begin{align}
  A_G\left(\rho\right)
&= \max_{\rho=\sum_xp_X(x)\proj{\psi_x}}\sum_x p_X(x) \Shannon\left(\cG(\psi_x)\right) \\
&\leq \max_{\rho=\sum_xp_X(x)\proj{\psi_x}}\Shannon\left(\cG\left(\sum_xp_X(x)\psi_x\right)\right) \\
&= \Shannon(\cG(\rho)).
\end{align}
This yields
\begin{align}
    A_G^\infty\left(\rho\right)
:= \limsup_{n\to\infty}\frac{1}{n}A_G\left(\rho^{\ox n}\right)
\leq \lim_{n\to\infty}\frac{1}{n}\Shannon\left(\cG^{\ox n}\left(\rho^{\ox n}\right)\right)
= \Shannon(\cG(\rho)),
\Label{MN1}
\end{align}
where the last equality follows from that the quantum entropy is additive w.r.t. tensor product.
\end{proof}

\section{Proof of Theorem~\ref{thm:enhanced-relation}}\label{appx:thm:enhanced-relation}

\begin{proof}
Eq.~\eqref{eq:enhanced-relation-1} follows from definition.

\vspace*{0.05in}
The first inequality in Eq.~\eqref{eq:enhanced-relation-2} follows from definition.
Now we show the second inequality in Eq.~\eqref{eq:enhanced-relation-2},
which is commonly called the weak converse bound.
Consider an enhanced dense coding protocol depicted in Figure~\ref{fig:ea-dense-coding-enhanced}.
By definition, we must search exhaustively over all possible POVMs in the
environment to optimize the quantity $\wt{C}_{\fE,\DecOne}^\varepsilon(\Psi_{AF})$, which is notoriously difficult.
Luckily, it can be shown that Fred can restrict measurements to rank-one POVMs yet still achieve the
same information transmission performance~\cite{gregoratti2003quantum}. On the other hand, rank-one POVMs at
Fred's side are in one-to-one correspondence with pure state decompositions of $\Psi_A$ by the Sch\"{o}dinger-HJW
theorem~\cite{hughston1993complete,kirkpatrick2003schrodinger}:
\begin{align}\Label{eq:pure-state-decomposition}
    \Psi_A = \sum_xp_X(x)\proj{\psi_A^x},
\end{align}
where $p_X(x)$ a probability distribution and $\{\ket{\psi_A^x}\}_x$ a set of pure states (not necessarily orthonormal).
As a result, the task becomes how well Alice and Bob can encode classical information using the pure state ensemble
$\{p_X(x),\psi_A^x\}$ on average. For each conditional state $\psi_A^x$, Alice performs
a conditional encoding operation $\cE^{m\vert x}_{A\to A}\in\fE$. In the single-shot case,
the conditional mutual information between Alice's message and Bob' state is evaluated as
\begin{align}
&\;\sum_{x}P_X(x)
\frac{1}{|{\cal M}|}\sum_{m \in {\cal M}}
\Rel\left({\cal E}^m_{A \to A}( \psi_A^x)
\rel
\frac{1}{|{\cal M}|}\sum_{m' \in {\cal M}} {\cal E}^{m'}_{A \to A}( \psi_A^x)
\right) \\
\le&\; \sum_{x}P_X(x)
\frac{1}{|{\cal M}|}\sum_{m \in {\cal M}}
\Rel\left({\cal E}^m_{A \to A}( \psi_A^x)
\rel
\cG( \psi_A^x)
\right) \\
\stackrel{(a)}{=}&\; \sum_{x}P_X(x)
\frac{1}{|{\cal M}|}
\sum_{m \in {\cal M}}
\Rel\left({\cal E}^m_{A \to A}( \psi_A^x)
\rel
{\cal E}^m_{A \to A} \circ \cG( \psi_A^x)
\right) \\
\stackrel{(b)}{\leq}&\;
\sum_{x}P_X(x) \frac{1}{|{\cal M}|}\sum_{m \in {\cal M}}
\Rel\left( \psi_A^x
\rel
\cG( \psi_A^x)
\right) \\
=&\; \sum_{x}P_X(x)
\Rel\left( \psi_A^x
\rel
 \cG( \psi_A^x)
\right) \\
\le&\; A_G(\Psi_A) \\
\le&\; A_G^\infty(\Psi_A),
\end{align}
where $(a)$ follows from the fact that $\cE^{m\vert x}_{A\to A}\in\fE$
and $(b)$ follows from the data processing inequality for the relative entropy with
respect to the trace-preserving positive operations \cite[Theorem 1]{muller2017monotonicity}.
Hence, when the state $\Psi_A^{\ox n}$ is given,
the conditional mutual information between
Alice's message and Bob's state is upper bounded by
$n A_G^\infty(\Psi_A)$. Combining Fano’s inequality,
we can show the inequality $\wt{C}_{\fE,\DecOne}(\Psi_{AF}) \le A_G^\infty(\Psi_A)$.

\vspace*{0.05in}
The first inequality in Eq.~\eqref{eq:enhanced-relation-3} follows from definition.
Now we show the second inequality in Eq.~\eqref{eq:enhanced-relation-3},
which is commonly called the strong converse bound.
It can be shown by use of a meta-converse technique originally invented in~\cite{nagaoka2001strong} and
further investigated in~\cite[Chapter 3]{hayashi2005asymptotic} (see
also~\cite{sharma2013fundamental,wilde2014strong,wang2020permutation} for more applications of this technique). Roughly
speaking, this meta-converse method guarantees that a quantum divergence satisfying certain reasonable properties
induces an upper bound on the success probability of the communication protocol. Here we adopt the Petz \Renyi
divergence $\PRRel{\alpha}$~\cite{petz1986quasi}, which meets all required properties (cf. Section~\ref{sec:Quantum
entropies}). In the following Proposition~\ref{prop:one-shot-strong-converse} (which will be proved shortly),
we upper bound the success probability of any one-shot enhanced code $\cC\in(\fE,\DecOne)$ in terms of the \Renyi entropy,
then the strong converse bound follows by considering block coding.
Notice that $\lim_{\alpha\to1}\PRenyi{2-\alpha}(\cG(\Psi_A))=\Shannon(\cG(\Psi_A))$. What's more, $\PRenyi{\alpha}$
is continuous and monotonically decreasing in $\alpha$. Applying the standard argument outlined
in~\cite{nagaoka2001strong,sharma2013fundamental}, we obtain from Proposition~\ref{prop:one-shot-strong-converse} that
$\Shannon(\cG(\Psi_A))$ is actually a strong converse bound.
\end{proof}

\begin{proposition}\Label{prop:one-shot-strong-converse}
Let $\fE\in\{\EncG,\EncCP,\EncPPT,\EncP\}$.
Any enhanced dense coding code $\cC\in(\fE,\DecOne)$ as illustrated in Figure~\ref{fig:ea-dense-coding-enhanced}
obeys the following bound for arbitrary $\alpha\in(1,2)$:
\begin{align}\Label{eq:one-shot-strong-converse}
s(\cC) \leq \exp\left\{\frac{\alpha-1}{\alpha}\left(\PRenyi{2-\alpha}(\cG(\Psi_A)) - \log\vert\cC\vert\right)\right\},
\end{align}
where $\PRenyi{\alpha}$ is the \Renyi entropy defined in~\eqref{eq:renyi entropy}.
\end{proposition}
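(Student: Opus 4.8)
The plan is to realize the one-shot bound~\eqref{eq:one-shot-strong-converse} as an instance of Nagaoka's meta-converse~\cite{nagaoka2001strong} run with the Petz \Renyi divergence $\PRRel{\alpha}$, exploiting two structural ingredients: the reduction of the enhanced protocol to a family of classical--quantum channel coding problems indexed by Fred's outcome, and the encoder constraint $\cE\circ\cG=\cG$ applied to the conditional \emph{pure} states prepared on Alice's side.

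First I would set up the reduction exactly as in the proof of Theorem~\ref{thm:enhanced-relation}: restricting Fred to rank-one POVMs is without loss of generality, and by the Schr\"odinger--HJW theorem such a POVM is in bijection with a pure-state decomposition $\Psi_A=\sum_x p_X(x)\proj{\psi_A^x}$. Conditioned on outcome $x$, Alice holds $\psi_A^x\equiv\proj{\psi_A^x}$, applies an encoder $\cE^{m\vert x}_{A\to A}\in\fE$, and Bob (who also learns $x$) applies a POVM $\{\Gamma^{m\vert x}_B\}_m$ with $\sum_m\Gamma^{m\vert x}_B=\1$. Writing $\rho^{m,x}_B:=\cE^{m\vert x}_{A\to A}(\psi_A^x)$ and $Q:=\cG(\Psi_A)$, the success probability decomposes as $s(\cC)=\sum_x p_X(x)\,\frac{1}{\vert\cC\vert}\sum_m\tr[\Gamma^{m\vert x}_B\rho^{m,x}_B]$, so it suffices to bound each inner sum for fixed $x$.

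For the inner sum I would use two applications of the data-processing inequality (DPI) for $\PRRel{\alpha}$, $\alpha\in(1,2)$. Applying DPI to the binary measurement channel $\tau\mapsto(\tr[\Gamma^{m\vert x}_B\tau],\tr[(\1-\Gamma^{m\vert x}_B)\tau])$ gives, for any state $\sigma$, the pointwise inequality $\tr[\Gamma^{m\vert x}_B\rho^{m,x}_B]\le(\tr[(\rho^{m,x}_B)^\alpha\sigma^{1-\alpha}])^{1/\alpha}(\tr[\Gamma^{m\vert x}_B\sigma])^{(\alpha-1)/\alpha}$; summing over $m$ with H\"older's inequality and $\sum_m\tr[\Gamma^{m\vert x}_B\sigma]=1$ yields $\frac{1}{\vert\cC\vert}\sum_m\tr[\Gamma^{m\vert x}_B\rho^{m,x}_B]\le\vert\cC\vert^{-(\alpha-1)/\alpha}(\frac{1}{\vert\cC\vert}\sum_m\tr[(\rho^{m,x}_B)^\alpha\sigma^{1-\alpha}])^{1/\alpha}$. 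I would then pick $\sigma=Q$: since every $\cE^{m\vert x}\in\fE$ obeys $\cE^{m\vert x}\circ\cG=\cG$, in particular $\cE^{m\vert x}(Q)=Q$, a second use of DPI (for $\PRRel{\alpha}$ under the trace-preserving positive map $\cE^{m\vert x}$) together with $(\psi_A^x)^\alpha=\psi_A^x$ gives $\tr[(\rho^{m,x}_B)^\alpha Q^{1-\alpha}]\le\langle\psi_A^x\vert Q^{1-\alpha}\vert\psi_A^x\rangle$. Averaging over $x$, invoking concavity of $t\mapsto t^{1/\alpha}$, and using $\sum_x p_X(x)\proj{\psi_A^x}=\Psi_A$ produces $s(\cC)\le\vert\cC\vert^{-(\alpha-1)/\alpha}(\tr[\Psi_A Q^{1-\alpha}])^{1/\alpha}$. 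Finally, $Q$ is $G$-invariant, hence so is $Q^{1-\alpha}$, and self-adjointness of the twirl gives $\tr[\Psi_A Q^{1-\alpha}]=\tr[\cG(\Psi_A)Q^{1-\alpha}]=\tr[Q^{2-\alpha}]=\exp\{(\alpha-1)\PRenyi{2-\alpha}(Q)\}$, which rearranges to~\eqref{eq:one-shot-strong-converse}.

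I expect the delicate point to be the second DPI step for the super-quantum encoders: for $\EncG$ and $\EncCP$ the maps $\cE^{m\vert x}$ are completely positive and the monotonicity of $\PRRel{\alpha}$ is standard, but for $\EncPPT$ and $\EncP$ one must use that the Petz \Renyi divergence is monotone under \emph{merely positive} trace-preserving maps for $\alpha\in(1,2)$---the \Renyi counterpart of the relative-entropy statement of~\cite{muller2017monotonicity}---combined with the fixed-point identity $\cE^{m\vert x}(Q)=Q$. A minor technical point is the support condition $\operatorname{supp}(\psi_A^x)\subseteq\operatorname{supp}(Q)$ needed to make sense of $Q^{1-\alpha}$ with $1-\alpha<0$; this follows from $p_X(x)\psi_A^x\le\Psi_A$ and $\operatorname{supp}(\Psi_A)\subseteq\operatorname{supp}(\cG(\Psi_A))$, and it propagates to $\rho^{m,x}_B$ by positivity of $\cE^{m\vert x}$. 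Everything else is routine H\"older-type manipulation.
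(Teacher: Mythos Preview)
Your proposal is correct and follows the same Nagaoka meta-converse route with the Petz divergence as the paper, arriving at the identical bound. Two differences are worth noting.

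First, you take the single global test state $Q=\cG(\Psi_A)$, whereas the paper uses the conditional test states $\cG(\psi_A^{x,y})$ (after a further pure-state refinement of $\psi_A^x$) and only combines at the very end via operator concavity of $t\mapsto t^{2-\alpha}$. Your choice is cleaner: it replaces the operator-concavity step by the linearity $\sum_x p_X(x)\psi_A^x=\Psi_A$ and the self-adjointness of the twirl.

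Second---and this is the only point that needs care---your ``second DPI'' step for the super-quantum encoder classes $\EncPPT,\EncP$ appeals to monotonicity of $\PRRel{\alpha}$ under \emph{merely positive} trace-preserving maps for $\alpha\in(1,2)$. The reference~\cite{muller2017monotonicity} establishes this for the relative entropy and the sandwiched divergence, not for the Petz divergence, so as stated this step is not justified. The paper circumvents the issue entirely by replacing the DPI appeal with the elementary inequality $\rho^\alpha\le\rho$ for any density operator and $\alpha>1$, together with the encoder constraint $\cG\circ\cE=\cG$ and $G$-invariance of the test state. The same trick works verbatim in your setup: since $Q$ is $G$-invariant so is $Q^{1-\alpha}$, hence
\[
\tr\!\big[\cE(\psi_A^x)^\alpha Q^{1-\alpha}\big]\le \tr\!\big[\cE(\psi_A^x)\,Q^{1-\alpha}\big]
=\tr\!\big[\cG\!\circ\!\cE(\psi_A^x)\,Q^{1-\alpha}\big]
=\tr\!\big[\cG(\psi_A^x)\,Q^{1-\alpha}\big]
=\tr\!\big[\psi_A^x\,Q^{1-\alpha}\big],
\]
using only positivity and trace preservation of $\cE$. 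With this substitution your argument goes through unchanged and matches the paper's in strength, while retaining the simplification afforded by the global test state.
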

\begin{proof}
Let $\{\Lambda^x\}_{x\in\cX}$ be the measurement carried out by Fred.
Set $p_X(x) := \tr[\Lambda^x\Psi_F]$ and
$\psi_A^x :=\tr_F\left[\left(\1_A\ox\Lambda^x\right)\Psi_{AF}\right]/p_X(x)$.
We define the following two quantum states:
\begin{align}
    \rho_{MXA} &:= \frac{1}{\vert\cM\vert}\sum_{m} \proj{m}_M\ox\sum_xp_X(x)\proj{x}_X\ox\cE^{m\vert x}(\psi_A^x),\\
    \sigma_{MXA} &:= \pi_M\ox\sum_xp_X(x)\proj{x}_X\ox\cG(\psi^x_A),
\end{align}
where $\rho_{MXA}$ serves as a \emph{test state}.
For given $\psi_A^x $, we choose a pure state decomposition as
$\psi_A^x=\sum_{y} P_{Y|X}(y|x)\psi_A^{x,y}$.
Then, we have
\begin{align}
    \rho_{MXYA} &:= \frac{1}{\vert\cM\vert}\sum_{m} \proj{m}_M\ox\sum_{x,y}p_{XY}(x,y)\proj{x,y}_X\ox\cE^{m\vert x}(\psi_A^{x,y}),\\
    \sigma_{MXYA} &:= \pi_M\ox\sum_{x,y}p_{XY}(x,y)\proj{x,y}_{XY}\ox\cG(\psi^{x,y}_A),
\end{align}

Given a code $(\{\cE^{m|x}\},\{\Gamma^{\wh{m}\vert x}\})$ depending on $x$,
the positive operator
\begin{align}
T:=  \sum_{m} \sum_x \proj{m}_M\ox\proj{x}_X\ox \Gamma^{{m}\vert x}
\end{align}
satisfies
\begin{align}
\tr T \rho_{MXA}&=  \frac{1}{\vert\cM\vert}\sum_{m}p_{\wh{M}M}(m\vert m)
= s(\cC), \\
\tr T \sigma_{MXA}&=  \frac{1}{\vert\cM\vert}\sum_xp_X(x)\tr\left[\sum_m \Gamma^{m\vert x}\cG(\psi^x_A)\right]
\stackrel{(a)}{=} \frac{1}{\vert\cM\vert}\sum_xp_X(x)\tr\left[\cG(\psi^x_A)\right]
= \frac{1}{\vert\cM\vert},
\end{align}
where $(a)$ follows from the fact that for
each $x$, the conditional decoding operation $\{\Gamma^{m\vert x}\}_{m\in\cM}$ forms a POVM.
Applying the information processing inequality to the binary measurement
$\{ T,I-T \}$, we have
\begin{align}
s(\cC)^\alpha \cdot\left(\frac{1}{\vert\cC\vert}\right)^{1-\alpha}
+(1-s(\cC))^\alpha \cdot \left(1-\frac{1}{\vert\cC\vert}\right)^{1-\alpha}
&\leq e^{(\alpha-1)\PRRel{\alpha}\left(\rho_{MXA}\rel\sigma_{MXA}\right)}  .
\end{align}
Thus, we have the following:
\begin{align}
 s(\cC)^\alpha \cdot \vert\cC\vert^{\alpha-1}
&= s(\cC)^\alpha \cdot \left(\frac{1}{\vert\cC\vert}\right)^{1-\alpha}
\leq s(\cC)^\alpha \cdot \left(\frac{1}{\vert\cC\vert}\right)^{1-\alpha}
+(1-s(\cC))^\alpha \cdot \left(1-\frac{1}{\vert\cC\vert}\right)^{1-\alpha}
\\
&\leq e^{(\alpha-1)\PRRel{\alpha}\left(\rho_{MXA}\rel\sigma_{MXA}\right)}
\\
&\stackrel{(a)}{\leq}
 e^{(\alpha-1)\PRRel{\alpha}\left(\rho_{MXYA}\rel\sigma_{MXYA}\right)}
\\
&=\frac{1}{\vert\cM\vert}\sum_{m} \sum_{x,y}p_{XY}(x,y)
e^{(\alpha-1)\PRRel{\alpha}\left(\cE^{m\vert x}(\psi_A^x)\rel\cG(\psi^x_A)\right)} \\
&\stackrel{(b)}{=}
\frac{1}{\vert\cM\vert}\sum_{m}
\sum_{x,y}p_{XY}(x,y)\int_G
e^{(\alpha-1)\PRRel{\alpha}\left(U_g \cE^{m\vert x}(\psi_A^{x,y}) U_g^\dagger\rel
U_g \cG(\psi^{x,y}_A) U_g^\dagger\right)} \nu(d g)\\
&\stackrel{(c)}{=} \frac{1}{\vert\cM\vert}\sum_{m}\sum_{xy}p_{XY}(x,y) \int_G
        e^{(\alpha-1)\PRRel{\alpha}\left(U_g \cE^{m\vert x}(\psi_A^{x,y}) U_g^\dagger
        \rel \cG(\psi^{x,y}_A)\right)}\nu(d g) \\
&= \frac{1}{\vert\cM\vert}\sum_{m}
\sum_{x,y}p_{XY}(x,y)
\tr\left[ (U_g (\cE^{m\vert x}(\psi^{x,y}_A)) U_g^\dagger)^\alpha \cG(\psi^{x,y}_A)^{1-\alpha}\right] \nu(dg)\\
&\stackrel{(d)}{\le} \frac{1}{\vert\cM\vert}\sum_{m}
\sum_{x,y}p_{XY}(x,y)
\tr\left[ (U_g (\cE^{m\vert x}(\psi^{x,y}_A)) U_g^\dagger) \cG(\psi^{x,y}_A)^{1-\alpha}\right] \nu(dg)\\
&= \frac{1}{\vert\cM\vert}\sum_{m}
 \tr \sum_{x,y}p_{XY}(x,y) \left[  \Big( \int_G
                     U_g (\cE^{m\vert x}(\psi^{x,y}_A)) U_g^\dagger \Big) \nu(dg) \cG(\psi^{x,y}_A)^{1-\alpha}\right] \\
&= \frac{1}{\vert\cM\vert}\sum_{m}
 \tr \sum_{x,y}p_{XY}(x,y) \left[
                    \cG \circ \cE^{m\vert x}(\psi^{x,y}_A)  \cG(\psi^{x,y}_A)^{1-\alpha}\right] \\
&\stackrel{(e)}{=}  \tr\left[ \sum_{x,y}p_{XY}(x,y)\cG(\psi^{x,y}_A)^{2-\alpha}\right] \\
&\stackrel{(f)}{\leq}  \tr\left[\cG\left( \sum_{x,y}p_{XY}(x,y)\psi^{x,y}_A\right)^{2-\alpha}\right] \\
&=  e^{(\alpha-1)\PRenyi{2-\alpha}(\cG(\Psi_A))},
\end{align}
where $(a)$ follows from the information processing inequality for $\PRRel{\alpha}$,
$(b)$ follows from the fact that $\PRRel{\alpha}$ is invariant w.r.t. unitary channel,
$(c)$ follows from the definition
of $\cG$,
$(d)$ follows from the inequality
$x^\alpha \le x$ for $x \in [0,1]$ and $\alpha >1$,
$(e)$ follows from the equation
$ \cG \circ \cE^{m\vert x}= \cG$ since $\cE^{m\vert x}\in\fE$, and
$(f)$ follows from that $t\mapsto t^{2-\alpha}$ is operator concave when
$\alpha\in(1,2)$~\cite[Table 2.2]{tomamichel2015quantum}
\cite[Appendix A.4]{hayashi2016quantum}. Rearranging the above inequality leads
to~\eqref{eq:one-shot-strong-converse}.
\end{proof}

\section{Proof of Theorem~\ref{thm:asymptotic characterization}}\label{S4B}

Based on Proposition~\ref{prop:regularized-asymmetry-of-assistance}, Proposition~\ref{prop:enhanced-relation2},
and Theorem~\ref{thm:enhanced-relation},
to show Theorem~\ref{thm:asymptotic characterization} it suffices to show the following inequalities:
\begin{align}\label{eq:cSiXYsXiDO}
    C_{\EncPPT,\DecPPT}^\dagger\left(\Psi_{AF}\right),
    C_{\EncP,\DecSEP}^\dagger \left(\Psi_{AF}\right)
\leq \Shannon\left(\cG(\Psi_A)\right) \leq C_{\EncG,\DecOne}(\Psi_{AF}),
\end{align}
where
\begin{itemize}
  \item The second inequality of Eq.~\eqref{eq:cSiXYsXiDO} is known as the direct part
        (or the achievability part), meaning that there exist encoding operations from $\EncG$
        and one-way LOCC decoding measurements from $\DecOne$
        for which the rate $\Shannon\left(\cG(\Psi_A)\right)$ is achievable.
        We show this direct part in Appendix~\ref{sec:asymptotic-direct}.
        Notice that the proof of one-shot direct part remains as the most difficult part in this work.
  \item The first inequality of Eq.~\eqref{eq:cSiXYsXiDO} is known as the strong converse bound,
        meaning that $\Shannon\left(\cG(\Psi_A)\right)$ is an upper bound on all possible achievable coding
        rates even if coding rate and decoding error tradeoff are allowed.
        We show this strong converse part in Appendix~\ref{sec:asymptotic-converse}.
\end{itemize}

\subsection{Direct part under the multiplicity-free condition}\Label{sec:asymptotic-direct}

In this section we prove second inequality of Eq.~\eqref{eq:cSiXYsXiDO}
under Assumption~\ref{assp:multiplicity-free} (the multiplicity-free condition).
We do so by first presenting an one-shot direct part and then applying it to the asymptotic regime.

\subsubsection{One-shot direct part}\Label{sec:one-shot}

In the one-shot direct part we present an one-shot characterization
on $C^\varepsilon_{\EncG,\DecOne}(\Psi_{AF})$, the one-shot $\varepsilon$-dense coding capacity
where the available encoders are the unitary representations $\EncG$
and the available decoders are the one-way LOCC measurements $\DecOne$.
We begin with some notations.
W.l.o.g., the purification of $\Psi_A$ can be chosen with the form
\begin{align}
    \ket{\Psi}_{AF} := \frac{1}{\sqrt{N}}\sum_{n\in\cN}\ket{\psi_{A,n}}\ket{n}_F,
\end{align}
where $\cN$ is some alphabet, $N\equiv\vert\cN\vert\geq \rank(\Psi_A)$ is the size of the alphabet, $\{\ket{n}\}$ is an
orthonormal basis of $F$, and $\{\ket{\psi_{A,n}}\}$ is a set of pure states (not necessarily orthonormal) of $A$.
Also, $\rank(\Psi_A)$ expresses the rank of state $\Psi_A$.
In this purification, system $F$ is $N$-dimensional. We remark that such uniform purification is always possible as long
as $N\geq\rank(\Psi_A)$~\cite[Exercise 5.1.3]{nielsen2002introduction}. Under this purification, we have
$\Psi_A=\frac{1}{N}\sum_{n\in\cN}\proj{\psi_{A,n}}$. For each pure conditional state $\psi_{A,n}$, define its twirled
version as $\ol{\rho}_{A,n} :=
\cG(\psi_{A,n})$. Correspondingly, the twirled state of $\Psi_A$ is
\begin{align}
  \ol{\rho}_A := \cG(\Psi_A)
= \frac{1}{N}\sum_{n\in\cN}\cG(\psi_{A,n})
= \frac{1}{N}\sum_{n\in\cN}\ol{\rho}_{A,n}.
\end{align}
Then, the state $\xi_{AF} $ has another expression as
\begin{align}
    \xi_{AF} = \frac{1}{N}\sum_{n\in\cN} \ol{\rho}_{A,n} \ox \proj{n}_F.
\end{align}
Notice that $\xi_A=\ol{\rho}_A$.
Actually, $\xi_{AF}$ can be obtained from $\ket{\Psi}_{AF}$ by first dephasing
$F$ in the orthonormal basis and then twirling system $A$ via $\cG$.
We evaluate here various Petz-\Renyi entropies of $\xi_{AF}$ that are useful for later analysis:
\begin{subequations}\Label{eq:Petz-entropies}
\begin{align}
    \PRenyi{\alpha}(F\vert A)_\xi
    :=&\; - \PRRel{\alpha}\left(\xi_{AF}\rel\1_F\ox\xi_A\right)
     =    \frac{1}{1-\alpha}\log\frac{1}{N^\alpha}\sum_{n\in\cN}
          \tr\left[\ol{\rho}_{A,n}^\alpha\ol{\rho}_A^{1-\alpha}\right],\Label{eq:Petz-entropies-1} \\
      \PRenyi{\alpha}(A\vert F)_\xi
    :=&\; - \PRRel{\alpha}\left(\xi_{AY}\rel\1_A\ox\xi_F\right)
    = \frac{1}{1-\alpha}\log\tr\left[
      \frac{1}{N}\sum_{n\in\cN}\ol{\rho}_{A,n}^\alpha\right],\Label{eq:Petz-entropies-2} \\
    \PRenyi{\alpha}(A)_\xi :=&\; \frac{1}{1-\alpha}\log\tr \ol{\rho}_A^\alpha.\Label{eq:Petz-entropies-3}
\end{align}
\end{subequations}

We focus on two convex functions
$- s \PRenyi{1+s}(A F)_\xi +s \PRenyi{1-s}(F\vert A)_\xi$ and $- s \PRenyi{1+s}(A)_\xi )$.
The maximum of them, i.e.,
$\max(- s \PRenyi{1+s}(A F)_\xi +s \PRenyi{1-s}(F\vert A)_\xi,
- s \PRenyi{1+s}(A)_\xi )$ is also a convex function.
We define Legendre transformation of  the convex function as
\begin{align}
{\cal L}_{\xi}(R):=
\max_{0 \le s \le 1} s R+
\min \big(s \PRenyi{1+s}(A F)_\xi -s \PRenyi{1-s}(F\vert A)_\xi,
 s \PRenyi{1+s}(A)_\xi \big).\Label{Le1}
\end{align}

Now we are ready to state the one-shot direct coding theorem.

\begin{theorem}[One-shot direct part]\Label{thm:one-shot-characterization}
Let $\ket{\Psi}_{AF}$ be a bipartite pure quantum state and $\varepsilon\in[0,1)$.
When the (projective) unitary representation $U$ on $\cH_A$ satisfies
Assumption~\ref{assp:multiplicity-free} (the multiplicity-free condition), it holds that
\begin{align}\Label{eq:one-shot-characterization}
-{\cal L}_\xi^{-1}(-\log   \epsilon ) \leq C^\varepsilon_{\EncG,\DecOne}(\Psi_{AF}).
\end{align}
\end{theorem}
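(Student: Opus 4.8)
The plan is to establish the one-shot achievability bound by a random-coding argument with a pretty-good (square-root) decoder, to control its error with a Hayashi--Nagaoka-type operator inequality, and then to extract the two-branch Legendre exponent ${\cal L}_\xi$ from the Petz--R\'enyi quantities of $\xi_{AF}$ that appear.

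First I would fix the uniform purification $\ket{\Psi}_{AF}=\tfrac{1}{\sqrt N}\sum_{n\in\cN}\ket{\psi_{A,n}}\ket{n}_F$ and spell out the code: Alice holds a codebook $\{g(m)\}_{m\in\cM}\subset G$ and encodes $m$ by the unitary channel $\cU_{g(m)}\in\EncG$; for the one-way LOCC decoder $\DecOne$, Fred measures $F$ in the orthonormal basis $\{\ket{n}\}$, obtains $n$ with probability $1/N$, and forwards $n$ to Bob, who then applies a conditional POVM $\{\Gamma_B^{\wh m\vert n}\}$. Since $U_g$ acts trivially on $F$, conditioned on $n$ Bob's state is the pure state $U_{g(m)}\proj{\psi_{A,n}}U_{g(m)}^\dagger$, so the average decoding error is $\tfrac1N\sum_n$ of the conditional errors. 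I would then draw $g(1),\dots,g(\vert\cM\vert)$ i.i.d.\ from the Haar measure $\nu$ and take the conditional decoder to be the square-root measurement built from covariant test operators $S^{m\vert n}=U_{g(m)}T^n U_{g(m)}^\dagger$, where $T^n=\{\psi_{A,n}\ge\gamma\,\sigma_n\}$ is a hypothesis test at threshold $\gamma$ against a symmetric reference $\sigma_n$ chosen among the twirled states $\ol\rho_{A,n}=\cG(\psi_{A,n})$ and $\ol\rho_A=\cG(\Psi_A)$.

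The Hayashi--Nagaoka inequality then gives, for each $m,n$, $P_e^{m,n}\le 2\tr[(\1-T^n)\psi_{A,n}]+4\sum_{m'\ne m}\tr[U_{g(m')}T^nU_{g(m')}^\dagger\,U_{g(m)}\psi_{A,n}U_{g(m)}^\dagger]$. Taking the codebook expectation and using Haar invariance, the first (missed-detection) term is deterministic, while in the cross term $\bE_{g(m')}[U_{g(m')}T^nU_{g(m')}^\dagger]=\cG(T^n)$ is $G$-invariant, so it collapses to $(\vert\cM\vert-1)\tr[T^n\ol\rho_{A,n}]$; altogether $\bE[\text{avg.\ error}]\le \tfrac2N\sum_n\tr[(\1-T^n)\psi_{A,n}]+\tfrac{4\vert\cM\vert}{N}\sum_n\tr[T^n\ol\rho_{A,n}]$. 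Here the multiplicity-free hypothesis (Assumption~\ref{assp:multiplicity-free}) is essential: by the Lemma of Section~\ref{sec:group representation} all twirled operators $\cG(\cdot)$ are simultaneously block-diagonal across $\cH_A=\oplus_{k\in\cK}\cH_k$ and scalar on each block, so Bob may first measure the block label $k$ (which leaks no information about $m$, since $U_g$ preserves blocks) and is then left inside each $\cH_k$ with a pure-state decoding problem whose Haar-averaged output is the maximally mixed $\pi_k$; consequently all the references $\{\ol\rho_{A,n}\}_n,\ol\rho_A$ are jointly diagonal, which is what makes the Chernoff/tail estimates on $\tr[(\1-T^n)\psi_{A,n}]$ and $\tr[T^n\ol\rho_{A,n}]$ tractable despite the rank-one $\psi_{A,n}$ not commuting with them. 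Optimising the threshold $\gamma$ and an auxiliary parameter $s\in[0,1]$, and evaluating the missed-detection and false-alarm terms --- one with reference $\ol\rho_A$ and one with $\ol\rho_{A,n}$, keeping Fred's label --- I would bound them by the Petz--R\'enyi quantities of $\xi_{AF}=\tfrac1N\sum_n\ol\rho_{A,n}\ox\proj{n}_F$ in~\eqref{eq:Petz-entropies}, one contribution producing the branch $s\PRenyi{1+s}(AF)_\xi-s\PRenyi{1-s}(F\vert A)_\xi$ and the other the branch $s\PRenyi{1+s}(A)_\xi$. Combining and minimising over $s$ yields $\bE[\text{avg.\ error}]\le c\,e^{-{\cal L}_\xi(-\log\vert\cM\vert)}$ for an absolute constant $c$; derandomising (some codebook meets the expectation) and inverting the monotone function ${\cal L}_\xi$ gives $\log\vert\cM\vert\ge -{\cal L}_\xi^{-1}(-\log\varepsilon)$ after absorbing $c$ into the bookkeeping, which is~\eqref{eq:one-shot-characterization}.

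The hard part is precisely this last matching: choosing the reference state and the threshold so that the two terms of the union bound split exactly into an order-$(1+s)$ channel-coding exponent and an order-$(1-s)$ ``Fred-assisted'' (source-coding-with-side-information) exponent, and carrying the multiplicity-free block reduction through the error-exponent estimates with the correct power of $\vert\cM\vert$; the excerpt itself singles this out as ``the most difficult part in this paper''. The asymptotic statement of Theorem~\ref{thm:asymptotic characterization} then follows by applying Theorem~\ref{thm:one-shot-characterization} to $\Psi_{AF}^{\ox n}$, using additivity of the Petz--R\'enyi entropies (so that ${\cal L}_{\xi^{\ox n}}(R)=n\,{\cal L}_\xi(R/n)$) and letting $n\to\infty$ and $\varepsilon\to0$, which drives the achievable rate to $-{\cal L}_\xi^{-1}(0^+)=\Shannon(A)_\xi=\Shannon(\cG(\Psi_A))$, matching the strong-converse upper bound of Theorem~\ref{thm:enhanced-relation}.
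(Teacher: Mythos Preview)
Your proposal has a genuine gap in the choice of Fred's measurement. Having Fred measure in the computational basis $\{\ket{n}_F\}$ and forward $n$ to Bob leaves Bob, conditioned on $n$, with the pure-state ensemble $\{U_{g(m)}\psi_{A,n}U_{g(m)}^\dagger\}_m$ whose Haar-averaged state is $\ol\rho_{A,n}=\cG(\psi_{A,n})$. The random-coding/Hayashi--Nagaoka bound you write down then gives (using that $\psi_{A,n}$ is pure and, under multiplicity-freeness, $\tr[\psi_{A,n}\,\ol\rho_{A,n}^{\,s}]=\tr[\ol\rho_{A,n}^{\,1+s}]$)
\[
\bE[\text{error}]\;\lesssim\; M^{s}\cdot\frac{1}{N}\sum_{n}\tr\bigl[\ol\rho_{A,n}^{\,1+s}\bigr]\;=\;M^{s}\,e^{-s\,\PRenyi{1+s}(A\vert F)_\xi}.
\]
This yields the exponent branch $s\,\PRenyi{1+s}(A\vert F)_\xi$, not the two branches $s\,\PRenyi{1+s}(AF)_\xi-s\,\PRenyi{1-s}(F\vert A)_\xi$ and $s\,\PRenyi{1+s}(A)_\xi$ that define $\mathcal{L}_\xi$. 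At $s\to 0$ your rate is $\Shannon(A\vert F)_\xi=\tfrac1N\sum_n\Shannon(\ol\rho_{A,n})$, which is generically strictly below $\Shannon(A)_\xi=\Shannon(\cG(\Psi_A))$; optimising the purification only lifts this to $A_G(\Psi_A)$, again $\le\Shannon(\cG(\Psi_A))$ with strict inequality in general (Proposition~\ref{prop:regularized-asymmetry-of-assistance}). There is also no auxiliary scale in your scheme from which the $\PRenyi{1-s}(F\vert A)_\xi$ term could emerge, so the claimed split into two branches is not substantiated.

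The paper's proof avoids exactly this loss by \emph{not} letting Fred measure in the $\{\ket{n}_F\}$ basis. Instead, Fred performs a two-stage measurement: first a projective measurement onto $T$ blocks determined by a random $2$-universal hash $f:\cN\to\cT$ (this is the source of the $\PRenyi{1-s}(F\vert A)_\xi$ term, via the penalty $\delta^{f,t}$), and then, inside each size-$L=N/T$ block, the \emph{Fourier} basis measurement. The Fourier step is the crucial idea you are missing: conditioned on a Fourier outcome the post-measurement state on $A$ is a coherent superposition $\propto\sum_{l}\zeta^{-ll'}Q_{A,l}\ket{\psi_{A,l}}$, and the resulting cq-channel has average state $\propto\ol\sigma_A$ whose $(1{+}s)$-trace, after the hash randomisation, is controlled by $\tr[\ol\rho_A^{\,1+s}]=e^{-s\,\PRenyi{1+s}(A)_\xi}$ rather than by $\tfrac1N\sum_n\tr[\ol\rho_{A,n}^{\,1+s}]$. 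Optimising the free parameter $T$ then fuses the hash penalty and the coding term into the first branch $\PRenyi{1+s}(AF)_\xi-\PRenyi{1-s}(F\vert A)_\xi$, while the $\ol\rho_A$ contribution gives the second branch $\PRenyi{1+s}(A)_\xi$. The multiplicity-free assumption is used not to diagonalise Bob's test operators, but inside Lemma~\ref{lemma:one-shot-achievability} to kill the cross terms $l'\ne l''$ when averaging $U_g\proj{\phi_{A,l}}U_g^\dagger$ over $G$, which is what makes the Fourier trick produce $\ol\sigma_A$ in closed form.
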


\begin{proof}
We prove Theorem~\ref{thm:one-shot-characterization} in the following four steps.

\vspace*{0.1in}{\bf Step 1:} We introduce a measurement induced by $2$-universal hash function.
Let $\cT\subset\cN$ be a strict subset of $\cN$ and set
$T:=\vert\cT\vert$. Let $f:\cN\to\cT$ be a linear surjective $2$-universal hashing function. The hashing function $f$
splits system $F$ into $T$ non-overlapping subspaces $\cS^{f,t}$ with the corresponding subspace projectors:
\begin{align}
    \Pi^{f,t} := \sum_{n\in f^{-1}(t)}\proj{n}_F.
\end{align}
Notice that $\{n\in f^{-1}(t)\}$ is of the same size for each $t$ and is given by $L:=N/T$. As
so, each subspace $\cS^{f,t}$ is $L$-dimensional. Fix the pair $(f,t)$. Let
$\bm{Q}^{f,t}:=\{Q_{A,n}^{f,t}\}_{n\in f^{-1}(t)}$ be a POVM on $A$.
Define the following $\delta$ function
\begin{align}
  \delta^{f,t} :=
  1 - \frac{1}{L}\sum_{n\in f^{-1}(t)}\tr\left[Q_{A,n}^{f,t}\ol{\rho}_{A,n}\right].
\end{align}
Roughly, $\delta^{f,t}$ quantifies how well the measurement $\bm{Q}^{f,t}$ detects the twirled states lying inside the
subspace projected by $\Pi^{f,t}$. Using the same discussion as the proof of
\cite[Theorem 7]{tomamichel2013hierarchy}, we are able to construct a list of POVMs $\bm{Q}^{f,t}$ such that $Q_{A,n}^{f,t}$ and
$\ol{\rho}_{A,n}$ are commutative for each $n$ and the expected value (w.r.t. both $f$ and $t$) of the $\delta$
function is upper bounded as follows
\begin{subequations}\Label{eq:delta-expectation}
\begin{align}
    \bE_{F,T}\delta^{F,T}
&=  1 - \bE_F \frac{1}{N}\sum_{n\in\cN}\tr\left[Q_{A,n}^{F,F(n)}\ol{\rho}_{A,n}\right] \\
&\leq \frac{1}{N}\sum_{n\in\cN}\tr\left[\ol{\rho}_{A,n}\{\ol{\rho}_{A,n} \geq L\ol{\rho}_A\}\right]
          + \tr\left[L\ol{\rho}_A\{\ol{\rho}_{A,n} < L\ol{\rho}_A\}\right] \\
&\stackrel{(b)}{\leq}\frac{L^{s'}}{N}\sum_{n\in\cN}\tr\left[ \ol{\rho}_{A,n}^{1-s'}\ol{\rho}_A^{s'}\right] \\
&\stackrel{(c)}{=}\frac{L^{s'}}{N} N^{1-s'}e^{s'\PRenyi{1-s'}(F\vert A)_\xi} \\
&\stackrel{(d)}{=} T^{-s'}e^{s'\PRenyi{1-s'}(F\vert A)_\xi},
\end{align}
\end{subequations}
where $s'\in[0,1]$, $(b)$ follows from~\cite[Theorem 1]{audenaert2007discriminating} (see also~\cite[Lemma
3.3]{hayashi2016quantum}) which is a well-known inequality in hypothesis testing, $(c)$ follows
from~\eqref{eq:Petz-entropies-1}, and $(d)$ follows from $L=N/T$. Based on the same construction, we can estimate the
following expectation w.r.t. both $F$ and $T$:
\begin{subequations}\Label{eq:H18}
\begin{align}
&\; \bE_{F,T}\tr\left[\left(\frac{1}{L}\sum_{n\in F^{-1}(T)}\ol{\rho}_{A,n}\right)^{1+s}\right] \\
=&\;  \bE_{F}\tr\left[\frac{1}{N} \sum_{n \in {\cal N}}
      \ol{\rho}_{A,n} \left(\frac{1}{L} \ol{\rho}_{A,n}
      +\frac{1}{L}\sum_{n'(\neq n)\in\cN:F(n')=F(n)}\ol{\rho}_{A,n'} \right)^s\right] \\
\stackrel{(a)}{\le} &\; \tr\left[\frac{1}{N} \sum_{n \in {\cal N}} \ol{\rho}_{A,n}
        \left(\frac{1}{L} \ol{\rho}_{A,n}
        + \bE_{F}\frac{1}{L}\sum_{n'(\neq n)\in\cN:F(n')=F(n)}\ol{\rho}_{A,n'} \right)^s\right] \\
=&\;   \tr\left[\frac{1}{N} \sum_{n \in {\cal N}} \ol{\rho}_{A,n}
        \left(\frac{1}{L}\ol{\rho}_{A,n} + \frac{L-1}{L}\ol{\rho}_A \right)^s\right] \\
\leq&\;   \tr\left[\frac{1}{N} \sum_{n \in {\cal N}} \ol{\rho}_{A,n}
        \left(\frac{1}{L}\ol{\rho}_{A,n} + \ol{\rho}_A \right)^s\right] \\
\stackrel{(b)}{\leq}&\; \tr\left[\frac{1}{N} \sum_{n \in {\cal N}} \ol{\rho}_{A,n}
        \left(\frac{1}{L^s}\ol{\rho}^s_{A,n} + \ol{\rho}^s_A \right)\right] \\
=&\;  \frac{1}{L^s}\tr\left[\frac{1}{N} \sum_{n \in {\cal N}} \ol{\rho}^{1+s}_{A,n}\right]
      + \tr\left[\ol{\rho}^{1+s}_A\right] \\
\stackrel{(c)}{=}&\;  \frac{1}{L^s}e^{-s \PRenyi{1+s}(A\vert F)_\xi} +e^{-s\PRenyi{1+s}(A)_{\xi}},
\end{align}
\end{subequations}
where $s\in[0,1]$, $(a)$ follows from the concavity of $x \mapsto x^{s}$ when $s\in(0,1]$~\cite[Table
2.2]{tomamichel2015quantum}\cite[Appendix A.4]{hayashi2016quantum}
and the Jensen inequality, $(b)$ follows from $(x+y)^s \le x^s +y^s$ for $x,y>0$,
and $(c)$ follows from~\eqref{eq:Petz-entropies-2} and~\eqref{eq:Petz-entropies-3}.

\if0
With the above tools in hand, we are ready to present a lower bound on the one-shot classical capacity.
\begin{theorem}\Label{thm:one-shot-achievability}
Let $\ket{\Psi}_{AF}$ be a bipartite pure quantum state and $\varepsilon\in[0,1)$.
When the (projective) unitary representation $U$ on $A$ is multiplicity-free,
it holds that
\begin{align}
    C^\varepsilon_{\cG,\rightarrow}(\Psi_{AF}) \geq
    -{\cal L}^{-1}(- 2 \log \frac{    \varepsilon }{36}).
    \end{align}
\end{theorem}
The above theorem implies
\ding{172} of Theorem \ref{thm:one-shot-characterization}.
\fi

\vspace*{0.1in}{\bf Step 2:} We prepare a useful lemma that holds
under Assumption~\ref{assp:multiplicity-free} (the multiplicity-free condition) as follows.
This lemma will be shown in {\bf Step 4}.

\begin{lemma}\Label{lemma:one-shot-achievability}
Assume that the (projective) unitary representation $U$ on $\cH_A$ is multiplicity-free.
Then, the Hilbert space $\cH_A$ is decomposed as $\oplus_{k \in \cK} \cH_k$.
Let $\{Q_{A,l}\}_{l=1}^L$ be a POVM on $\cH_A$ such that each $Q_{A,l}$ is a projection onto
the invariant subspace $\oplus_{k\in\cS_l}\cH_k$,
where $\{\cS_l\}_{l=1}^L$ are disjoint subsets of $\cK$.
That is, each $Q_{A,l}$ projects into disjoint
irreducible subspaces $\cH_k$. Define the corresponding conditional projection operator in $AF$ as
\begin{align}
    Q_{AF} := \sum_{l=1}^L Q_{A,l}\ox\proj{l}_F,\Label{H1}
\end{align}
where $\{\ket{l}\}_{l=1}^L$ is an orthonormal basis of $F$.
Assume Alice and Fred preshare the pure bipartite state
\begin{align}
    \ket{\psi_{AF}} := \frac{1}{\sqrt{L}}\sum_{l=1}^L\vert\psi_{A,l}\rangle\ox\ket{l}_F,\Label{H2}
\end{align}
and define $\ol{\sigma}_A$ as
\begin{align}\Label{eq:sigma-A}
  \ol{\sigma}_A := \frac{1}{L}\sum_{l=1}^L Q_{A,l}\cG(\psi_{A,l})Q_{A,l}.
\end{align}
We have the following result regarding the one-way LOCC classical communication capability
of $\ket{\psi_{AF}}$.

Let $M$ be the message size. Let $E\equiv(\mathtt{g}_1,\cdots,\mathtt{g}_M)$ be a random coding such that each codeword
$\mathtt{g}_m$ is chosen independently and uniformly from $G$. We use the typewriter font $\mathtt{g}$ to indicate that
it is a random variable.
For a chosen encoder $(g_1,\cdots,g_M)$, there exists
a one-way LOCC decoder such
that the resulting protocol $\cC$'s expected decoding error is upper bounded as
\begin{align}\Label{eq:nvuveMhromojcIJt}
    \bE_{E}e(\cC(E)) \leq \frac{8M^s}{(1-\delta)^s}\tr\ol{\sigma}_A^{1+s} + 2\delta,
\end{align}
where $s\in(0,1)$, $\delta$ is defined as
\begin{align}\Label{eq:delta}
    \delta := 1 - \bra{\psi_{AF}}Q_{AF}\ket{\psi_{AF}}
= 1 - \frac{1}{L}\sum_{l=1}^L\langle\psi_{A,l}\vert Q_{A,l}\vert\psi_{A,l}\rangle
= 1 - \frac{1}{L}\sum_{l=1}^L\cG(\psi_{A,l}),
\end{align}
where the second equation follows from \eqref{H1} and \eqref{H2},
and the final equation follows from the invariant property of $Q_{A,l} $ for the group action.
\end{lemma}

\vspace*{0.1in}{\bf Step 3:} We proof Theorem \ref{thm:one-shot-characterization}
by applying Lemma \ref{lemma:one-shot-achievability} to the measurement induced by $2$-universal hash function.
That is,
we adopt the random coding argument and show that the expected decoding error for protocols generated by randomly
selecting codewords according to the uniform distribution and measurements according to $2$-universal hash function is
upper bounded.

In the first stage, Fred performs the projective measurement $\{\Pi^{f,t}\}$, diving system $F$ into $T$
subspaces. When the outcome is $t$, the post-measurement state on $AF$ is
\begin{align}
  \vert\Psi^{f,t}_{AF}\rangle := \frac{1}{\sqrt{L}}\sum_{n\in f^{-1}(t)}\ket{\psi_{A,n}}\ket{n}_F.
\end{align}
The outcome is communicated to Bob via a classical noiseless channel. Set $\bm{Q}_{AF}^{f,t}:=\sum_{n\in
f^{-1}(t)}Q_{A,n}^{f,t}\ox\proj{n}_F$, which is the conditional version of $\bm{Q}^{f,t}$. One can check that
\begin{align}
  \langle \Psi^{f,t}_{AF} \vert \bm{Q}_{AF}^{f,t} \vert \Psi^{f,t}_{AF} \rangle
= \frac{1}{L}\sum_{n\in f^{-1}(t)}
  \langle \psi_{A,n} \vert Q_{A,n}^{f,t} \vert \psi_{A,n} \rangle
= 1 - \delta^{f,t}.
\end{align}

In the second stage, we apply Lemma~\ref{lemma:one-shot-achievability} to the post-measurement state $\vert\Psi^{f,t}_{AF}\rangle$
with corresponding measurement $\bm{Q}^{f,t}$ to implement classical communication from Alice to Bob. We can do so
because the measurement $\bm{Q}^{f,t}$ satisfies the prerequisite given in Lemma~\ref{lemma:one-shot-achievability}. Consequently,
there exists a communication protocol with one-way LOCC decoder $\cC(e,f,t)$
depending on both the encoder $E$ and the
$2$-universal hash function $F$ and $T$ that satisfies the following decoding error condition
\begin{align}\Label{eq:MAUSAXnoufRw}
      \bE_E \epsilon(\cC(E,f,t))
\leq  \frac{8M^s}{(1-\delta^{f,t})^s}
      \tr\left[\left(\frac{1}{L}\sum_{n\in f^{-1}(t)}Q_{A,n}^{f,t}\ol{\rho}_{A,n}Q_{A,n}^{f,t}\right)^{1+s}\right]
      + 2\delta^{f,t}.
\end{align}

Averaging over all possible randomly generated codewords $E$ according to the uniform distribution and randomly
generated measurements according to $2$-universal hash functions $(F,T)$, we can upper bound the expected value of the
decoding error as follows:
\begin{subequations}\Label{eq:uGx}
\begin{align}
&\; \bE_{E,F,T}\epsilon(\cC(E,F,T)) \\
=&\;    \Pr\left(\delta^{F,T} \geq \frac{1}{2}\right)
        \bE_{E,F,T\vert \delta^{F,T}\geq 1/2}\epsilon(\cC(E,F,T))
      + \Pr\left(\delta^{F,T} < \frac{1}{2}\right)
        \bE_{E,F,T\vert \delta^{F,T} < 1/2}\epsilon(\cC(E,F,T)) \Label{eq:uGx1} \\
\stackrel{(a)}{\leq}&\; \Pr\left(\delta^{F,T} \geq \frac{1}{2}\right)
      + \Pr\left(\delta^{F,T} < \frac{1}{2}\right)
        \bE_{E,F,T\vert \delta^{F,T} < 1/2}\epsilon(\cC(E,F,T))  \\
\stackrel{(b)}{\leq}&\; 2\bE_{F,T}\delta^{F,T}
      + \Pr\left(\delta^{F,T} < \frac{1}{2}\right)
        \bE_{E,F,T\vert \delta^{F,T} < 1/2}\epsilon(\cC(E,F,T))  \\
\stackrel{(c)}{\leq}&\; 4\bE_{F,T}\delta^{F,T} \\
&\quad + \Pr\left(\delta^{F,T} < \frac{1}{2}\right)
    \bE_{F,T\vert \delta^{F,T} < 1/2}
    \left\{\frac{8M^s}{(1-\delta^{F,T})^s}
      \tr\left[\left(\frac{1}{L}\sum_{n\in F^{-1}(T)}Q_{A,n}^{F,T}\ol{\rho}_{A,n}Q_{A,n}^{F,T}\right)^{1+s}\right]
    \right\} \\
\stackrel{(d)}{\leq}&\; 4\bE_{F,T}\delta^{F,T}
 +  2^{s+3}M^s\bE_{F,T}
    \tr\left[\left(\frac{1}{L}\sum_{n\in F^{-1}(T)}Q_{A,n}^{F,T}\ol{\rho}_{A,n}Q_{A,n}^{F,T}\right)^{1+s}\right]  \\
\stackrel{(e)}{\leq}&\; 4\bE_{F,T}\delta^{F,T}
 +  2^{s+3}M^s\bE_{F,T}\tr\left[\left(\frac{1}{L}\sum_{n\in F^{-1}(T)}\ol{\rho}_{A,n}\right)^{1+s}\right]  \\
\stackrel{(f)}{\leq} &\;
    4 T^{-s'}e^{s'\PRenyi{1-s'}(F\vert A)_\xi}
+ 2^{s+3}\frac{M^s}{L^s}e^{-s \PRenyi{1+s}(A\vert F)_\xi} + 2^{s+3}M^se^{-s\PRenyi{1+s}(A)_{\xi}} \\
=&\;
    4 e^{-s'\left[\log T - \PRenyi{1-s'}(F\vert A)_\xi\right]}
  + 2^{s+3}e^{-s\left[\PRenyi{1+s}(A\vert F)_\xi + \log N - \log T - \log M\right]}
  + 2^{s+3}e^{-s\left[\PRenyi{1+s}(A)_{\xi} - \log M\right]}\\
\stackrel{(g)}{=} &\;
    4 e^{-s'\left[\log T - \PRenyi{1-s'}(F\vert A)_\xi\right]}
  + 2^{s+3}e^{-s\left[\PRenyi{1+s}(A F)_\xi  - \log T - \log M\right]}
  + 2^{s+3}e^{-s\left[\PRenyi{1+s}(A)_{\xi} - \log M\right]},\Label{eq:uGx2}
\end{align}
\end{subequations}
where
\begin{itemize}
  \item $\bE_{X\vert B}$ expresses the conditional expectation with respect to the variable $X$ conditioned on $B$,
  \item $(a)$ follows from the fact that the decoding error is less than $1$,
  \item $(b)$ follows from the Markov inequality~\cite[Eq. (3.31)]{cover2012elements}
        that $\Pr\left(\delta^{F,T} \geq 1/2\right) \leq 2\bE_{F,T}\delta^{F,T}$,
  \item $(c)$ follows from~\eqref{eq:MAUSAXnoufRw},
  \item $(d)$ follows from the relation that $\delta^{F,T}<1/2$ implies $(1-\delta^{F,T})^{-s}<2^s$.
        Notice that this relation is the essential reason why we divide the expectation into two regions --
        $\delta^{F,T}\geq1/2$ and $\delta^{F,T}<1/2$ -- in~\eqref{eq:uGx1}, since otherwise we cannot bound the term
        $(1-\delta^{F,T})^{-s}$,
  \item $(e)$ follows from the fact that the measurement element satisfies $0\leq Q_{A,n}^{f,t}\leq\1$ and
        thus the mutual commutativity property guarantees that $Q_{A,n}^{f,t} \ol{\rho}_{A,y} Q_{A,n}^{f,t} \leq
        \ol{\rho}_{A,n}$, and
  \item $(f)$ follows from the expectation estimations in~\eqref{eq:delta-expectation} and~\eqref{eq:H18}
        w.r.t. the $2$-universal hash function.
  \item $(g)$ follows from the fact that
  $\xi_F$ is the completely mixed state.
\end{itemize}

We set $s' = s$ in~\eqref{eq:uGx2} and solve the equation w.r.t. the variable $T$
\begin{align}
    \log T - \PRenyi{1-s}(F\vert A)_\xi = \PRenyi{1+s}(A F)_\xi - \log T - \log M,
\end{align}
yielding
\begin{align}
    \log T = \frac{1}{2}\left(\PRenyi{1-s}(F\vert A)_\xi + \PRenyi{1+s}(A F)_\xi - \log M\right).
\end{align}
Based on these choices, we can conclude from~\eqref{eq:uGx} that there exists a concrete communication protocol one-way LOCC decoder
$\cC(e,f,t)$ for carefully chosen encoding $e$ and the $2$-universal hash function $f$
such that its decoding error is upper bounded for $s \in [0,1]$ as follows
\begin{align}
    \epsilon\left(\cC(e,f,t)\right)
\leq & (4 + 2^{s+3})e^{-\frac{s}{2}\left[\PRenyi{1+s}(A F)_\xi - \PRenyi{1-s}(F\vert A)_\xi
      - \log M\right]}
  + 2^{s+3}e^{-s\left[\PRenyi{1+s}(A)_{\xi} - \log M\right]}\nonumber  \\
\leq & (4 + 16)e^{-\frac{s}{2}\left[\PRenyi{1+s}(A F)_\xi - \PRenyi{1-s}(F\vert A)_\xi
      - \log M\right]}
  +  16 e^{-s\left[\PRenyi{1+s}(A)_{\xi} - \log M\right]}.
\end{align}
Since $    \epsilon\left(\cC(e,f,t)\right)\le 1$, we have
\begin{align}
    \epsilon\left(\cC(e,f,t)\right)
\leq & 36
\min(1,
\max (e^{-\frac{s}{2}\left[\PRenyi{1+s}(A F)_\xi - \PRenyi{1-s}(F\vert A)_\xi - \log M\right]},
e^{-s\left[\PRenyi{1+s}(A)_{\xi} - \log M\right]})). \\
\le & 36
\min(1,
\max (e^{-\frac{s}{2}\left[\PRenyi{1+s}(A F)_\xi - \PRenyi{1-s}(F\vert A)_\xi - \log M\right]},
e^{-\frac{s}{2}\left[\PRenyi{1+s}(A)_{\xi} - \log M\right]})). \\
= & 36
\min(1, e^{-
\min(
\frac{s}{2}\left[\PRenyi{1+s}(A F)_\xi - \PRenyi{1-s}(F\vert A)_\xi - \log M\right],
\frac{s}{2}\left[\PRenyi{1+s}(A)_{\xi} - \log M\right])}). \\
= & 36
\min(1, e^{-\frac{1}{2} {\cal L}_\xi(-\log M)}).
\end{align}
That is,
\begin{align}
- 2 \log \frac{    \epsilon \left(\cC(e,f,t)\right)}{36}
\ge {\cal L}_\xi(- \log M).
\end{align}
Since ${\cal L}$ is monotonically increasing, we have
\begin{align}
-{\cal L}_\xi^{-1}(- 2 \log \frac{    \epsilon \left(\cC(e,f,t)\right)}{36})
\le  \log M.
\end{align}
This concludes the proof of Theorem \ref{thm:one-shot-characterization}. 

\vspace*{0.1in}{\bf Step 4:} Now we show Lemma \ref{lemma:one-shot-achievability}.
Define the $L$-th root of unity $\zeta:=\exp(2\pi i/L)$. From $\{\ket{l}_F\}_{l=1}^L$ we construct the induced Fourier
basis measurement $\{\vert\bm{b}_F^{l'}\rangle\}_{l'=1}^L$ via
\begin{align}
  \vert\bm{b}_F^{l'}\rangle
:= \frac{1}{\sqrt{L}}\sum_{l=1}^L\zeta^{ll'}\ket{l}_F,\; l'=1,\cdots,L.
\end{align}
For each $l'=1,\cdots, L$, defined the following subnormalized pure quantum state
\begin{align}
  \vert\phi_{A,l'}\rangle
:= \langle\bm{b}_F^{l'}\vert Q_{AF}\ket{\psi_{AF}}
= \frac{1}{L}\sum_{l=1}^L\zeta^{-ll'}Q_{A,l}\vert\psi_{A,l}\rangle,
\Label{eq:phi-A-l}
\end{align}
whose norm can be calculated as
\begin{align}\Label{eq:norm}
  \langle\phi_{A,l}\vert\phi_{A,l}\rangle
= \frac{1}{L^2}\sum_{l=1}^L\langle\psi_{A,l}\vert Q_{A,l}\vert\psi_{A,l}\rangle
= \frac{1-\delta}{L},
\end{align}
where the last equality follows from the definition of $\delta$~\eqref{eq:delta}.

Fred now perform this Fourier basis
measurement on $\ket{\psi_{AF}}$. After measurement, Fred holds the classical outcome $l$ and Alice holds the
post-measurement pure state. This leads to the classical-quantum state
\begin{align}\Label{eq:yyvPKart1}
  \sigma_{AF}
:= \sum_{l=1}^L\braket{\bm{b}_F^l}{\psi}_{AF}\braket{\psi}{\bm{b}_F^l}\ox\proj{l}_F.
\end{align}
Applying the pinching lemma~\cite[Lemma 3.10]{hayashi2016quantum} to the quantum state $\vert\psi_{AF}\rangle$ and the
binary projective measurement $\{Q_{AF},\1_{AF}-Q_ {AF}\}$ gives
\begin{align}\Label{eq:yyvPKart2}
  \proj{\psi}_{AF} \leq 2Q_{AF}\proj{\psi}_{AF} Q_{AF} + 2(\1_{AF}-Q_{AF})\proj{\psi}_{AF}(\1_{AF}-Q_{AF}).
\end{align}
Substituting~\eqref{eq:yyvPKart2} into~\eqref{eq:yyvPKart1} yields the following inequality
regarding $\sigma_{AF}$:
\begin{align}
  \sigma_{AF}
&\leq \sum_{l=1}^L\langle\bm{b}_F^l\vert\left(2Q\proj{\psi} Q + 2(\1-Q)\proj{\psi}(\1-Q)\right)
        \vert\bm{b}_F^l\rangle\ox\proj{l}_F \\
&=  2\sum_{l=1}^L\proj{\phi_{A,l}}\ox\proj{l}_F
    + 2\sum_{l=1}^L\langle\bm{b}_F^l\vert(\1-Q_{AF})\proj{\psi}(\1-Q_{AF})\vert\bm{b}_F^l\rangle\ox\proj{l}_F.
    \Label{eq:MpfOTC}
\end{align}

Define the following classical-quantum channel
\begin{align}\Label{eq:WCvrUeBTtveLxo}
    g \mapsto W^g_{BF} :=
    \frac{1}{1-\delta}\sum_{l=1}^{L} U_g\proj{\phi_{A,l}} U_g^\dagger\ox\proj{l}_F,
\end{align}
with uniform probability distribution on $G$.
One can verify that such defined $W^g_{BF}$ is indeed a quantum state using~\eqref{eq:norm}.
This classical-quantum channel has the following averaged state
\begin{align}
&\int_G   \Big(\sum_l\left(\frac{1}{1-\delta}U_g\proj{\phi_{A,l}} U_g^\dagger\right)
         \ox \proj{l}_F\Big) \nu(dg)\nonumber \\
=&\; \int_G W^g_{BF} \nu(dg)\nonumber \\
\stackrel{(a)}{=}&\; \frac{1}{(1-\delta) L^2}\int_G\sum_{l=1}^{L}
\sum_{l'=1}^{L} \sum_{l''=1}^{L}
\zeta^{ l (l'-l'') }
      U_g Q^{l''} |\psi_{A,l''} \rangle \langle \psi_{A,l'}| Q^{l'} U_g^\dagger\ox\proj{l}_F \nu(dg) \nonumber \\
\stackrel{(b)}{=}&\; \frac{1}{(1-\delta) L^2}\int_G \sum_{l=1}^{L}\sum_{l'=1}^{L}
      U_g Q^{l'}\proj{\psi_{A,l'}}Q^{l'} U_g^\dagger\ox\proj{l}_F \nu(dg) \nonumber \\
\stackrel{(c)}{=}&\;  \frac{1}{(1-\delta)L^2}\sum_{l=1}^{L}\sum_{l'=1}^{L}
      Q^{l'}\cG( \proj{\psi_{A,l'}} ) Q^{l'}\ox\proj{l}_F   \nonumber\\
\stackrel{(d)}{=}&\; \frac{1}{1-\delta}\ol{\sigma}_A\ox\pi_F,
\Label{eq:averaged state}
\end{align}
where $(a)$ follows \eqref{eq:phi-A-l},
$(b)$ follows because $Q_{A,l}$ is a projection onto disjoint $\oplus_{k\in\cS_l}\cH_k$ and the irreducible
representations on $\cH_k$ are not equivalent.
The multiplicity-free condition is used in this step.
In addition, we remark that the classical-quantum channel~\eqref{eq:WCvrUeBTtveLxo}
can be obtained by Fred performs the Fourier basis measurement and Alice adopts random coding chosen uniformly from
$G$.
$(c)$ follows from the commutativity between $U_g$ and $ Q^{l'}$ and the
commutativity follows from the invariance property.
$(d)$ follows from \eqref{eq:sigma-A}.

Applying the direct part of the classical-quantum channel coding theorem~\cite[Lemma 4.6]{hayashi2016quantum}, we
conclude that there exists an encoder $(g_1,\cdots,g_M)$ and a decoder $\bm{\Gamma}\equiv\{\Gamma^m\}_{m=1}^M$ in
$BF$ as a one-way LOCC measurement from Fred to Bob such
that the decoding error is upper bounded for arbitrary $s\in[0,1]$:
\begin{subequations}\Label{eq:WCvrUeBTtveLxo2}
\begin{align}
\text{decoding error}:=&\; \frac{1}{M}\sum_{m=1}^M\tr\left[W^{g_m}\left(\1 - \Gamma^m\right)\right] \\
\stackrel{(a)}{\leq}&\; {4M^s}\int_G\tr\left[(W^g)^{1-s}\ol{W}^s\right]\nu(dg) \\
=&\; 4M^s\int_G
    \tr\left[\left(\sum_{l=1}^{L} \frac{L}{1-\delta}
    U_g\proj{\phi_{A,l}} U_g^\dagger\ox\frac{1}{L}\proj{l}_F\right)^{1-s}
      \left(\frac{1}{1-\delta}\ol{\sigma}_A\ox\pi_F\right)^s\right] \nu(dg)\\
=&\;  4M^s \int_G
      \tr\left[\sum_l\left(\frac{L}{1-\delta}U_g\proj{\psi_{A,l}} U_g^\dagger\right)^{1-s}
      \left(\frac{1}{1-\delta}\ol{\sigma}_A\right)^s \ox \frac{1}{L}\proj{l}_F\right] \nu(dg)\\
\stackrel{(b)}{=}&\; 4M^s \int_G
      \tr\left[\sum_l\left(\frac{L}{1-\delta}U_g\proj{\phi_{A,l}} U_g^\dagger\right)
        \left(\frac{1}{1-\delta}\ol{\sigma}_A\right)^s \ox \frac{1}{L}\proj{l}_F\right] \nu(dg)\\
=&\; 4M^s \tr\left[ \int_G        \Big(\sum_l\left(\frac{1}{1-\delta}U_g\proj{\phi_{A,l}} U_g^\dagger\right)
         \ox \proj{l}_F\Big) \left( \Big(\frac{1}{1-\delta}\ol{\sigma}_A\Big)^s
        \otimes I_F \right)\nu(dg) \right]\\
\stackrel{(c)}{=}&\; \frac{4M^s}{(1-\delta)}\tr\left[ \Big(\sum_l
        \ol{\sigma}_A \ox \frac{1}{L}\proj{l}_F\Big)  \left( \Big(\frac{1}{1-\delta}\ol{\sigma}_A\Big)^s
        \otimes I_F \right)\right] \\
=&\; \frac{4M^s}{(1-\delta)^{1+s}}\tr\left[\sum_l
        \ol{\sigma}_A^{1+s} \ox \frac{1}{L}\proj{l}_F\right] \\
=&\;  \frac{4M^s}{(1-\delta)^{1+s}}\tr\left[\ol{\sigma}_A^{1+s}\right],
\end{align}
\end{subequations}
where $(a)$ follows from~\cite[Lemma 4.1]{hayashi2016quantum}, $(b)$ follows from~\eqref{eq:norm} implying that
$\frac{L} {1-\delta}U_g\proj{\phi}_{A,l} U_g^\dagger$ is a normalized pure state,
and $(c)$ follows from \eqref{eq:averaged state}.

\vspace*{0.1in}

Now we are ready to give a concrete protocol $\cC$ achieving the decoding error concluded
in~\eqref{eq:nvuveMhromojcIJt}. In this protocol, Fred adopts the Fourier basis measurement
$\{\vert\bm{b}_F^l\rangle\}_{l=1}^L$, Alice adopts the encoding $(g_1,\cdots,g_M)$, and Bob adopts the decoder
$\bm{\Gamma}$ originally designed for the classical-quantum channel $g\mapsto W_{AF}^g$. This is an communication protocol with one-way LOCC decoder
for $\ket{\psi_{AF}}$ since $\bm{\Gamma}$ is essentially an one-way LOCC
decoder from Fred to Bob. Thanks to the above analysis, we can evaluate the decoding error of $\cC$ as follows:
\begin{subequations}
\begin{align}
e(\cC) :=&\; \frac{1}{M}\sum_{m=1}^M \tr\left[\left(U_{g_m}\sigma_{AF}U^\dagger_{g_m}\right)
            \left(\1-\Gamma^m\right)\right] \\
\stackrel{(a)}{\leq}&\; \frac{1}{M}\sum_{m=1}^M\tr\left[U_{g_m}
      \left(2\sum_{l=1}^L\proj{\phi_{A,l}}\ox\proj{l}_F\right)
      U_{g_m}^\dagger\left(\1 - \Gamma^m\right)\right] \\
&\quad+ \frac{1}{M}\sum_{m=1}^M\tr\left[U_{g_m}
      \left(2\sum_{l=1}^L\langle\bm{b}^l\vert(\1-Q)\proj{\psi}(\1-Q)\vert\bm{b}^l\rangle\ox\proj{l}_F\right)
      U_{g_m}^\dagger\left(\1 - \Gamma^m\right)\right] \\
\stackrel{(b)}{=}&\;    \frac{2(1-\delta)}{M}\sum_{m=1}^M\tr\left[W^{g_m}\left(\1 - \Gamma^m\right)\right] \\
 &\quad+ \frac{1}{M}\sum_{m=1}^M\tr\left[U_{g_m}
      \left(2\sum_{l=1}^L\langle\bm{b}^l\vert(\1-Q)\proj{\psi}(\1-Q)\vert\bm{b}^l\rangle\ox\proj{l}_F\right)
      U_{g_m}^\dagger\left(\1 - \Gamma^m\right)\right] \\
\stackrel{(c)}{\leq}&\; \frac{8M^s}{(1-\delta)^s}\tr\left[\ol{\sigma}_A^{1+s}\right]
      + \frac{2}{M}\sum_{m=1}^M\tr\left[
        \sum_{l=1}^L\langle\bm{b}^l\vert(\1-Q)\proj{\psi}(\1-Q)\vert\bm{b}^l\rangle\ox\proj{l}_F\right] \\
\stackrel{(d)}{\leq}&\; \frac{8M^s}{(1-\delta)^s}\tr\left[\ol{\sigma}_A^{1+s}\right] + 2\delta,
\end{align}
\end{subequations}
where $(a)$ follows from~\eqref{eq:MpfOTC}, $(b)$ follows from the definition of
$W^g_{BF}$~\eqref{eq:WCvrUeBTtveLxo}, $(c)$ follows from~\eqref{eq:WCvrUeBTtveLxo2} and $\Gamma^m\geq0$,
and $(d)$ follows from that $\{\vert\bm{b}_F^l\rangle\}$ forms an orthonormal basis of $F$ and the definition
of $\delta$~\eqref{eq:delta}. We are done.
\end{proof}

\begin{remark}
Actually, our achievability proof (Theorem~\ref{thm:one-shot-characterization} and
Lemma~\ref{lemma:one-shot-achievability}) is inspired by the proof for~\cite [Theorem 1]{smolin2005entanglement}, which
we refer to as {SVW}. In the following, we compare in detail the similarity and uniqueness between our proof and {SVW}.
In general, both proofs are composed of two parts. In the first part, we apply a surjective linear hash function. This
mimics choosing the typical subspaces in {SVW}. In the second part, Fred performs an measurement on the Fourier basis.
This is the one-shot correspondence to the Fourier basis measurement in {SVW}. However, our task is different from the
task that is considered in {SVW}. As so, we need to invent different operations for both the sender and the receiver,
and manage a different evaluation method for the decoding error probability. On the other hand, {SVW} does not assume
the uniform distribution on the codewords \emph{a priori}. However, we do have this assumption due to the special
structure of the task under consideration. This uniformity assumption renders a more complicated proof so that it
becomes more difficult to derive an exponential upper bound.
\end{remark}

\subsubsection{Asymptotic direct part}

Based on the one-shot direct part in Theorem~\ref{thm:one-shot-characterization},
we are able to show the following coding theorem,
which concludes the second inequality of Eq.~\eqref{eq:cSiXYsXiDO}.

\begin{theorem}[Direct part]\Label{thm:asymptotic-direct}
Let $\ket{\Psi}_{AF}$ be a bipartite pure quantum state.
When the (projective) unitary representation $U$ on $\cH_A$ satisfies
Assumption~\ref{assp:multiplicity-free} (the multiplicity-free condition), it holds that
\begin{align}
  \Shannon\left(\cG(\Psi_A)\right) \leq C_{\EncG,\DecOne}(\Psi_{AF})\Label{LS1}.
\end{align}
\end{theorem}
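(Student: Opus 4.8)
The plan is to deduce the asymptotic bound~\eqref{LS1} from the one-shot direct part, Theorem~\ref{thm:one-shot-characterization}, by applying it to the tensor powers $\Psi_{AF}^{\ox n}$ and then analyzing the Legendre exponent ${\cal L}$ in the limit $n\to\infty$. First I would observe that if $U=\bigoplus_{\lambda\in S}U_\lambda$ is multiplicity-free on $\cH_A$, then $U^{\ox n}$ is a direct sum of the pairwise inequivalent irreducible representations $U_{\lambda_1}\ox\cdots\ox U_{\lambda_n}$ of $G^{\times n}$, hence $U^{\ox n}$ on $\cH_A^{\ox n}$ is again multiplicity-free, so Theorem~\ref{thm:one-shot-characterization} applies to $\Psi_{AF}^{\ox n}$ with the relevant auxiliary state being $\xi_{AF}^{\ox n}$, since $\xi_{A^nF^n}=(\cG_A^{\ox n}\ox\id_{F^n})(\Psi_{AF}^{\ox n})=\xi_{AF}^{\ox n}$. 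All Petz--Rényi quantities entering~\eqref{Le1} are additive under tensor products, so $\PRenyi{\alpha}(\cdot)_{\xi^{\ox n}}=n\,\PRenyi{\alpha}(\cdot)_\xi$ for $\PRenyi{1+s}(AF)$, $\PRenyi{1-s}(F\vert A)$, $\PRenyi{1+s}(A)$, and therefore ${\cal L}_{\xi^{\ox n}}(nR)=n\,{\cal L}_\xi(R)$.

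The key step is to show that the single-letter exponent ${\cal L}_\xi$ is strictly positive at every rate below $\Shannon(\cG(\Psi_A))$. Fix $\delta>0$, put $R_n:=n\big(\Shannon(\cG(\Psi_A))-\delta\big)$, and write, using the scaling above,
\begin{align}
{\cal L}_{\xi^{\ox n}}(-R_n)
= n\max_{0\le s\le 1}\Big\{-s\big(\Shannon(\cG(\Psi_A))-\delta\big)
  + \min\big(s[\PRenyi{1+s}(AF)_\xi-\PRenyi{1-s}(F\vert A)_\xi],\; s\PRenyi{1+s}(A)_\xi\big)\Big\}.
\end{align}
Denote the bracketed function by $h(s)$; then $h(0)=0$, and its right derivative at $0$ follows from $\lim_{s\downarrow0}\PRenyi{1+s}(AF)_\xi=\Shannon(AF)_\xi$, $\lim_{s\downarrow0}\PRenyi{1-s}(F\vert A)_\xi=\Shannon(F\vert A)_\xi$, $\lim_{s\downarrow0}\PRenyi{1+s}(A)_\xi=\Shannon(A)_\xi$. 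Since $\Shannon(AF)_\xi-\Shannon(F\vert A)_\xi=\Shannon(A)_\xi$ by the chain rule and $\Shannon(A)_\xi=\Shannon(\cG(\Psi_A))$ because $\xi_A=\cG(\Psi_A)$, both arguments of the $\min$ have right derivative $\Shannon(\cG(\Psi_A))$ at $s=0$, so $\min(\cdot,\cdot)=\Shannon(\cG(\Psi_A))\,s+o(s)$ and $h'(0^+)=\delta>0$. Hence there is $s_0\in(0,1]$ with $h(s_0)=:c_\delta>0$, giving ${\cal L}_{\xi^{\ox n}}(-R_n)\ge n\,c_\delta$.

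To close, I would set $\varepsilon_n:=e^{-n c_\delta}\in(0,1)$, so $-\log\varepsilon_n=nc_\delta\le{\cal L}_{\xi^{\ox n}}(-R_n)$; since ${\cal L}$ is monotonically increasing this rearranges to $-{\cal L}_{\xi^{\ox n}}^{-1}(-\log\varepsilon_n)\ge R_n$, and Theorem~\ref{thm:one-shot-characterization} applied to $\Psi_{AF}^{\ox n}$ yields $C^{\varepsilon_n}_{\EncG,\DecOne}(\Psi_{AF}^{\ox n})\ge R_n=n\big(\Shannon(\cG(\Psi_A))-\delta\big)$ with $\varepsilon_n\to0$. For any fixed $\varepsilon>0$, monotonicity of $\varepsilon\mapsto C^\varepsilon$ gives $C^\varepsilon_{\EncG,\DecOne}(\Psi_{AF}^{\ox n})\ge C^{\varepsilon_n}_{\EncG,\DecOne}(\Psi_{AF}^{\ox n})$ once $\varepsilon_n<\varepsilon$, hence $\limsup_n\frac1n C^\varepsilon_{\EncG,\DecOne}(\Psi_{AF}^{\ox n})\ge\Shannon(\cG(\Psi_A))-\delta$; letting $\delta\downarrow0$ and taking the infimum over $\varepsilon>0$ gives~\eqref{LS1} by Definition~\ref{def:dense coding capacity}. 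The real obstacle is already discharged inside Theorem~\ref{thm:one-shot-characterization}; the only delicate point remaining is the first-order expansion of the $\min$ of the two convex functions at $s=0$, i.e.\ verifying that both branches have slope $\Shannon(\cG(\Psi_A))$, which is exactly the chain rule $\Shannon(AF)_\xi=\Shannon(A)_\xi+\Shannon(F\vert A)_\xi$ together with $\xi_A=\cG(\Psi_A)$.
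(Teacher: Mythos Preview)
Your proposal is correct and follows the same overall plan as the paper: apply the one-shot Theorem~\ref{thm:one-shot-characterization} to the tensor powers $\Psi_{AF}^{\ox n}$ and analyse the Legendre exponent ${\cal L}$ as $n\to\infty$. The execution, however, is genuinely different. The paper does not invoke the exact scaling ${\cal L}_{\xi^{\ox n}}(nR)=n\,{\cal L}_\xi(R)$; instead it Taylor-expands the three R\'enyi quantities to \emph{second} order in $s$, obtains the quadratic approximation $s(-n\Shannon(A)_\xi+\sqrt{n}\,r)+\min(\cdot)\approx \sqrt{n}\,sr-\tfrac{n}{2}V_\xi s^2$, and from there derives the refined inversion $-{\cal L}_{\xi^{\ox n}}^{-1}(-\log\epsilon)=n\Shannon(A)_\xi-\sqrt{-2nV_\xi\log\epsilon}+o(\sqrt{n})$, which immediately gives~\eqref{LS1}. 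Your route is more elementary for the capacity statement alone: you only need the first-order slope at $s=0$ (via the chain rule $\Shannon(AF)_\xi-\Shannon(F\vert A)_\xi=\Shannon(A)_\xi$ and $\xi_A=\cG(\Psi_A)$) together with the clean additivity-based scaling of ${\cal L}$, and you spell out two facts the paper leaves implicit---that multiplicity-freeness is inherited by $G^{\times n}$ on $\cH_A^{\ox n}$, and that the auxiliary state for the $n$-shot problem is exactly $\xi_{AF}^{\ox n}$. The paper's second-order analysis, in exchange, produces the $\sqrt{n}$-order expansion of the achievable rate essentially for free.
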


\begin{proof}
We focus on the functions
$-s \PRenyi{1+s}(A F)_\xi$, $s \PRenyi{1-s}(F\vert A)_\xi$, and
$-s \PRenyi{1+s}(A)_\xi$, which are convex functions for $s$.
Hence, when $s$ is close to $0$,
they are approximated to
$-s H(A F)_\xi+\frac{s^2}{2} V(A F)_\xi$,
$s H( F|A)_\xi+\frac{s^2}{2} V(F|A)_\xi$,
and $-s H(A)_\xi+\frac{s^2}{2} V(A )_\xi$.
Hence,
\begin{align}
  \min \left(s \PRenyi{1+s}(A F)_\xi -s \PRenyi{1-s}(F\vert A)_\xi, s \PRenyi{1+s}(A)_\xi \right)
\end{align}
is approximated as
$s H(A)_\xi-\frac{s^2}{2}
V_\xi$, where
$V_\xi:=\max(V(A )_\xi+V(A F)_\xi,  V(F|A)_\xi ) $.
Thus,
we have
\begin{align}
&s(-n H(A)_\xi+\sqrt{n}r)+\min \big(s \PRenyi{1+s}(A F)_{\xi^{\otimes n}} -s \PRenyi{1-s}(F\vert A)_{\xi^{\otimes n}}, s \PRenyi{1+s}(A)_{\xi^{\otimes n}} \big)
\\
=&
s (-n H(A)_\xi+\sqrt{n}r)+ s n H(A)_\xi -\frac{n s^2}{2} V_\xi
+o(n s^2)
\\
=&
\sqrt{n} s r-\frac{n s^2}{2}  V_\xi +o(n s^2)\\
=&
-\frac{n}{2} V_\xi
(s- \frac{r}{\sqrt{n}V_\xi})^2+ \frac{r^2}{2 V_\xi}+o(n s^2).
\Label{Le2}
\end{align}
Since the maximum of the above value for $s$ is realized around $s=  \frac{r}{\sqrt{n}V_\xi}$, we have
\begin{align}
\lim_{n\to\infty }
{\cal L}_{\xi^{\otimes n}}(-n H(A)_\xi+\sqrt{n}r)
= \frac{r^2}{2 V_\xi},
 \Label{Le3}
\end{align}
which implies that
\begin{align}
-{\cal L}_{\xi^{\otimes n}}^{-1}(- \log \epsilon)
=n H(A)_\xi-
\sqrt{- 2n V_\xi \log \epsilon}+o(\sqrt{n}).
\Label{Le4}
\end{align}
Combining the above result with Theorem~\ref{thm:one-shot-characterization} yields~\eqref{LS1}.
\end{proof}

\subsection{Strong converse part}\Label{sec:asymptotic-converse}

In the strong converse part, we show that $\Shannon\left(\cG(\Psi_A)\right)$ is a strong converse bound
for all the quantities mentioned in Proposition~\ref{prop:enhanced-relation2}
regardless of Assumption~\ref{assp:multiplicity-free} (the multiplicity-free condition).
This concludes the first inequality of Eq.~\eqref{eq:cSiXYsXiDO}.

\begin{theorem}[Strong converse part]\Label{thm:con}
Let $\ket{\Psi}_{AF}$ be a bipartite pure quantum state. It holds that
\begin{align}
    C_{\EncPPT,\DecPPT}^\dagger\left(\Psi_{AF}\right)
&\le \Shannon\left(\cG(\Psi_A)\right),\Label{LO1}\\
    C_{\EncP,\DecSEP}^\dagger \left(\Psi_{AF}\right)
&\le \Shannon\left(\cG(\Psi_A)\right)\Label{LO2}.
\end{align}
\end{theorem}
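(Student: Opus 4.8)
The plan is to establish both inequalities by the meta-converse method with the Petz \Renyi divergence $\PRRel{\alpha}$, $\alpha\in(1,2)$, in the spirit of Proposition~\ref{prop:one-shot-strong-converse}, and then pass to the asymptotic limit. For a code $\cC=(\{\cE^m\}_m,\{\Gamma^m_{BF}\}_m)$ the decoding success probability is $s(\cC)=\tfrac1{\vert\cC\vert}\sum_m\tr[\Gamma^m_{BF}\cE^m_{A\to A}(\Psi_{AF})]$, and the target is a one-shot bound of the form $s(\cC)\le\mathrm{poly}\cdot\exp\{\tfrac{\alpha-1}{\alpha}(\PRenyi{2-\alpha}(\cG(\Psi_A))-\log\vert\cC\vert)\}$. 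Since $\PRenyi{2-\alpha}$ is continuous and monotone in $\alpha$ and $\lim_{\alpha\to1}\PRenyi{2-\alpha}(\cG(\Psi_A))=\Shannon(\cG(\Psi_A))$, applying this bound to $\Psi_{AF}^{\ox n}$ and letting $n\to\infty$ then $\alpha\to1$ yields $C^\dagger\le\Shannon(\cG(\Psi_A))$ by the standard argument of~\cite{nagaoka2001strong}. Via the hierarchy in Proposition~\ref{prop:enhanced-relation2}, the bound for $(\EncP,\DecSEP)$ automatically covers all encoder--decoder pairs with $\fD\in\{\DecOne,\DecLOCC,\DecSEP\}$ and $\fE\subseteq\EncP$, while the bound for $(\EncPPT,\DecPPT)$ covers the remaining required pairs.

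The first and most delicate step is to exploit the locality of the decoder to reduce the bipartite measurement. For the separable case I would write each separable POVM element $\Gamma^m_{BF}$ as a sum of products $\sum_j A^{m}_{j,B}\ox\proj{v^{m}_{j}}_F$ with rank-one factors on $F$; using the purity of $\Psi_{AF}$ this turns $s(\cC)$ into an expression over the pure-state decomposition $\Psi_A=\sum_x r_x\psi^x_A$ of $\Psi_A$ furnished by the induced $F$-side POVM and the Schr\"odinger--HJW theorem, together with residual operators on $B$. Crucially, the conditional operators that then appear are $\cE^m(\psi^x_A)$, which are genuine (positive) states even when $\cE^m\in\EncP$ is not completely positive --- this is exactly what makes the subsequent use of $\PRRel{\alpha}$ legitimate. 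For the PPT case I would instead use the identity $\tr[\Gamma^m_{BF}\cE^m(\Psi_{AF})]=\tr[(\Gamma^m_{BF})^{T_F}(\cE^m(\Psi_{AF}))^{T_F}]$, where $(\cdot)^{T_F}$ is the partial transpose on $F$: the family $\{(\Gamma^m_{BF})^{T_F}\}_m$ is again a POVM (positivity from the PPT decoder), and the PPT structure of the encoders renders $(\cE^m(\Psi_{AF}))^{T_F}$ a genuine state, so the argument runs on the transposed objects. In either case I would then set up the test operator carrying the decoding, a reference state $\sigma$ whose quantum part is built from the twirled states $\cG(\psi^x_A)$ (so that the $\sigma$-probability of the test collapses to $1/\vert\cC\vert$ up to the necessary normalization), and apply the data-processing inequality for $\PRRel{\alpha}$ to the induced measurement. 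Evaluating $\PRRel{\alpha}(\rho\|\sigma)$ then follows the chain used in Proposition~\ref{prop:one-shot-strong-converse}: insert the $G$-twirl $\int_G\cU_g(\cdot)\cU_g^\dagger$ by unitary invariance of $\PRRel{\alpha}$; use $t^\alpha\le t$ for $t\in[0,1]$ (valid since $\cU_g\circ\cE^m$ sends states to states); use the symmetry-preserving identity $\cG\circ\cE^m=\cG$; and finish with the operator concavity of $t\mapsto t^{2-\alpha}$ on $[0,\infty)$ for $\alpha\in(1,2)$ together with Jensen's inequality, landing on $\tr[\cG(\Psi_A)^{2-\alpha}]=e^{(\alpha-1)\PRenyi{2-\alpha}(\cG(\Psi_A))}$.

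The main obstacle is to carry out the reduction of the bipartite decoder in a way that produces the single-system quantity $\Shannon(\cG(\Psi_A))$ rather than the strictly larger global-decoder quantity $\Shannon((\cG_A\ox\id_F)(\Psi_{AF}))$, and to do so without picking up a spurious dimension factor: the crude ``measure $F$ first'' reduction of a separable measurement introduces a factor $d_B=\dim\cH_B$ through the over-completeness of the $F$-marginal of the separable POVM, which would destroy the asymptotic rate, so the reference state $\sigma$ --- and the bookkeeping of the residual $B$-operators --- must be chosen so that the separability, respectively the PPT property, of the decoder is used at full strength. Pinning down precisely which structural feature of $\EncPPT$ and $\EncP$ (the identity $\cG\circ\cE^m=\cG$, complete positivity of $\cE^m$ on pure inputs, the PPT condition on the Choi operator) is invoked at each step is the heart of the matter; once this is in place the manipulations of $\PRRel{\alpha}$ are routine and parallel to those in Proposition~\ref{prop:one-shot-strong-converse}.
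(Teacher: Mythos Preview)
Your proposal correctly identifies the central difficulty but does not resolve it, and the two reductions you sketch both fail for the reason you yourself flag.

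For $(\EncP,\DecSEP)$: writing $\Gamma^m_{BF}=\sum_j A^m_{j,B}\ox\proj{v^m_j}_F$ and pulling the $F$-factors through the pure state $\Psi_{AF}$ does produce pure inputs $\psi^{m,j}_A$ on which $\cE^m$ acts positively, but the residual family $\{A^m_{j,B}\}_{m,j}$ is not a POVM on $B$, nor is there a $B$-independent $F$-POVM you can split off. Any normalisation that turns it into one costs exactly the factor $d_B$ you mention, and you give no mechanism to avoid it; the meta-converse then yields $\PRenyi{2-\alpha}(\cG(\Psi_A))+\log d_B$, which is useless asymptotically. The chain of manipulations from Proposition~\ref{prop:one-shot-strong-converse} works there because Fred's measurement genuinely precedes the decoder on $B$; for a separable POVM there is no such temporal ordering, and you have not shown how to manufacture one.

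For $(\EncPPT,\DecPPT)$: your observation that $\{(\Gamma^m)^{T_F}\}_m$ is a POVM and $(\cE^m(\Psi_{AF}))^{T_F}$ is a state (via $\cE^m\in\EncPPT\Rightarrow\tau\circ\cE^m$ CP) is correct, but after this move you face a \emph{global} POVM against \emph{global} states on $BF$ with no remaining locality. The Petz meta-converse applied here lands on $\Shannon((\cG_A\ox\id_F)(\Psi_{AF}))$, the global-decoder quantity of Proposition~\ref{prop:Global decodings}, not on $\Shannon(\cG(\Psi_A))$.

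The paper's proof takes an entirely different route and bypasses the \Renyi-divergence machinery. The key structural input is the rank-one partial-transpose identity $|\tau_F(X)|=\sqrt{\tr_F X}\ox\sqrt{\tr_A X}$ (Eq.~\eqref{M6}), applied to $(P_A\ox P_F)\Psi_{AF}^{\ox n}(P_A\ox P_F)$ with suitably chosen typical projectors $P_A$ (commuting with $G$ and with $\cG(\Psi_A)^{\ox n}$) and $P_F$. Since $(\cE^m)^*(\Gamma^m)$ is PPT in both scenarios (for $(\EncPPT,\DecPPT)$ directly; for $(\EncP,\DecSEP)$ because the dual of a positive map preserves separability, hence PPT), one gets $\tr[(\cE^m)^*(\Gamma^m)\cdot(\cdots)]\le\tr[\tau_F((\cE^m)^*(\Gamma^m))\cdot|\tau_F(\cdots)|]$, and the factorised bound $|\tau_F(\cdots)|\le e^{-n(\Shannon(\Psi_A)-\epsilon)}P_A\ox P_F$ together with $\cG$-invariance of $P_A$ and $\cG\circ\cE^m=\cG$ collapses the estimate to a pure dimension count, in which the $\Shannon(\Psi_A)$ contributions cancel and only $\Shannon(\cG(\Psi_A))$ survives. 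This is precisely the step your approach lacks.
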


\begin{proof}
The key tool to prove Eqs.~\eqref{LO1} and~\eqref{LO2} is the following inequality,
which is shown in \cite[(8.217)]{hayashi2016quantum}.
We denote the transpose operation on $F$ by $\tau_F$.
For a any bipartite positive semidefinite rank-one operator $X$ on $\cH_{AF}$, we have the relation
\begin{align}
|\tau_F(X)|= \sqrt{ \tr_F X} \otimes \sqrt{\tr_A X}.\Label{M6}
\end{align}
In fact, the reference \cite[(8.217)]{hayashi2016quantum} shows \eqref{M6} by using
the transpose on a specific basis.
While the map $\tau_F$ depends on the choice of the basis,
$|\tau_F(X)|$ does not depend on it as follows.
Consider the map $X \mapsto U^\dagger \tau_F( U X U^\dagger )U$ by using a unitary
on $\cH_F$.
Then, we have
\begin{align}
&|U^\dagger \tau_F( U X U^\dagger )U|^2
=
U^\dagger \tau_F( U X U^\dagger )U U^\dagger \tau_F( U X U^\dagger )U
=
U^\dagger \tau_F( U X U^\dagger ) \tau_F( U X U^\dagger )U \\
= &
U^\dagger |\tau_F( U X U^\dagger ) |^2 U
=
U^\dagger \sqrt{ \tr_F X} \otimes \sqrt{\tr_A U X U^\dagger} U \\
=&
\sqrt{ \tr_F X} \otimes (U^\dagger \sqrt{ U (\tr_A X) U^\dagger} U)
=\sqrt{ \tr_F X} \otimes \sqrt{\tr_A X}.
\end{align}
Hence, $|\tau_F(X)|$ does not depend on the choice of basis.

We now show Eq.~\eqref{LO1}. For any $\epsilon>0$, we choose a sufficiently large integer $N$ such that
any $n \ge N$ satisfies the following two conditions (i) and (ii):
\begin{enumerate}[(i)]
  \item There exists a projection $P_A$ such that
          \begin{align}
          [P_A, U_{g}]&=0 \hbox{ for }g \in G^n \Label{J1}\\
          [P_A,\cG (\Psi_A)^{\ox n}] &=0 \Label{J2}\\
          \tr P_A &\le e^{n (\Shannon(\cG (\Psi_A))+\epsilon)}, \Label{J3}\\
          \tr (I-P_A) \cG (\Psi_A)^{\ox n} &\le \epsilon ,\Label{J4} \\
          \cG(P_A ) &=P_A.\Label{J11}
          \end{align}
  \item There exists a projection $P_F$ such that
          \begin{align}
          [P_F,\Psi_F^{\ox n}] &=0,\Label{J5}\\
          \tr P_F &\le e^{n (\Shannon(\Psi_A)+\epsilon)},\Label{J6}\\
          \|P_F \Psi_F^{\ox n} P_F\| &\le e^{-n (\Shannon(\Psi_A)-\epsilon)},\Label{J7}\\
          \tr (I-P_F) \Psi_F^{\ox n} &\le \epsilon .\Label{J8}
          \end{align}
\end{enumerate}
The conditions \eqref{J1} and \eqref{J4} imply that
\begin{align}
\tr (I-P_A) \Psi_A^{\ox n} \le \epsilon .\Label{J9}
\end{align}
Hence, $[P_A,P_F]=0$ and
\begin{align}
\tr (I-P_A\ox P_F) \Psi_{AF}^{\ox n} \le 2 \epsilon .\Label{J10}
\end{align}
Let $\cC=(\{\cE^m\}, \{\Gamma^{m}\})\in(\EncPPT,\DecPPT)$ be a code for the state $\Psi_{AF}^{\ox n}$.
Since $\Gamma^m$ is a PPT operator and $\cE^m \in\EncPPT$, $(\cE^m)^*(\Gamma^m)$ is also a PPT operator, i.e.,
\begin{align}
\tau_F ((\cE^m)^* (\Gamma^m))\ge 0\Label{J14}.
\end{align}
By applying \eqref{M6} to $(I \ox P_F )\Psi_{AF}^{\ox n}(I \ox P_F)$,
the evaluation \eqref{J7} guarantees that
\begin{align}
\|\tau_F( (P_A P_F )\Psi_{AF}^{\ox n}(P_A P_F)) \|
&= \|P_A \ox I\tau_F( (I \ox P_F )\Psi_{AF}^{\ox n}(I \ox P_F)) P_A \ox I \| \\
&\leq \|\tau_F( (I \ox P_F )\Psi_{AF}^{\ox n}(I \ox P_F)) \| \\
&\leq e^{-n (\Shannon(\Psi_A)-\epsilon)}.
\end{align}
Hence, we have
\begin{align}
|\tau_F( (P_A\ox P_F )\Psi_{AF}^{\ox n}(P_A\ox P_F)) |
\le
e^{-n (\Shannon(\Psi_A)-\epsilon)} P_A\ox P_F.\Label{J12}
\end{align}
Then, we have
\begin{align}
 &\tr \Gamma^m \cE^m(
 (P_A\ox P_F )\Psi_{AF}^{\ox n}(P_A\ox P_F) ) \\
 =
 &\tr   (\cE^m)^* (\Gamma^m )
 (P_A\ox P_F )\Psi_{AF}^{\ox n}(P_A\ox P_F)  \\
 =
 & \tr  \tau_F( (\cE^m)^* (\Gamma^m ))
\tau_F( (P_A\ox P_F )\Psi_{AF}^{\ox n}(P_A\ox P_F))  \\
\stackrel{(a)}{\leq}
 &\tr  \tau_F( (\cE^m)^* (\Gamma^m ))
|\tau_F( (P_A\ox P_F )\Psi_{AF}^{\ox n}(P_A\ox P_F)) | \\
\stackrel{(b)}{\leq}
 &\tr  \tau_F( (\cE^m)^* (\Gamma^m ))
e^{-n (\Shannon(\Psi_A)-\epsilon)} P_A\ox P_F\\
\stackrel{(c)}{=}
 & \tr   (\cE^m)^* (\tau_F(\Gamma^m ))
\cG(e^{-n (\Shannon(\Psi_A)-\epsilon)} P_A\ox P_F )\\
 =
 & \tr
 \tau_F(\Gamma^m )
  \cE^m ( \cG( e^{-n (\Shannon(\Psi_A)-\epsilon)} P_A\ox P_F ))\\
 =
 & \tr
 \tau_F(\Gamma^m ) \cG( e^{-n (\Shannon(\Psi_A)-\epsilon)} P_A\ox P_F )\\
 =
 & \tr
 \tau_F(\Gamma^m ) e^{-n (\Shannon(\Psi_A)-\epsilon)}  P_A\ox P_F\\
 =
 & e^{-n (\Shannon(\Psi_A)-\epsilon)} \tr \Gamma^m   P_A\ox P_F,\Label{J13}
\end{align}
where
$(a)$, $(b)$ and $(c)$ follow from
\eqref{J14}, \eqref{J12} and \eqref{J11}, respectively.
Hence, we have
\begin{align}
 s(\cC)
 = &\frac{1}{|{\cal M}|} \sum_{m} \tr \Gamma^m \cE^m(\Psi_{AF}^{\ox n}) \\
 \stackrel{(a)}{\leq}
 &\frac{1}{|{\cal M}|} \sum_{m} \tr \Gamma^m \cE^m(
 (P_A\ox P_F )\Psi_{AF}^{\ox n}(P_A\ox P_F) ) +2\epsilon \\
 \stackrel{(b)}{\leq}
 &\frac{1}{|{\cal M}|}
 e^{-n (\Shannon(\Psi_A)-\epsilon)}  \sum_{m} \tr \Gamma^m   P_A\ox P_F
 +2\epsilon \\
 = &\frac{1}{|{\cal M}|}
 e^{-n (\Shannon(\Psi_A)-\epsilon)}  \tr P_A \ox P_F
 +2\epsilon \\
 \stackrel{(c)}{\leq}
&\frac{1}{|{\cal M}|}
 e^{-n (\Shannon(\Psi_A)-\epsilon)}  e^{n (\Shannon(\cG (\Psi_A))+\epsilon)}
 e^{n (\Shannon(\Psi_A)+\epsilon)}
 +2\epsilon \\
 \le &\frac{1}{|{\cal M}|}
  e^{n (\Shannon(\cG (\Psi_A))+3 \epsilon)}
 +2\epsilon ,
\end{align}
where
$(a)$, $(b)$, and $(c)$ follow from the pair of \eqref{J4} and \eqref{J8},
 \eqref{J13} and the pair of \eqref{J3} and \eqref{J6},  respectively.
Thus,
\begin{align}
|{\cal M}|
\le \frac{1}{ s(\cC)- 2\epsilon } e^{n (\Shannon(\cG (\Psi_A))+3\epsilon)},\Label{J15}
 \end{align}
which implies \eqref{LO1}.

Next, we show \eqref{LO2}.
Let $\cC=(\{\cE^m\}, \{\Gamma^{m}\})\in(\EncP,\DecSEP)$ be a code for the state $\Psi_{AF}^{\ox n}$.
Since $\Gamma^m$ is a separable operator and $\cE^m\in\EncP$,
$ (\cE^m)^* (\Gamma^m)$ is also a separable operator.
Hence, $ (\cE^m)^* (\Gamma^m)$ is a PPT operator, i.e., we have \eqref{J14}.
Therefore, in the same way, we can show \eqref{J15},
which implies \eqref{LO2}.
\end{proof}

\section{Proof of Theorem~\ref{thm:asymptotic empty}}\Label{S6}

\begin{proof}
Applying the channel coding theorem to the classical-quantum channel
$g \mapsto U_g \Psi_{A}U_g^\dagger$~\cite{hayashi2016quantum},
we can easily derive the following capacity formula:
\begin{align}
C_{\EncG,\DecL}(\Psi_{AF}) = \Rel\left(\Psi_{A}\| \cG(\Psi_{A})\right).
\end{align}
Hence, to show \eqref{eq:asymptotic empty} it is sufficient to show the strong converse part
\begin{align}
C^\dagger_{\EncP,\DecL}(\Psi_{AF}) \leq \Rel\left(\Psi_{A}\| \cG(\Psi_{A})\right).
\Label{NH4}
\end{align}
In almost the same way as~\eqref{eq:enhanced-relation-3}, the strong converse argument \eqref{NH4} can be shown
by invoking the meta-converse technique originally invented in~\cite{nagaoka2001strong} and
further investigated in~\cite[Chapter 3]{hayashi2005asymptotic}.
In the following Proposition~\ref{emp-strong-converse} (which will be proved shortly),
we upper bound the success probability of any one-shot code $\cC\in(\EncP,\DecL)$
in terms of the sandwiched quantum \Renyi entropy, then the strong converse bound follows by block coding.
Notice that $\lim_{\alpha\to1}\SRRel{\alpha}\left(\Psi_A\rel\cG(\Psi_A)\right)=\Rel\left(\Psi_A\rel\cG(\Psi_A)\right)$.
What's more, $\SRRel{\alpha}$ is continuous and monotonically decreasing in $\alpha$.
Applying the standard argument outlined
in~\cite{nagaoka2001strong,sharma2013fundamental}, we obtain from Proposition~\ref{prop:one-shot-strong-converse} that
$\Rel\left(\Psi_A\rel\cG(\Psi_A)\right)$ is actually a strong converse bound.
\end{proof}

\begin{proposition}\Label{emp-strong-converse}
Any dense coding code $\cC\in(\EncP,\DecL)$ obeys the following bound for arbitrary $\alpha\in (1,\infty)$:
\begin{align}\Label{eq:emp-strong-converse}
s(\cC) \leq \exp\left\{\frac{\alpha-1}{\alpha}\left(
\SRRel{\alpha}\left(\Psi_A\rel\cG(\Psi_A)\right)
 - \log\vert\cC\vert\right)\right\},
\end{align}
where $\SRRel{\alpha}$ is the sandwiched quantum \Renyi entropy defined in~\eqref{MO2}.
\end{proposition}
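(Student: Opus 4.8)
The plan is to run the same meta-converse argument as in the proof of Proposition~\ref{prop:one-shot-strong-converse}, but with the sandwiched \Renyi divergence $\SRRel{\alpha}$ (whose relevant properties are recalled in Section~\ref{sec:Quantum entropies}) in place of the Petz version, and carried out directly on the message-carrying system $A$: since the decoder in $\DecL$ simply ignores $F$, the only relevant object is $\Psi_A=\tr_F\Psi_{AF}$. Fix a code $\cC=(\{\cE^m\}_m,\{\Gamma^m\}_m)\in(\EncP,\DecL)$ with message set $\cM$. Although each $\cE^m$ is only positive (not completely positive), $\Psi_A$ is a genuine state, so $\cE^m(\Psi_A)$ is again a density operator and the numbers $\tr[\Gamma^m\cE^m(\Psi_A)]$ are bona fide decoding probabilities. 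I would then introduce the test state and the reference state
\begin{align}
  \rho_{MA} := \frac{1}{\vert\cM\vert}\sum_m\proj{m}_M\ox\cE^m(\Psi_A),
  \qquad
  \sigma_{MA} := \pi_M\ox\cG(\Psi_A),
\end{align}
together with the operator $T:=\sum_m\proj{m}_M\ox\Gamma^m$, which obeys $0\le T\le\1$, $\tr[T\rho_{MA}]=s(\cC)$ and $\tr[T\sigma_{MA}]=1/\vert\cM\vert$ (using $\sum_m\Gamma^m=\1$ and $\tr\cG(\Psi_A)=1$).

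Applying the data-processing inequality for $\SRRel{\alpha}$ (valid for all $\alpha>1/2$, hence for our $\alpha\in(1,\infty)$) to the binary measurement channel $\{T,\1-T\}$ and then discarding the non-negative ``failure'' term of the resulting binary classical \Renyi divergence yields $s(\cC)^{\alpha}\vert\cM\vert^{\alpha-1}\le e^{(\alpha-1)\SRRel{\alpha}(\rho_{MA}\|\sigma_{MA})}$. Since $\rho_{MA}$ and $\sigma_{MA}$ are classical--quantum states with the same uniform classical marginal on $M$, a block-diagonal computation gives
\begin{align}
  e^{(\alpha-1)\SRRel{\alpha}(\rho_{MA}\|\sigma_{MA})}
= \frac{1}{\vert\cM\vert}\sum_m e^{(\alpha-1)\SRRel{\alpha}\left(\cE^m(\Psi_A)\|\cG(\Psi_A)\right)}.
\end{align}

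The decisive step is to bound every summand by $e^{(\alpha-1)\SRRel{\alpha}(\Psi_A\|\cG(\Psi_A))}$. For this I would use the defining property of $\EncP$, namely $\cE^m\circ\cG=\cG$, which gives $\cG(\Psi_A)=\cE^m(\cG(\Psi_A))$, so that
\begin{align}
  \SRRel{\alpha}\left(\cE^m(\Psi_A)\|\cG(\Psi_A)\right)
= \SRRel{\alpha}\left(\cE^m(\Psi_A)\|\cE^m(\cG(\Psi_A))\right)
\le \SRRel{\alpha}\left(\Psi_A\|\cG(\Psi_A)\right),
\end{align}
where the inequality is the data-processing inequality for $\SRRel{\alpha}$ under the trace-preserving \emph{positive} map $\cE^m$ --- the sandwiched-\Renyi counterpart of the relative-entropy monotonicity~\cite[Theorem 1]{muller2017monotonicity} already invoked in Appendix~\ref{appx:thm:enhanced-relation}. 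Substituting this back, the $m$-sum collapses and $s(\cC)^{\alpha}\vert\cM\vert^{\alpha-1}\le e^{(\alpha-1)\SRRel{\alpha}(\Psi_A\|\cG(\Psi_A))}$; taking $\alpha$-th roots and recalling $\vert\cC\vert=\vert\cM\vert$ produces exactly~\eqref{eq:emp-strong-converse}.

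I expect the one genuinely delicate point to be the validity of data processing for $\SRRel{\alpha}$ under positive --- as opposed to completely positive --- trace-preserving maps; everything else is bookkeeping. If such monotonicity is available only on a restricted range of $\alpha$, this is harmless for the application in Theorem~\ref{thm:asymptotic empty}, since there one only needs $\alpha\to1^{+}$, where $\SRRel{\alpha}(\Psi_A\|\cG(\Psi_A))\to\Rel(\Psi_A\|\cG(\Psi_A))$. As a fallback one can replace this step by the chain of manipulations used in the proof of Proposition~\ref{prop:one-shot-strong-converse} --- unitary invariance of the divergence, the identity $\cG\circ\cE^m=\cG$, the elementary bound $x^{\alpha}\le x$ for $x\in[0,1]$ when $\alpha>1$, and operator concavity of $t\mapsto t^{2-\alpha}$ for $\alpha\in(1,2)$ --- which yields the same asymptotic strong converse with a Petz-type \Renyi entropy in place of $\SRRel{\alpha}$.
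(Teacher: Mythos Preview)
Your proposal is correct and follows essentially the same route as the paper's proof: the same test/reference pair $\rho_{MA},\sigma_{MA}$, the same binary test $T$, the same block-diagonal reduction to $\SRRel{\alpha}(\cE^m(\Psi_A)\|\cG(\Psi_A))$, and the same key step $\cG(\Psi_A)=\cE^m(\cG(\Psi_A))$ followed by data processing of $\SRRel{\alpha}$ under the positive trace-preserving map $\cE^m$. The paper invokes precisely this monotonicity via~\cite[Theorem~2]{muller2017monotonicity}, so your ``delicate point'' is handled exactly as you anticipated and your fallback is unnecessary.
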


\begin{proof}
Given a code $\cC=(\{\cE^m\}_m,\{\Gamma^{\wh{m}}\}_{\wh{m}})\in(\EncP,\DecL)$,
we define the following two quantum states:
\begin{align}
    \rho_{MA} &:= \frac{1}{\vert\cM\vert}\sum_{m} \proj{m}_M\ox
    \cE^{m}(\Psi_A),\\
    \sigma_{MA} &:= \pi_M\ox \cG(\Psi_A),
\end{align}
where $\xi_{MXA}$ serves as a \emph{test state}.
The positive operator
\begin{align}
T:=  \sum_{m} \proj{m}_M\ox \Gamma^{m}
\end{align}
satisfies
\begin{align}
\tr T \rho_{MA}&=  \frac{1}{\vert\cM\vert}\sum_{m}p_{\wh{M}M}(m\vert m)
= s(\cC), \\
\tr T \sigma_{MA}
&=  \frac{1}{\vert\cM\vert} \tr\left[\sum_m \Gamma^{m}\cG(\Psi_A)\right]
\stackrel{(a)}{=} \frac{1}{\vert\cM\vert} \tr\left[\cG(\Psi_A)\right] = \frac{1}{\vert\cM\vert},
\end{align}
where $(a)$ follows from the fact that $\{\Gamma^{m}\}_{m}$ is a quantum measurement.
Applying the information processing inequality to the binary measurement
$\{ T,I-T \}$, we have
\begin{align}
s(\cC)^\alpha \cdot \left(\frac{1}{\vert\cC\vert}\right)^{1-\alpha}
+(1-s(\cC))^\alpha \cdot \left(1-\frac{1}{\vert\cC\vert}\right)^{1-\alpha}
&\leq e^{(\alpha-1)\PRRel{\alpha}\left(\rho_{MA}\rel\sigma_{MA}\right)}.
\end{align}
Thus, we have the following:
\begin{align}
 s(\cC)^\alpha \cdot \vert\cC\vert^{\alpha-1}
&= s(\cC)^\alpha \cdot \left(\frac{1}{\vert\cC\vert}\right)^{1-\alpha}
\leq s(\cC)^\alpha \cdot \left(\frac{1}{\vert\cC\vert}\right)^{1-\alpha}
+(1-s(\cC))^\alpha \cdot \left(1-\frac{1}{\vert\cC\vert}\right)^{1-\alpha} \\
&\leq e^{(\alpha-1)\PRRel{\alpha}\left(\rho_{MA}\rel\sigma_{MA}\right)}
\\
&=\frac{1}{\vert\cM\vert}\sum_{m}
e^{(\alpha-1)\SRRel{\alpha}\left(\cE^{m}(\Psi_A)\rel\cG(\Psi_A)\right)} \\
&\stackrel{(a)}{=}
\frac{1}{\vert\cM\vert}\sum_{m}
e^{(\alpha-1)\SRRel{\alpha}\left(\cE^{m}(\Psi_A)\rel\cE^{m} \circ\cG(\Psi_A)\right)} \\
&\stackrel{(b)}{\le}
\frac{1}{\vert\cM\vert}\sum_{m}
e^{(\alpha-1)\SRRel{\alpha}\left(\Psi_A\rel\cG(\Psi_A)\right)} ,
\end{align}
where $(a)$ follows from the condition $\cE^{m} \circ\cG=\cG$ for arbitrary $\cE^m\in\EncP$
(cf. the definition in Eq.~\eqref{eq:free-operations3})
and $(b)$ follows from the information processing inequality of
the sandwiched quantum \Renyi entropy for positive maps~\cite[Theorem 2]{muller2017monotonicity}.
\end{proof}

\section{Proof of Theorems~\ref{thm:w-con2} and~\ref{thm:asymptotic-direct2}}\label{appx:w-con2}

\begin{proofwithpara}[Proof of Theorem~\ref{thm:w-con2}]
Notice that Eq.~\eqref{eq:w-con2-1} can be shown in the same way as \eqref{eq:enhanced-relation-3}.

Now we show Eq.~\eqref{eq:w-con2-2}.
For a state $\rho_{AF}$, we have a trace-preserving positive operation
$\cE_F$ on $\cH_F$
such that $\rho_{AF}= \cE_F(\Psi_{AF})$, where $ \Psi_{AF}$
is a purification of $\rho_A$.
Let $\cC=(\{\cE^m\}, \{\Gamma^{m}\})\in(\EncP,\DecSEP)$ be a code for the state $\rho_{AF}^{\ox n}$.
Since $\{ \cE_F^*(\Gamma^{m})\}$ is a separable measurement, where $\cE_F^*$ is the dual map of $\cE_F$,
we define a code $\hat{\cC}=(\{\cE^m\}, \{ \cE_F^*(\Gamma^{m})\})$
in the encoder-decoder pair $\in(\EncP,\DecSEP)$ for the state $\Psi_{AF}^{\ox n}$.
Since $s(\cC)=s(\hat{\cC})$, \eqref{eq:w-con2-2} follows from \eqref{LO2}.

Next, we show \eqref{eq:w-con2-3}.
For a state $\rho_{AF}'$, we have a trace-preserving operation $\cE_F'\in \channel{F\to F}_{\rm ppt}$
such that $\rho_{AF}= \cE_F'(\Psi_{AF})$.
Let $\cC=(\{\cE^m\}, \{\Gamma^{m}\})\in(\EncPPT,\DecPPT)$ be a code for the state ${\rho_{AF}'}^{\ox n}$.
Since $\{ {\cE_F'}^*(\Gamma^{m})\}$ is a separable measurement,
we define a code $\hat{\cC}'=(\{\cE^m\}, \{ {\cE_F'}^*(\Gamma^{m})\})$
in the encoder-decoder pair $\in(\EncPPT,\DecPPT)$ for the state $\Psi_{AF}^{\ox n}$.
Since $s(\cC)=s(\hat{\cC}')$, \eqref{eq:w-con2-3} follows from \eqref{LO1}.
\end{proofwithpara}

\vspace*{0.1in}

\begin{proofwithpara}[Proof of Theorem~\ref{thm:asymptotic-direct2}]
First, we show Eq.~\eqref{LS12}.
Let $\cC=(\{\cE^m\},\{ \Lambda^x\},\{\Gamma^{\wh{m}\vert x}\})$
be a code in the encoder-decoder pair $(\EncG,\DecOne)$ for the state $\Psi_{AF}^{\ox n}$.
Since the operation $\cE_F$ is a trace-preserving positive operation,
the operation ${\cE_F}^*$ is a unit-preserving positive operation.
Since $\{ {\cE_F}^*(\Lambda^x)\}$ is a POVM on $\cH_F^{\ox n}$,
we define a code $\hat{\cC}=
(\{\cE^m\},\{ {\cE_F}^*(\Lambda^x)\},\{\Gamma^{\wh{m}\vert x}\})$
in the encoder-decoder pair $(\EncG,\DecOne)$ for the state $\rho_{AF}^{\ox n}$.
Since $s(\cC)=s(\hat{\cC})$, \eqref{LS12} follows from \eqref{LS1}.

Next, we show Eq.~\eqref{LS13}.
Let $\cC=(\{{\cE^m}'\},\{ {\Lambda^x}'\},\{{\Gamma^{\wh{m}\vert x}}'\})$
be a code in the encoder-decoder pair $(\EncG,\DecOne)$ for the state ${\Psi_{AF}'}^{\ox n}$.
Since $\{ {{\cE_F'}}^*(\Lambda^x)\}$ is a POVM on $\cH_F^{\ox n}$,
we define a code $\hat{\cC}'=
(\{{\cE^m}'\},\{ {{\cE_F'}}^*({\Lambda^x}')\},
\{{\Gamma^{\wh{m}\vert x}}'\})$
in the encoder-decoder pair $(\EncG,\DecOne)$ for the state
${\rho_{AF}'}^{\ox n}$.
Since $s(\cC)=s(\hat{\cC}')$, \eqref{LS13} follows from \eqref{LS1}.
\end{proofwithpara}

\end{document}